\documentclass[11pt]{amsart}

\usepackage{graphicx}
\usepackage{framed}
\IfFileExists{ulem.sty}{\usepackage{ulem}}{}
\usepackage{pgf,tikz,pgfplots}
\usepackage{float}
\usepackage{mathrsfs}
\usepackage{mathtools}
\usepackage{mathpple}
\usepackage[all,cmtip]{xy}
\usepackage{amsmath}
\usepackage{tikz-cd}
\IfFileExists{mathdots.sty}{\usepackage{mathdots}}{}
\IfFileExists{yhmath.sty}{\usepackage{yhmath}}{}
\usepackage{cancel}
\usepackage{color}
\definecolor{deepgreen}{rgb}{0.0,0.35,0.0}
\IfFileExists{siunitx.sty}{\usepackage{siunitx}}{}
\usepackage{array}
\usepackage{multirow}
\usepackage{amssymb}
\usepackage{gensymb}
\usepackage{tabularx}
\usepackage{booktabs}
\usepackage{makecell}
\usetikzlibrary{fadings}
\usetikzlibrary{patterns}
\usetikzlibrary{shapes}
\usetikzlibrary{arrows.meta,calc,decorations.markings}
\usetikzlibrary{matrix,arrows,decorations.pathmorphing}
\usepackage{enumerate}
\usepackage{enumitem}
\IfFileExists{extarrows.sty}{\usepackage{extarrows}}{}
\usepackage{appendix}
\usepackage[bookmarks=true,bookmarksopen=true,hidelinks]{hyperref}

\numberwithin{equation}{section}
\allowdisplaybreaks[2]

\newtheorem{theorem}{Theorem}[section]
\newtheorem{proposition}[theorem]{Proposition}
\newtheorem{lemma}[theorem]{Lemma}
\newtheorem{corollary}[theorem]{Corollary}

\newtheorem*{theoremA}{Theorem A}
\newtheorem*{theoremB}{Theorem B}
\newtheorem*{theoremC}{Theorem C}
\theoremstyle{definition}
\newtheorem{definition}[theorem]{Definition}
\newtheorem{example}[theorem]{Example}
\theoremstyle{remark}
\newtheorem{remark}[theorem]{Remark}

\newcommand{\C}{\mathbb C}
\newcommand{\R}{\mathbb R}
\newcommand{\Z}{\mathbb Z}

\newcommand{\Disc}{\operatorname{Disc}}

\newcommand{\Res}{\operatorname*{Res}}
\newcommand{\sech}{\operatorname{sech}}

\newcommand{\dd}{\,\mathrm{d}}

\newcommand{\CritVal}{\operatorname{CritVal}}
\newcommand{\Sym}{\operatorname{Sym}}

\newcommand{\re}{\operatorname{Re}}
\newcommand{\im}{\operatorname{Im}}

\setlist[enumerate]{itemsep=3pt,topsep=5pt}

\title{Resurgence and Complex Geodesics for the Heisenberg Heat Kernel}
\author{Xinxing Tang}
\thanks{X.~Tang: Beijing Institute of Mathematical Sciences and Applications,
Beijing, China.
Email: \texttt{tangxinxing@bimsa.cn}.}
\date{}
\hypersetup{pdftitle={Resurgence and Complex Geodesics for the Heisenberg Heat Kernel}}
\keywords{Heisenberg heat kernel; complex geodesics; resurgence; Borel singularities; Stokes operators; alien derivatives.}
\begin{document}

\begin{abstract}
We study complex geodesic actions and heat-kernel resurgence on isotropic Heisenberg groups. For real endpoints with $r>0,y\ne0$, the nonzero finite singularities reached by analytic continuation of the minimizing-saddle Borel germ are exactly the finite complex action differences. For nonresonant endpoints, we construct row-finite directional Stokes operators and compute the positive-ray Stokes matrix and its logarithm explicitly; the primitive alien support can be strictly smaller than the full Borel singular support.

Near the vertical axis, we obtain uniform two-term large-order asymptotics from the nearest Borel singularity. Cubic resonances admit uniform $A_2$ confluence formulas. We also identify the combination of resurgent sectors selected by the original Fourier contour. Positive Borel summation is exact at horizontal endpoints. At nonzero isotropic complex endpoints, no finite normal trajectories exist, but the signed vertical action lattice is the complete limiting spectrum. The action and Stokes structures are independent of the dimension on isotropic $H^n$.
\end{abstract}

\maketitle

\tableofcontents

\section{Introduction}\label{sec:introduction}

Minimizing trajectories govern the short-time heat-kernel expansion, while its Borel singularities may also detect nonminimizing complex geodesic actions. On the Heisenberg group, Fourier transformation in the central variable reduces the heat kernel to a one-dimensional exponential integral whose critical values encode these actions. We study both the continuation of the resulting Borel germ and the reconstruction of the physical Fourier contour.

\subsection{Geometric question and Heisenberg model}

Let $(M,g)$ be a real-analytic Riemannian manifold with a suitable complexification $(M_{\C},g_{\C})$, and suppose that $p_0,p_1$ are joined by a unique nonconjugate minimizing geodesic. The heat kernel has the short-time expansion
\[
K(p_0,p_1;t)\sim(2\pi t)^{-\dim M/2}e^{-A_0/t}\sum_{n\ge0}c_nt^n,
\qquad A_0=\frac12\operatorname{dist}(p_0,p_1)^2.
\]
Kontsevich \cite{Kontsevich2020} proposed that the resurgence of the series after removal of the minimizing exponential should reflect the actions
\[
A_\gamma=\frac12\int_0^1g_{\C}(\dot\gamma,\dot\gamma)\,\dd s
\]
of complex geodesics $\gamma$ joining the same endpoints; see also \cite{KS2022,KS2024,LLTheat}. The expected Borel singularities are the action differences
\[A_\gamma-A_0.\]

For finite-dimensional exponential integrals, the relation between saddle geometry, vanishing cycles, Stokes phenomena and resurgence is classical; see \cite{BerryHowls,DelabaerePham,DelabaereHowls,KS2024,Pham1983}. Our previous work \cite{LLTalien,LLTheat} concerns Picard--Lefschetz theory, alien calculus, and heat-kernel resurgence. Here the main analytic problem is to continue the inverse branches of a nonproper meromorphic phase. A critical value is singular for a continued Borel germ only if the transported zero-cycle has a nonzero local component.

\vskip 0.1cm
We study these questions for the Heisenberg sub-Laplacian
\[
X_1=\partial_{x_1}+x_2\partial_y,\qquad
X_2=\partial_{x_2}-x_1\partial_y,\qquad
\mathcal L_H=\frac12(X_1^2+X_2^2).
\]
The central coordinate records signed stochastic area \cite{BBBC}. Fourier transformation in this coordinate gives a magnetic Mehler kernel and reduces the problem to a single complex phase. The classical Fourier representation is due to Gaveau and Hulanicki \cite{Gaveau,Hulanicki}, while its relation with complex Hamiltonian mechanics was developed by Beals--Gaveau--Greiner \cite{BGG}. Off-cut-locus asymptotics, conjugacy, and the near-axis Bessel transition were also established in \cite{BBN,BenArous,CL,Li2007,LZ}.

\vskip 0.1cm
We determine the complex action spectrum, Borel singularities, Stokes structure, and large-order asymptotics of the Heisenberg heat kernel, and relate the resurgent sectors to the original Fourier contour. We also treat horizontal and isotropic complex endpoints.

For a real endpoint, set $a=r^2/2=(x_1^2+x_2^2)/2$. The Heisenberg heat kernel based at the identity is given by
\begin{equation}\label{eq:hfourier}
p_t(r,y)=\frac1{(2\pi t)^2}\int_{\R}e^{-f(\tau)/t}V(\tau)\,\dd\tau,
\end{equation}
where
\[f(\tau)=-iy\tau+a\tau\coth\tau,\qquad  V(\tau)=\frac{\tau}{\sinh\tau}.\]
Under the change of variable $\tau=iq$, the phase and amplitude become
\[
F(q)=f(iq)=yq+aq\cot q,\qquad
W(q)=V(iq)=\frac{q}{\sin q}.\]

We write
\[
\Sigma_H^{\rm fin}(r,y)
\]
for the action spectrum of finite complex normal geodesics reaching the fixed endpoint. For $r,y>0$, we denote the minimal action by $A_1$. The positive critical values are locally finite. Away from a discrete set of parameters they are Morse; when
\[
C:=y/a=\theta_j,\qquad\text{where }~ \tan\theta_j=\theta_j,
\]
two adjacent Morse critical points coalesce into a cubic critical point. 

By the symmetry $y\mapsto -y$, the case $y<0$ reduces to $y>0$. We distinguish the following real endpoint regimes:
\[
\left\{
\begin{array}{ll}
r>0,\ y>0 &
\left\{
\begin{array}{ll}
C\neq \theta_j & \text{nonresonant},\\
C=\theta_j & \text{resonant},
\end{array}
\right.\\[1.2em]
r>0,\ y=0 & \text{horizontal boundary},\\[0.3em]
r=0,\ y>0 & \text{vertical boundary}.
\end{array}
\right.
\]
We also consider the isotropic complex endpoints:
\[
x_*=(x_{*1},x_{*2})\in\C^2\setminus\{0\},\qquad \sigma_*:=x_{*1}^2+x_{*2}^2=0,\qquad y>0.
\]
These endpoints have the same quadratic invariant $\sigma$ as a vertical endpoint, but admit no finite normal trajectories. The heat kernel still extends analytically, and the vertical action lattice remains as the limiting spectrum.

\subsection{Main results}
The oriented Fourier contour determines two local inverse branches at the minimizing saddle. With $\xi=X-A_1$ in the action variable, their pushed density has the form
\[
\rho(\xi)=\sum_{n\ge0}r_n\xi^{n-1/2}.
\]
We set
\begin{equation}\label{eq:Stry}
S(t)=\sum_{n\ge0}r_n\Gamma(n+1/2)t^n,\qquad
p_t(r,y)\sim\frac{e^{-A_1/t}}{4\pi^2t^{3/2}}S(t),
\end{equation}
and write $\widehat S(\xi)=\sum r_n\Gamma(n+1/2)\xi^n/n!$ for the ordinary Borel transform. The paths $\Gamma_\pm$ follow the positive action axis and pass each subsequent real critical value above or below. For a continuation path $\mathfrak p$, let $\operatorname{Sing}^*_{\mathfrak p}$ denote the nonzero finite singularities of the Borel germ continued along that path.

\begin{theoremA}[Global action-spectrum resurgence; Theorems~\ref{thm:hpositivespectrum}, \ref{thm:hfullspectrum},
\ref{thm:hhorizontalresurgence}, and \ref{thm:hhigherdim}]
For every $r,y>0$,
\[
\operatorname{Sing}_{\Gamma_\pm}\widehat S=\bigl(\Sigma_H^{\rm fin}(r,y)-A_1\bigr)\cap(0,\infty),\qquad
\bigcup_{\mathfrak p}\operatorname{Sing}^*_{\mathfrak p}\widehat S
=\bigl(\Sigma_H^{\rm fin}(r,y)-A_1\bigr)\setminus\{0\}.
\]
Every noninitial Morse action occurs with a nonzero logarithmic ordinary Borel singularity. At $C=\theta_j$, the cubic action occurs with exponent
$-1/6$. The same full-spectrum statement holds at horizontal endpoints after the even-symmetry quotient, and in every fixed isotropic Heisenberg dimension.
\end{theoremA}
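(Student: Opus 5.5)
The plan is to work with the exponential integral in the $q$-variable. The Borel transform is identified with a pushforward of $W(q)\,\dd q$ along the phase $F$.

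First I would compute the complex critical set. Differentiating gives
\[
F'(q)=y+a\cot q-\frac{aq}{\sin^2 q}=0 .
\]
This is equivalent to $C\sin^2q+\sin q\cos q-q=0$, where $C=y/a$. I would then match its solutions with finite complex normal geodesics via the Hamiltonian formulation of Beals--Gaveau--Greiner. Under this matching the geodesic action equals the critical value $F(q_c)$. The consequence is $\Sigma_H^{\rm fin}(r,y)=\{F(q_c):F'(q_c)=0\}$. The poles $q=k\pi$, $k\neq0$, carry infinite action, so they contribute no finite singularities. The minimizing saddle is the real critical point in $(\pi,2\pi)$ reached by the steepest-descent deformation of $\R$ (for $y>0$), and it gives $A_1$.

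Next I would write $\widehat S(\xi)$ as $\sum_\pm \pm W(q_\pm(X))/F'(q_\pm(X))$ with $X=A_1+\xi$, suitably normalized so that $\rho=\widehat S$ up to the half-derivative. Here $q_\pm$ are the two local inverse branches of $F$ at the minimizing saddle. Analytic continuation of $\widehat S$ along a path $\mathfrak p$ then means transporting the zero-cycle $q_+(X)-q_-(X)$ in the fibres $F^{-1}(X)$.

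The continued germ is analytic wherever the transported cycle avoids critical points and escapes to poles or infinity stay controlled. This needs an asymptotic analysis. Near $q=k\pi$ the phase behaves like $ak\pi/(q-k\pi)$, so fibres near poles stay bounded away from them for finite $X$. As $|\im q|\to\infty$ we have $\cot q\to\mp i$ and $F\sim (y\mp ia)q$, so fibres escape only as $X\to\infty$. Hence the singularities lie in $\Sigma_H^{\rm fin}-A_1$.

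For the converse inclusions I would proceed in two cases. \emph{Real saddles.} Along $\Gamma_\pm$, the cycle picks up each successive real critical point with a nonzero local component. I would check this inductively using the alternating real structure of the graph of $F$ on the intervals $(k\pi,(k+1)\pi)$. The local form of a Morse saddle then gives a logarithmic singularity whose coefficient is the residue $W(q_c)/\sqrt{F''(q_c)}\neq0$, since $W$ never vanishes. At $C=\theta_j$ the cubic normal form $F-F(q_c)\sim u^3$ gives the local Borel exponent $1/3-1=-2/3$. Shifting by the $\Gamma(n+1/2)$ normalization yields $-1/6$. \emph{Complex saddles.} Here I would construct for each complex critical value a path that first crosses the real axis between poles and then approaches the target. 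This step uses connectivity of the complement of the critical values together with a Picard--Lefschetz intersection argument: I would produce a path along which the transported cycle has intersection number $\pm1$ with the vanishing cycle. This is the step I expect to be the main obstacle, because the phase is nonproper and has infinitely many sheets. My first attempt would be a monodromy argument around the real critical values, which links the minimizing branch to the complex ones in the strip containing them.

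The remaining cases follow from the same argument. For the horizontal case $y=0$, $F$ is even, so I would run the same argument on the quotient by $q\mapsto -q$. For $H^n$ only the amplitude changes, to $W^n$, which is still nonvanishing at critical points, so the spectrum and the nonvanishing of the singularity coefficients carry over unchanged.
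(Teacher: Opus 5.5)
Your outline has the right architecture: actions as critical values of $F$, transport of the two-sheet zero-cycle, and local Morse/cubic exponents converted by the Abel transform. However, the step you flag as the main obstacle is left open, and the route you suggest (monodromy around real critical values, linking to complex saddles ``in the strip containing them'') is not how it closes. The nonreal critical points are not confined to a strip, since $\im z_n\sim\frac12\log|n|$.

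The missing idea is to argue upstairs. $F^{-1}(\Sigma_F)$ is discrete in the connected pole-free plane $U=\C\setminus(\pi\Z\setminus\{0\})$, so its complement is path connected (Proposition~\ref{prop:hregularcovering}, Remark~\ref{rmk:accessibility}). Join one point of the initial pair $q_{\rm up},q_{\rm down}$ to a point near the target critical point $q_c$ by a $q$-path avoiding $F^{-1}(\Sigma_F)$, and project it by $F$. The initial cycle is $[q_{\rm down}]-[q_{\rm up}]$, so the arriving sheet carries coefficient $\pm1$. The other colliding sheet carries $0$ or the opposite sign. Hence the Picard--Lefschetz coefficient $n_j-n_i\in\{\pm1,\pm2\}$ is automatically nonzero, and no intersection-number computation is needed. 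At a cubic point you assert the exponent but never check nonvanishing. The leading $-2/3$ coefficient is proportional to $\sum_\ell n_\ell e^{2\pi i\ell/3}$, which vanishes only if $n_0=n_1=n_2$, so it is nonzero as well.

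Second, a nonzero local coefficient implies a singularity only once you know no other critical point lies over the same value, and you never establish this. The paper proves two facts from $\sin q=q/W$, $\cos q=W-Cq/W$ at a critical point (Propositions~\ref{prop:hactiondistinct} and \ref{prop:hrealaction}):
\begin{itemize}
\item distinct critical points have distinct critical values;
\item every positive critical value comes from a real critical point.
\end{itemize}
Without the first, the $(X-D)^{-1/2}$ terms from two saddles over $D$ could cancel in the scalar pushforward. Without the second, $(\Sigma_H^{\rm fin}-A_1)\cap(0,\infty)$ could contain values of nonreal critical points that your real induction along $\Gamma_\pm$ never examines, so the first equality would not follow. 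The horizontal case needs the same inputs in the quotient variable $s=\tau^2$.

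Two further corrections.
\begin{itemize}
\item The minimizing saddle is the unique root $q_0\in(0,\pi)$ of $\mathscr H(q)=C$; the shift of the contour to $\R+iq_0$ crosses no pole. It is not a point of $(\pi,2\pi)$, which contains no critical point at all when $C<\theta_1$. Your induction must therefore start with the two real roots issuing from $-\alpha$ and $\pi-\alpha$ at $X=0$, which meet at $A_1$.
\item Your no-escape sketch treats only poles and $|\im q|\to\infty$. For bounded $X$ the fibre also has infinitely many points near $\lambda_n=n\pi-\alpha$, going off along the real direction. You need Rouch\'e for $\mathcal R_X=\nu\sin(z+\alpha)-X\sin z/z$ (Lemma~\ref{lem:hcomplexnoescape}) to see that each lies alone in a disk $\mathbb D(\lambda_n,\rho)$, so a lift over a compact path cannot drift to infinity.
\end{itemize}
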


The labelled saddle sectors also carry a global Stokes structure. 

\begin{theoremB}[Global Stokes structure;
Theorems~\ref{thm:directional}, \ref{thm:positive}, and
\ref{thm:angular}]
Fix a nonresonant real endpoint. Every affine line in the action plane meets $\Sigma_H^{\rm fin}(r,y)$ in a finite set. Hence, for every direction $\theta$, the labelled saddle sectors satisfy
the row-finite Stokes relation
\[
\Psi_A^{\theta,+}=\Psi_A^{\theta,-}
+\sum_{\substack{B\in\Sigma_H^{\rm fin}(r,y)\\
B-A\in e^{i\theta}\R_+}}s_{AB}^{\theta}\Psi_B^{\theta,-},
\qquad s_{AB}^{\theta}\in\{-2,-1,0,1,2\}.
\]
If
\[
\theta_m<C<\theta_{m+1},\qquad
A_1<B_1<A_2<\cdots<B_m<A_{m+1},
\]
then, in the integral-cycle normalization, the complete positive-ray Stokes matrix is
\[
(S_0)_{ij}=
\begin{cases}
  0, \quad & i>j,\\
  1, & i=j,\\
  2, & (i,j)=(2k-1,2k),\quad 1\le k\le m,\\
  1, & i<j ~\text{ otherwise}.
\end{cases}
\]
Moreover, the angle-ordered Stokes product is well defined on every closed angular cone of aperture $<\pi$ and admits a weighted completion on the labelled saddle module.
\end{theoremB}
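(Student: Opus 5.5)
The plan is to work entirely in the $q$-variable, with phase $F(q)=yq+aq\cot q$ and amplitude $W(q)=q/\sin q$. The heat kernel integral~\eqref{eq:hfourier} becomes an exponential integral over the imaginary $q$-axis. The complex geodesic actions are the critical values $F(q_c)$, where $F'(q_c)=0$, i.e.
\[
y+a\cot q-\frac{aq}{\sin^2q}=0,\qquad\text{equivalently}\qquad C\sin^2q+\sin q\cos q-q=0 .
\]
I take as given that $\Sigma_H^{\rm fin}(r,y)$ is this critical value set. The first step, the line-finiteness claim, follows from asymptotics of the critical points. Writing $q=x+iv$, for $|v|$ large the equation forces $\sin^2q\approx q/C$ up to exponentially small corrections. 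The critical points therefore lie near two logarithmic curves $v\sim\pm\tfrac12\log|x|$ plus the real ones, and there are finitely many in each strip $|x|\le R$. On these curves $F(q)\approx yq+aq\cot q$ with $\cot q\to\mp i$, so $F(q_c)$ grows like $(y\mp ia)x$ plus $O(\log|x|)$. Hence $|F(q_c)|\to\infty$ along two asymptotic directions $\arg(y\mp ia)$, with only logarithmic transverse deviation. The real critical values $A_k,B_k$ increase quadratically in $k$ along $\R_+$. A fixed affine line can meet each branch only in a bounded region unless it is parallel to an asymptotic direction. Even then, the logarithmic drift $\operatorname{Im}(e^{-i\theta}F(q_c))\sim c\log|x|$ (with $c\ne0$ computed from the next-order term of $\cot q$) eventually leaves the line. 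So each line meets the spectrum in finitely many points.

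The second step constructs the Stokes relation. For each critical point $q_B$ I take the Lefschetz thimble $\mathcal J_B^\theta$, the steepest-descent path on which $e^{-i\theta}(F-F(q_B))\in\R_+$, and set $\Psi_B^{\theta,\pm}$ to be the Laplace integral of $W$ along the thimble at angles $\theta\pm0$. The Picard--Lefschetz formula gives $\mathcal J_A^{\theta+}-\mathcal J_A^{\theta-}=\sum_B\langle \mathcal J_A,\mathcal K_B\rangle\,\mathcal J_B^{\theta-}$. Here the sum runs over saddles with $B-A\in e^{i\theta}\R_+$, and the coefficients are intersection numbers of the descending thimble of $A$ with the ascending cycles of $B$. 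Line-finiteness makes the sum finite. Nonproperness of $F$ at the poles $q\in\pi\Z\setminus\{0\}$ is handled by the same principle. Near a pole, $F\sim a\pi^2 k^2/(q-k\pi)\cdot(\ldots)$ is a simple pole, so thimbles can only escape to infinity there in directions where $e^{-F}\to0$. These endpoints are not critical values, so they contribute nothing to the jumps. The bound $|s_{AB}^\theta|\le2$ comes from the fact that each steepest-descent path of $A$ has two ends, and each end can meet the ascending manifold of $B$ at most once, with sign.

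The third step computes $S_0$ for $\theta=0$. Here all relevant saddles are real, $q_{A_k}\in(2k-2)\pi+(\ldots)$ and $q_{B_k}$ interleaved, and the real axis is cut by poles at $k\pi$ between them. I will trace the real-axis descending flow from each saddle and deform it slightly above or below to decide how it crosses. For $i<j$ the descending path from $A_i$ runs along $\R$ and passes every later saddle once, giving entry $1$. For the adjacent pair $(A_k,B_k)$ ($(i,j)=(2k-1,2k)$), which sits between the same poles, the upper and lower half-thimbles both hit $B_k$ and contribute with the same orientation, giving $2$. This is consistent with the coalescence at $C=\theta_k$, where $A_k$ and $B_k$ merge and the $A_2$ local model has jump $2$. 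The final step defines the angle-ordered product over a closed cone of aperture $<\pi$: only finitely many Stokes rays carry jumps between any fixed pair of levels, because the actions grow along asymptotic directions. The product is then a triangular, locally finite product, with convergence in a weighted norm $\sum|c_B|e^{-\lambda|B|}$. The main obstacle is the sign and multiplicity bookkeeping around the poles in the third step. I would verify it first at $C$ slightly above $\theta_k$ using the cubic normal form, then propagate it by continuity within each nonresonant interval.
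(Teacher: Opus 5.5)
Your overall architecture (tail asymptotics for line-finiteness, Picard--Lefschetz jumps $n_j-n_i\in\{0,\pm1,\pm2\}$, chain-finite angular products) parallels the paper's, with thimbles in place of its zero-cycles in the fibres $F^{-1}(X)$. The genuine gap is in your third step, which carries the content of $S_0$: it rests on wrong saddle geometry, and the check you propose cannot produce the entry $2$.

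By Lemma~\ref{lem:hrealcritical}, $q_{A_k}=q_{k,-}\in((k-1)\pi,k\pi)$ and $q_{B_k}=q_{k,+}\in(k\pi,(k+1)\pi)$, so the pair $(A_k,B_k)$ straddles the pole $k\pi$. The pair sharing a pole interval is $(B_k,A_{k+1})$, and it is this pair that coalesces as $C\downarrow\theta_k$. The fold ($A_2$) model there gives a coefficient $\pm1$ between the merging saddles, i.e.\ the entry $(S_0)_{2k,2k+1}=1$. It says nothing about $(A_k,B_k)$, which are separated by a pole and never coalesce in a nonresonant chamber. Its natural local picture is the saddle--pole chart of Lemma~\ref{prop:hnormal}, not the cubic normal form. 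Also, at a real maximum $A_i$ the $\theta=0$ thimble leaves $\R$ vertically, not along it.

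The missing idea is the label transport of Proposition~\ref{prop:hlabelinduction}. On $(A_k,B_k)$ the vanishing cycle of $A_k$ is the unique nonreal conjugate pair. It winds around the pole $k\pi$ and returns to $\R$ exactly at $q_{k,+}$, so both of its sheets (coefficients $\pm1$) collide at $B_k$ and $n_j-n_i=\pm2$. After $B_k$, one sheet is the pole-side branch tending to $k\pi$, which meets no later positive critical value. The other is the increasing branch, which meets each later saddle together with a sheet of coefficient $0$, giving $\pm1$. The signs come from the square-root detours in the Morse charts $F=A_k-u_A^2$, $F=B_k+u_B^2$ with the stated orientations. For instance, the upper detour at $B_k$ sends $q_{\rm up}$ to $q_{\rm R}$ and the lower one sends it to $q_{\rm L}$, so $\delta_{A_k}$ jumps by $+2\delta_{B_k}$. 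Your sentence that both half-thimbles hit $B_k$ names the right mechanism, but your justification for it is false. Continuity in $C$ from the cubic point cannot replace this global transport.

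Several further steps need repair:
\begin{itemize}
\item There are only finitely many real critical values: $1+2m$, all positive (Corollary~\ref{thm:hclassification}, Proposition~\ref{prop:hactiondistinct}). Your claim that $A_k,B_k$ continue indefinitely along $\R_+$ is false, and if true would make the real axis a counterexample to line-finiteness.
\item There are four complex tails, with directions $\pm(y-ia)$ and $\pm(y+ia)$, not two. Along them the transverse drift comes from $\im q_c\sim\pm\frac12\log|\re q_c|$ multiplied by $y\mp ia$. The next term of $\cot q$ contributes only $O(1)$.
\item The weight $e^{-\lambda|B|}$ does not give a bounded completion. For $B-A\in\Gamma$ one only has $|B|-|A|\ge-|B-A|$, so $e^{-\lambda(|B|-|A|)}$ can grow exponentially, e.g.\ for $A,B$ far out on a tail pointing into $-\Gamma$. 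You need the cone-adapted height $\sum_A|c_A|e^{-\sigma h_\phi(A)}$, whose weight ratio is $e^{-\sigma h_\phi(B-A)}\le e^{-\sigma c_\Gamma|B-A|}$. You also need an entry bound such as $|(P_\Gamma)_{AB}|\le 2\cdot3^{M-2}$ with $M=\#[A,B]_\Gamma=O(1+h_\phi(B-A))$, from chain counting and linear counting of $\Sigma_F$ in disks; your sketch provides none.
\item Your thimble integrals need what Lemmas~\ref{lem:hcomplexnoescape} and~\ref{lem:raygrowth} supply. First, lifts must not escape over bounded action paths. Second, on the tail of each ray one needs $|q(X)|=O(|X|)$ and $|F'(q(X))|\ge c$, so that $W/F'$ grows only polynomially and the lateral Laplace integrals converge. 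Treating the poles alone does not give this.
\end{itemize}
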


At resonance, two neighboring Morse critical points coalesce into a cubic $A_2$ block. The uniform confluence formulas give action separation of order $\Lambda^{3/2}$ and universal scaled Borel profiles.

\vskip 0.1cm
Near the vertical axis, the nearest singularity of the initial Borel germ gives the following uniform large-order asymptotics.

\begin{theoremC}[Uniform large order near the vertical axis; Theorem~\ref{thm:hlargeorder}]
Let
\[
B_1=F(q_+),\qquad\omega_1=B_1-A_1,\qquad
\chi_1=-\frac{\sqrt2\,W(q_+)}{\sqrt{F''(q_+)}}>0,
\]
where $q_+$ is the first critical point to the right of $\pi$. For $y\in K\Subset(0,\infty)$ and $0<r<r_0(K)$, $\omega_1$ is the unique nearest nonzero singularity of the initial Borel germ, and
\[
\widehat S(\xi)=-\frac{\chi_1}{\sqrt\pi}\log\!\left(1-\frac{\xi}{\omega_1}\right)
+\text{less singular terms}.
\]
Moreover,
\[
c_n=\frac{\chi_1}{\sqrt\pi}\Gamma(n)\omega_1^{-n}
\left(1+\frac{\beta(r,y)}{n}+O_K(n^{-2})\right),
\]
and $\beta(r,y)\rightarrow-1/4$ uniformly as $r\rightarrow0^+$.
\end{theoremC}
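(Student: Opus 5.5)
The approach is to run a Darboux transfer argument on the Borel germ $\widehat S$. First we locate its nearest singularity and its local form. Then we push the Darboux expansion to second order, keeping every estimate uniform for $y\in K$ and $r\to0^+$.

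\textbf{Step 1: locate and isolate the nearest singularity.} We work in the variable $q$. The phase is $F(q)=yq+aq\cot q$ with $a=r^2/2$. As $a\to0$, the critical points near each pole $k\pi$ satisfy $F'(q)=0$, i.e. $y+a(\cot q-q/\sin^2 q)=0$. Near $q=k\pi$ this forces $\sin^2 q\approx a k\pi/y$, so there are two critical points $q_\pm^{(k)}=k\pi\pm\sqrt{ak\pi/y}+O(a)$. Their critical values are
\[
F(q^{(k)}_\pm)=k\pi y\pm 2\sqrt{ak\pi y}+O(a).
\]
For the minimizing saddle, $A_1=F(q_-)$ with $q_-$ just left of $\pi$, and $B_1=F(q_+)$ with $q_+$ just right of $\pi$. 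Hence
\[
\omega_1=B_1-A_1\approx 4\sqrt{a\pi y}\to0,
\]
whereas all other action differences stay at distance $\gtrsim y\pi$ from $0$. This includes the complex ones in $\Sigma_H^{\rm fin}-A_1$, whose imaginary parts are bounded below by the Theorem~A analysis. So for $r<r_0(K)$, $\omega_1$ is the unique nearest nonzero singularity. By Theorem~A it is reached on the principal sheet, because $\Gamma_\pm$ passes nothing before it. The radius of the next singularity is bounded below by $c_K>|\omega_1|(1+\delta)$; in fact the ratio tends to infinity.

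\textbf{Step 2: local form at $\omega_1$.} Near $B_1$ the continued germ is the Laplace pushforward of the Morse saddle at $q_+$. The transported vanishing cycle meets the Lefschetz thimble of $q_+$ with intersection number $\pm1$; this is the $(1,2)$-entry of $S_0$ in Theorem~B. The Morse pushforward of a half-density gives $\widehat S(\xi)=(\text{Stokes constant})\cdot\widehat\Phi_{B_1}(\xi-\omega_1)\log(\xi-\omega_1)/(2\pi i)+\text{holomorphic}$. Here $\widehat\Phi_{B_1}$ is the Borel transform of the $q_+$ saddle series. Its leading coefficient $W(q_+)\sqrt{2\pi/F''(q_+)}$, normalized against $\Gamma(1/2)=\sqrt\pi$, gives the constant $-\chi_1/\sqrt\pi$. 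The sign follows from $F''(q_+)<0$ and $W(q_+)<0$ beyond $\pi$, with the branch fixed by the orientation of the Fourier contour, as in Theorem~A. The next term is $\widehat\Phi_{B_1}(\zeta)=\widehat\Phi_{B_1}(0)(1+\kappa\zeta+\dots)$, where $\kappa$ is the $n=1$ coefficient of the $q_+$ saddle series.

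\textbf{Step 3: transfer.} We use Cauchy's formula on a circle of radius $|\omega_1|(1+\delta)$ with a keyhole around the cut $[\omega_1,\infty)$. A singular term $-\frac{\chi_1}{\sqrt\pi}(1+\kappa(\xi-\omega_1))\log(1-\xi/\omega_1)$ has Taylor coefficients $\frac{\chi_1}{\sqrt\pi}\omega_1^{-n}\bigl(\tfrac1n-\tfrac{\kappa\omega_1}{n(n-1)}+\cdots\bigr)$. Multiplying by $n!$ to return to $c_n$ gives $\Gamma(n)\omega_1^{-n}\bigl(1+\beta/n+O(n^{-2})\bigr)$ with $\beta=-\kappa\omega_1$, up to the exact shift convention. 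The remainder is holomorphic on the larger disc and contributes $O((1+\delta)^{-n})$ relative to the main term. Uniformity in $r$ is the delicate point. Since $\omega_1\to0$, everything should be rescaled by $\xi=\omega_1 u$, and the analytic data in $u$ must be shown to be bounded uniformly.

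\textbf{Main obstacle: the limit $\beta\to-1/4$.} This requires showing $\kappa\omega_1\to1/4$. In the scaled variable $q=\pi+\sqrt{a\pi/y}\,s$, the two saddles become the $s=\pm1$ critical points of the Bessel-type model phase $\sqrt{a\pi y}\,(s+1/s)$, with amplitude $W\sim -\pi/(\sqrt{a\pi/y}\,s)$. The corrected coefficient $\kappa$ then equals the first correction of the model saddle expansion divided by the action scale $\sim\sqrt{a\pi y}$. This is the $K_\nu$-type asymptotic coefficient, and it yields the limit $-1/4$. Its uniformity follows because the non-model corrections are $O(\sqrt a)$ in the rescaled variables. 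I would carry this out by explicit computation of the model integral $\int e^{-\lambda(s+1/s)}s^{-1}\,ds$, i.e. $K_0$, whose Hankel expansion coefficient $-(4\cdot0-1)/8\cdot$ normalization produces the constant. The precise normalization check is where I expect errors to creep in.
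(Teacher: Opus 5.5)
Your strategy can be made to work: isolate $\omega_1$, describe the germ at $\omega_1$ through the $B_1$ saddle series, transfer to coefficients, and evaluate $\beta$ in the Bessel model. As written, however, Step~2 gives the wrong constant, and several other steps need repair.

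\textbf{Errors to fix.}

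\emph{Stokes multiplier.} The multiplier at $B_1$ is $2$, not $\pm1$. Both sheets $q_{\rm up},q_{\rm down}$ of the minimizing cycle collide at $q_+$, with coefficients $+1$ and $-1$. The Picard--Lefschetz coefficient is therefore $n_j-n_i=2$. This is exactly the entry $(S_0)_{12}=2$ of Theorem~\ref{thm:positive} that you cite; compare also the jump $\Disc_2 b_-=2i\,b_+$ in Section~\ref{sec:hscalarconnection}. With your full-thimble leading coefficient $W(q_+)\sqrt{2\pi/F''(q_+)}$ and a multiplier $\pm1$, the coefficient of the logarithm would come out as $\chi_1/(2\sqrt\pi)$.

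\emph{Sign of $F''(q_+)$.} In fact $F''(q_+)>0$. Since $\mathscr H$ decreases on $(\pi,\theta_1)$, $B_1$ is a real local minimum. If $F''(q_+)$ were negative, $\chi_1$ would not even be real.

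\emph{Signs in Step~3.} One has $[z^n](1-z)\log(1-z)=1/(n(n-1))$. Your singular term therefore has coefficients $\frac1n+\frac{\kappa\omega_1}{n(n-1)}$, so $\beta=+\kappa\omega_1$. Since $K_0(x)\sim\sqrt{\pi/(2x)}\,e^{-x}(1-\frac1{8x}+\cdots)$ with $x=\omega_1/(2t)$, one gets $\kappa\omega_1\to-\frac14$, not $+\frac14$. Your two sign slips cancel.

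\emph{Isolation in Step~1.} The tail estimates of Lemma~\ref{lem:hremotecritical} are uniform only on compacts of $(0,\infty)^2$. They do not control the complex critical values as $a\to0$. What is needed is the uniform Rouch\'e argument of Lemma~\ref{lem:hfirstpairisolation}: at $a=0$ the cleared function $\mathcal R_X(q)=\sin q\,(y-X/q)$ has $(\pi,\pi y)$ as its only multiple root near $X=\pi y$.

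\textbf{Comparison with the paper.} Once repaired, your argument is a genuinely different route. The paper never computes a Stokes multiplier or a saddle coefficient at $q_+$. Lemma~\ref{lem:huniformamplitude} rescales $X=A_1+\omega_1z$, $q=\pi+bu$, and notes that the inverse equation extends holomorphically to $b=0$ as $u^2+(2-4z)u+1=0$. It deduces $G(\omega_1z)=\chi_1A_{r,y}(z)(1-z)^{-1/2}$, with $A_{r,y}$ holomorphic and equal to $1+O_K(r)$ on a fixed disc $|z|<R$, $R>1$. Darboux is then one line. The $-\frac14$ is simply $\Gamma(n+\frac12)^2/(\Gamma(n+1)\Gamma(n))=1-\frac1{4n}+\cdots$, and $\frac12A'_{r,y}(1)=O_K(r)$ by Cauchy. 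Uniformity and the constant $\chi_1=T_{r,y}(1)/\sqrt{\omega_1}$ come for free.

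Your route reads $\beta=\omega_1c_1^{B}/c_0^{B}$ off the resurgent large-order relation. This is more conceptual, and it is what one would use at general endpoints. It agrees with the paper because in the model $c_k^B/c_0^B=(-1)^k\Gamma(k+\frac12)^2/(\pi\,k!\,\omega_1^k)$. But it costs more. You still need the same rescaling to control $c_1^B$ at a saddle colliding with the pole.

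Working directly with $\widehat S$ also has a pitfall. A splitting $\widehat S=(\text{holomorphic})\cdot\log(1-\xi/\omega_1)+H$ with $H$ holomorphic holds only near $\omega_1$, not on $|\xi|<(1+\delta)|\omega_1|$. The exact log coefficient is proportional to $\widehat\Phi_{B_1}(\xi-\omega_1)$, which is singular at $\xi=0$, because the $B_1$ germ sees $A_1$ at distance $\omega_1$. Truncating that coefficient instead leaves an $O((\xi-\omega_1)^2\log)$ singularity in the remainder. So the claim that ``the remainder is holomorphic on the larger disc'' is false as stated.

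The fix is to take $\delta<1$ and apply the keyhole Cauchy formula to $\widehat S$ itself, using only $\Disc\widehat S$ on $[\omega_1,(1+\delta)\omega_1]$. Passing to $G=\sqrt\xi\,\rho$, as the paper does, avoids the issue entirely.
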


%The Heisenberg group provides a particularly effective model for studying the relation between complex geometry and heat-kernel resurgence. After Fourier transformation in the central variable, the heat kernel reduces to a one-dimensional meromorphic phase, while still retaining nontrivial complex normal geodesics, Morse and cubic degenerations, and Stokes phenomena. This makes it possible to follow explicitly the passage from complex-geodesic actions to phase critical values, inverse-sheet ramification, Borel singularities, and Stokes data, and also to compare the intrinsic resurgent structure with the combination selected by the original Fourier contour. In this sense, the Heisenberg group serves as a controlled model for understanding which geometric data are detected by heat-kernel resurgence and which depend on global analytic continuation.
The Heisenberg model thus gives a direct comparison between complex-geodesic actions, Borel singularities, and Stokes data.

\subsection{Organization}
In Section \ref{sec:hsetup}, we relate the complex normal actions to the Fourier phase. In Section \ref{sec:hglobal}, we prove the Borel continuation and action-spectrum results, and furthermore, in Section \ref{sec:hglobalstokes}, we construct the Stokes operators. In Section \ref{sec:hconnection}, we reconstruct the original Fourier contour. In Sections \ref{sec:hboundary} and \ref{sec:hhigherdim}, we study the boundary regimes and the higher-dimensional extension.
\subsubsection*{Acknowledgments}

The author thanks Yong Li for early discussions and Ryszard Nest for comments on the first manuscript. X.~T. was supported by BNSFC Youth Project No.1254041, BNSFC No.JR25001 and NSFC Youth Project No.12501079. 

\section{Fourier phase and complex actions}\label{sec:hsetup}

We recall the complex Hamiltonian description \cite{BGG} in the normalization used here. For an endpoint $(x,y)\in\C^2\times\C$, set 
\[\sigma=x_1^2+x_2^2,\qquad a=\sigma/2,\qquad C=y/a ~\text{ when }a\ne0\] 
and
\[F(q)=yq+\frac{\sigma}{2}q\cot q,\qquad
\mathscr H(q)=\frac{q-\sin q\cos q}{\sin^2q}.
\]
Then
\begin{equation*}%\label{eq:hphaserelations}
F'(q)=a\bigl(C-\mathscr H(q)\bigr), \qquad
F''(q)=-a\mathscr H'(q),
\end{equation*}
and every critical point $q$ satisfies
\begin{equation*}%\label{eq:hcritvalue}
F(q)=a\left(\frac q{\sin q}\right)^2.
\end{equation*}

On the affine complexification $H^1_{\C}=\C^3$ of the Heisenberg group, let
\[
\mathcal H_{\C}=\frac12(h_1^2+h_2^2),\qquad h_1=p_1+x_2p_y,\qquad h_2=p_2-x_1p_y.
\]
Complex normal geodesics are projections of the Hamiltonian flow with complex initial covectors; see \cite{BGG}. For a unit-time trajectory, write $p_y=\lambda$ and $v=(h_1(0),h_2(0))$. The Hamilton equations give
\begin{equation}\label{eq:hendpoint}
\begin{aligned}
x(s)&=\frac{\sin(\lambda s)}{\lambda}R_{-\lambda s}v,\qquad\quad
y(1)=\frac{v\cdot v}{2\lambda^2}
\bigl(\lambda-\sin\lambda\cos\lambda\bigr),\\
\mathcal A(\gamma)&=\int_0^1\mathcal H_{\C}\,\dd s=\frac12v\cdot v.
\end{aligned}
\end{equation}

\begin{theorem}\label{thm:hactionclass}
\begin{itemize}
    \item[$(1)$] If $\lambda\notin\pi\Z\setminus\{0\}$, then
\begin{equation}\label{eq:hvfull}
 v=\frac{\lambda}{\sin\lambda}R_\lambda x,
\end{equation}
and
\begin{equation}\label{eq:hcomplexstationarity}
 y=\frac{\sigma}{2}\mathscr H(\lambda),
 \qquad
 \mathcal A
 =\frac{\sigma\lambda^2}{2\sin^2\lambda}
 =F(\lambda).
\end{equation}
Equivalently, the endpoint equation is $F'(\lambda)=0$, and $\mathcal A$ is the critical value.
    \item[$(2)$] If $\lambda=j\pi\ne0$, a finite trajectory requires $x=0$ and
\begin{equation}\label{eq:hquadric}
 v\cdot v=2j\pi y,\qquad
 \mathcal A=j\pi y.
\end{equation} 
\end{itemize}
\noindent Thus, away from the nonzero lattice $\pi\Z\setminus\{0\}$, finite complex normal actions are exactly the critical values of $F_{\sigma,y}$, and the endpoint map $E(v,\lambda)$ satisfies
\[
\det DE(v,\lambda)=-\left(\frac{\sin\lambda}{\lambda}\right)^2F''(\lambda).
\]
Thus conjugacy is equivalent to degeneracy of the phase critical point. The lattice actions occur only at vertical endpoints.
\end{theorem}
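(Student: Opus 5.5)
The proof is a direct computation from the explicit solution \eqref{eq:hendpoint}, organized around whether $\sin\lambda$ vanishes.

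\emph{Part (1).} Take the endpoint formula $x(1)=\frac{\sin\lambda}{\lambda}R_{-\lambda}v$. When $\lambda\notin\pi\Z\setminus\{0\}$, the factor $\sin\lambda/\lambda$ is nonzero; at $\lambda=0$ we interpret it as $1$ by continuity. We can therefore invert and get $v=\frac{\lambda}{\sin\lambda}R_\lambda x$, which is \eqref{eq:hvfull}.

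Next use that $R_\lambda$ is a complex rotation, so it preserves the bilinear form $v\cdot v$ even for complex $\lambda$. Hence
\[
v\cdot v=\frac{\lambda^2}{\sin^2\lambda}\,\sigma .
\]
Substituting this into the formula for $y(1)$ gives $y=\frac{\sigma}{2}\mathscr H(\lambda)$. Since $F'(\lambda)=a(C-\mathscr H(\lambda))$, or directly $F'=y-\frac{\sigma}{2}\mathscr H$ with no division by $a$, this is exactly $F'(\lambda)=0$.

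The action is $\mathcal A=\frac12 v\cdot v=\frac{\sigma\lambda^2}{2\sin^2\lambda}$. By the critical-value identity $F(q)=a(q/\sin q)^2$ at critical points, $\mathcal A=F(\lambda)$. I will verify that identity quickly: write $\cot q=\cos q/\sin q$, substitute $y=\frac{\sigma}{2}\mathscr H(q)$, and simplify $q(q-\sin q\cos q)/\sin^2q+q\cot q=q^2/\sin^2q$.

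Conversely, every critical point $\lambda$ of $F$ determines $v$ through \eqref{eq:hvfull}, and hence a trajectory ending at $(x,y)$. So the finite actions off the lattice are exactly the critical values of $F$.

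\emph{Part (2).} If $\lambda=j\pi\neq0$, then $\sin(\lambda s)$ vanishes at $s=1$, so $x(1)=0$ and finiteness forces $x=0$. In that case $y(1)=\frac{v\cdot v}{2\lambda^2}\lambda=\frac{v\cdot v}{2j\pi}$, which gives $v\cdot v=2j\pi y$ and $\mathcal A=j\pi y$. This is \eqref{eq:hquadric}. Because $x=0$ means $\sigma=0$ with $x$ real, such lattice actions occur only at vertical endpoints (or at the isotropic points treated later).

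\emph{Jacobian.} This is the step needing care. Set $E(v,\lambda)=(x,y)$ with
\[
x=\frac{\sin\lambda}{\lambda}R_{-\lambda}v,\qquad y=\frac{v\cdot v}{2}\,\phi(\lambda),\qquad \phi(\lambda)=\frac{\lambda-\sin\lambda\cos\lambda}{\lambda^2}.
\]
I will change variables to $u=R_{-\lambda}v$, whose Jacobian determinant is $1$ and which satisfies $u\cdot u=v\cdot v$. The key observation is that $\partial_\lambda(R_{-\lambda}v)$ is $u$ rotated by $90^\circ$, i.e. $-Ju$. In the coordinates $(u,\lambda)$ the map becomes
\[
x=s(\lambda)u,\qquad y=\tfrac12(u\cdot u)\phi(\lambda),\qquad s=\frac{\sin\lambda}{\lambda}.
\]
The $J$-term in $\partial_\lambda x$ is orthogonal to $u$. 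After projecting along $u$ in the column operations, it should therefore contribute nothing to the determinant.

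The determinant of this block matrix is then $s^2\bigl(\tfrac12 s\,(u\cdot u)\phi'-s'(u\cdot u)\phi\bigr)$ up to sign. Working on the solution set $x=su$, I will rewrite $u\cdot u=\sigma/s^2$ and compare with $F''=-\frac{\sigma}{2}\mathscr H'$, using $\mathscr H=\phi/s^2$. The identity $\mathscr H'=\phi'/s^2-2s'\phi/s^3$ should make the two expressions match, yielding $-s^2F''$.

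I expect the main obstacle to be the bookkeeping of the rotation derivative and the sign conventions; the algebra behind the $\phi/s^2$ factorization is straightforward. Once the formula is established, conjugacy (vanishing of $\det DE$ off the lattice) is equivalent to $F''(\lambda)=0$.
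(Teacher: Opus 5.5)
Your route is the paper's: invert the horizontal equation, substitute into the vertical equation and into $\mathcal A=\tfrac12 v\cdot v$, and obtain the Jacobian as a Schur complement of $\partial_v x$. Your unimodular substitution $u=R_{-\lambda}v$ is just a tidy way of computing that Schur complement. Parts (1) and (2) and the converse are fine. However, the step you flagged as delicate contains an arithmetic slip that, taken literally, gives the wrong formula.

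In the coordinates $(u,\lambda)$ the Jacobian is
\[
\begin{pmatrix} sI & s'u\\ \phi\,u^{T} & \tfrac12(u\cdot u)\phi'\end{pmatrix}.
\]
Its determinant is $s^2\bigl(\tfrac12(u\cdot u)\phi'-\tfrac{s'}{s}\phi\,(u\cdot u)\bigr)=s\,(u\cdot u)\bigl(\tfrac12 s\phi'-s'\phi\bigr)$, with sign $+$. You wrote $s^2(\cdots)$, which has one power of $s$ too many. Carried through with $u\cdot u=\sigma/s^2$, your expression yields $-s^3F''$, not $-s^2F''$, so the two expressions would not match as you expect.

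With the correct expression, the identity $u\cdot u=\sigma/s^2$ (which holds identically, since $x=su$) gives
\[
\det DE=\sigma\bigl(\tfrac12\phi'-\tfrac{s'}{s}\phi\bigr)=\tfrac{\sigma s^2}{2}\mathscr H'=-s^2F''.
\]

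Two smaller points:
\begin{itemize}
\item In the $(u,\lambda)$ chart there is no $J$-term at all: $\partial_\lambda x=s'u$ exactly. The $J$-term appears only in the $(v,\lambda)$ Schur complement, where it drops out because $v^{T}Jv=0$.
\item The parenthetical in your part (2) is inaccurate. On the lattice, $x(1)=0$ as a vector, not merely $\sigma=0$. So lattice actions occur only at genuinely vertical endpoints, and nonzero isotropic endpoints are reached by no finite trajectory at all (Theorem~\ref{thm:hisotropic}(i)).
\end{itemize}

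Your conjugacy equivalence survives the slip, since $s\neq0$ off the lattice, but the exact determinant identity in the statement does not.
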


\begin{proof}
Away from the nonzero lattice, invert the horizontal equation in \eqref{eq:hendpoint}. Substitution into the vertical equation and $\mathcal A=\frac12v\cdot v$ gives \eqref{eq:hcomplexstationarity}.
At $\lambda=j\pi\ne0$, the horizontal endpoint must vanish and the vertical equation reduces to \eqref{eq:hquadric}. Differentiating \eqref{eq:hendpoint} and taking the Schur complement
of
\[
\partial_vx=\frac{\sin\lambda}{\lambda}R_{-\lambda}
\]
gives the determinant formula.
\end{proof}

For real endpoints with $r>0$, the phase and action are related by 
\[
\mathscr H(q)=C\Longleftrightarrow F'(q)=0,
\qquad F(q)=\mathcal A.
\]

\section{Borel continuation and action singularities}\label{sec:hglobal}

For fixed real endpoints, we classify the critical actions and prove inverse continuation without escape. We then use Morse and cubic cycle transport to determine the finite Borel singularities, followed by the near-vertical large-order asymptotics.

\subsection{The local Borel germ at the minimizing saddle}

Fix $r,y>0$ and set $C=y/a>0$. Since
\[
F'(q)=a\bigl(C-\mathscr H(q)\bigr)\quad\text{and}
\quad\mathscr H'(q)=\frac{2(\sin q-q\cos q)}{\sin^3q}>0,\quad 0<q<\pi,
\]
there is a unique $q_0\in(0,\pi)$ satisfying $\mathscr H(q_0)=C$, and
\[
F''(q_0)=-a\mathscr H'(q_0)<0.
\]
Thus $q_0$ is the nondegenerate minimizing critical point, and we set
\[
A_1:=F(q_0).
\]

For $X>A_1$ sufficiently close to $A_1$, the equation $F(q)=X$ has two conjugate solutions
\[
q_{\mathrm{up}}(X),\qquad
q_{\mathrm{down}}(X)=\overline{q_{\mathrm{up}}(X)}, \qquad \im q_{\mathrm{up}}(X)>0.
\]
With $W(q)=q/\sin q$, the pushforward of the oriented Fourier contour to the action variable $X$ has local density
\begin{equation}\label{eq:hlocaldensity}
\rho(X-A_1)=iW(q_{\mathrm{down}}(X))\frac{\dd q_{\mathrm{down}}}{\dd X}(X)-iW(q_{\mathrm{up}}(X))\frac{\dd q_{\mathrm{up}}}{\dd X}(X).
\end{equation}
The Fourier-contour orientation fixes the signs.

Set $\xi=X-A_1$. The holomorphic Morse lemma gives
\[
q_{\mathrm{up/down}}(A_1+\xi)
=q_0\pm i\sqrt{\frac{2\xi}{-F''(q_0)}}+O(\xi),
\]
and hence
\[
\rho(\xi)=\frac{\sqrt2\,W(q_0)}{\sqrt{-F''(q_0)}}\,\xi^{-1/2}+O(\xi^{1/2}).
\]
Thus $\rho(\xi)=\sum_{n\ge0}r_n\xi^{n-1/2}$ is a convergent Puiseux germ. To obtain the heat-kernel asymptotics from this germ, deform the real $\tau$-contour to $\R+iq_0$. Since $0<q_0<\pi$, no pole is crossed. Writing $\tau=x+iq_0$, so that $q=q_0-ix$, we have
\[
\re F(q_0-ix)-A_1=\frac{2a\sinh^2x\,(x\coth x-q_0\cot q_0)}{\cosh 2x-\cos 2q_0}>0,
\qquad x\ne0.
\]
Hence $iq_0$ is the unique minimizing saddle on the deformed contour, and the contribution away from a fixed neighborhood of it is exponentially smaller.

Termwise Laplace integration therefore gives
\begin{equation}\label{eq:hlocalasymptotic}
p_t(r,y)\sim\frac{e^{-A_1/t}}{4\pi^2t^{3/2}}\,S(t),\qquad 
S(t):=\sum_{n\geq0}r_n\Gamma\!\left(n+\frac12\right)t^n.
\end{equation}
Convergence of the Puiseux series makes $S(t)$ 1-Gevrey.

\begin{definition}\label{def:hresurgence}
A series $S(t)=\sum c_nt^n$ is called 1-Gevrey if $|c_n|\le MA^{n}n!$ for some $A,M>0$. The 1-Gevrey series $S$ is called resurgent if $\widehat{S}$ admits analytic continuation along every finite piecewise smooth path avoiding a locally finite singular set. See \cite{DelabaerePham,KS2022,KS2024} for more details.
\end{definition}

\subsection{Critical action spectrum}
We now classify the roots of the critical equation for $C=y/a>0$:
\[\mathscr H(q)=C.\]

\begin{lemma}\label{lem:hrealcritical}
For $k\ge1$, let $\theta_k$ be the unique root of $\tan\theta_k=\theta_k$ in $(k\pi,k\pi+\pi/2)$. On $(k\pi,(k+1)\pi)$, there are two simple roots $q_{k,+}<\theta_k<q_{k+1,-}$ if $C>\theta_k$, one double root if $C=\theta_k$, and no root if $C<\theta_k$.
\end{lemma}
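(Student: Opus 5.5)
The plan is to study the real function $\mathscr H(q)=(q-\sin q\cos q)/\sin^2 q$ on each interval $(k\pi,(k+1)\pi)$, $k\ge1$, and show that it is strictly convex-like there in the weak sense we need: it tends to $+\infty$ at both ends, strictly decreases to a single minimum at $\theta_k$, then strictly increases, and its minimum value is exactly $\theta_k$. Since critical points of $F$ are the solutions of $\mathscr H(q)=C$, and $F''=-a\mathscr H'$, the root count and multiplicities then follow at once.

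\emph{Endpoint behaviour.} As $q\to k\pi^+$ or $q\to(k+1)\pi^-$, the numerator $q-\sin q\cos q$ tends to $k\pi>0$, respectively $(k+1)\pi>0$, while $\sin^2q\to0^+$. Hence $\mathscr H\to+\infty$ at both ends.

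\emph{Monotonicity.} Use $\mathscr H'(q)=2g(q)/\sin^3q$ with $g(q)=\sin q-q\cos q$, and note $g'(q)=q\sin q$. On the interval, $\sin q$ has the constant sign $(-1)^k$, so $(-1)^kg$ is strictly increasing. Its boundary values are
\[
(-1)^kg(k\pi)=-k\pi<0,\qquad (-1)^kg((k+1)\pi)=(k+1)\pi>0 .
\]
So $(-1)^kg$ has a unique zero, namely the unique solution of $\tan q=q$ in the interval. This zero is $\theta_k$, and it lies in $(k\pi,k\pi+\pi/2)$ because $\tan q-q$ runs from $-k\pi$ to $+\infty$ on that subinterval and is positive on the remaining subinterval. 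Since $\operatorname{sign}\mathscr H'=\operatorname{sign}\bigl((-1)^kg\bigr)$, the function $\mathscr H$ is strictly decreasing on $(k\pi,\theta_k)$ and strictly increasing on $(\theta_k,(k+1)\pi)$.

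\emph{Minimum value.} At $\theta_k$ we have $\sin\theta_k=\theta_k\cos\theta_k$. Substituting gives
\[
\mathscr H(\theta_k)=\frac{\theta_k(1-\cos^2\theta_k)}{\theta_k^2\cos^2\theta_k}=\frac{\tan^2\theta_k}{\theta_k}=\theta_k .
\]

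\emph{Counting roots.} By the intermediate value theorem and strict monotonicity on each side of $\theta_k$:
\begin{itemize}
\item if $C>\theta_k$, there is exactly one root $q_{k,+}\in(k\pi,\theta_k)$ and exactly one root $q_{k+1,-}\in(\theta_k,(k+1)\pi)$, and both are simple since $\mathscr H'\ne0$ there;
\item if $C=\theta_k$, the only root is $\theta_k$;
\item if $C<\theta_k$, there is no root.
\end{itemize}

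\emph{The double root.} For $C=\theta_k$ the root $\theta_k$ has multiplicity exactly two as a zero of $F'=a(C-\mathscr H)$. Indeed $\mathscr H'(\theta_k)=0$, and $\mathscr H''(\theta_k)=2g'(\theta_k)/\sin^3\theta_k=2\theta_k/\sin^2\theta_k>0$, using $g(\theta_k)=0$. So $\theta_k$ is a double root of $\mathscr H=C$, i.e. a cubic critical point of $F$.

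The only mildly delicate step is the sign bookkeeping with $(-1)^k$; everything else is elementary.
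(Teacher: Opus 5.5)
Your proof is correct and follows the paper's argument: $\mathscr H$ blows up at both ends of the interval and has a unique minimum at $\theta_k$ with $\mathscr H(\theta_k)=\theta_k$. You simply write out the monotonicity (via $g'=q\sin q$) and the multiplicity check that the paper leaves implicit. One harmless slip: on $(k\pi+\pi/2,(k+1)\pi)$ one has $\tan q<0<q$, so $\tan q-q$ is negative there, not positive; this still rules out a zero on that subinterval.
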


\begin{proof}
On $(k\pi,(k+1)\pi)$, $\mathscr H$ tends to $+\infty$ at both ends and has its unique minimum at $\theta_k$, with $\mathscr H(\theta_k)=\theta_k$. The three cases follow.
\end{proof}

\begin{corollary}\label{thm:hclassification}
Assume $C\ne\theta_k$ for all $k$, and set $m=\#\{k\ge1:\theta_k<C\}$. Then the positive critical points of $F$ are
\[
q_{1,-}:=q_0,\qquad q_{k,+}\ (1\le k\le m),\qquad q_{k+1,-}\ (1\le k\le m),
\]
so their number is $1+2m$, with
\begin{equation}\label{eq:hcount}
\frac C\pi-\frac32<m<\frac C\pi,\qquad
m=\frac{2y}{\pi r^2}+O(1).
\end{equation}
Writing $A_k=F(q_{k,-})$, $B_k=F(q_{k,+})$, we have
\begin{equation*}%\label{eq:horder}
A_1<B_1<A_2<\cdots<A_m<B_m<A_{m+1},
\end{equation*}
and
\begin{equation*}%\label{eq:hpoleorder}
A_k<k\pi y<B_k,\qquad 1\le k\le m.
\end{equation*}
\end{corollary}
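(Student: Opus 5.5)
The plan is to read off the critical points from Lemma~\ref{lem:hrealcritical}, then use the critical-value formula $F(q)=a(q/\sin q)^2$ together with monotonicity of $F$ between consecutive critical points to order the actions.

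\emph{Counting.} On $(0,\pi)$ the function $\mathscr H$ increases from $0$ to $+\infty$, so the only root there is $q_0=q_{1,-}$. On each interval $(k\pi,(k+1)\pi)$ with $k\ge1$, Lemma~\ref{lem:hrealcritical} gives two simple roots when $\theta_k<C$ and none when $\theta_k>C$. The equality case is excluded by the nonresonance assumption. Since $\theta_k$ is increasing, the indices with $\theta_k<C$ are exactly $k=1,\dots,m$. This gives the list of $1+2m$ positive critical points. For the bound \eqref{eq:hcount}, use $k\pi<\theta_k<k\pi+\pi/2$ and the definition of $m$:
\[
m\pi<\theta_m<C<\theta_{m+1}<(m+1)\pi+\tfrac{\pi}{2}.
\]
This yields $C/\pi-3/2<m<C/\pi$. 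Since $C/\pi=2y/(\pi r^2)$, the asymptotic statement follows.

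\emph{Ordering.} Along $(0,\infty)$, the critical points are $q_{1,-}<q_{1,+}<q_{2,-}<\cdots<q_{m,+}<q_{m+1,-}$, interleaved with the poles $k\pi$ of $F$. First locate the poles. The points $q_{k,+}$ and $q_{k+1,-}$ lie in the same interval $(k\pi,(k+1)\pi)$. The point $q_{k,-}$ lies in $((k-1)\pi,k\pi)$, so a pole $k\pi$ separates $q_{k,-}$ from $q_{k,+}$. For $B_k<A_{k+1}$ there is no pole in between, and $F'=a(C-\mathscr H)$ is positive on $(q_{k,+},q_{k+1,-})$, since $\mathscr H<C$ there by the shape of $\mathscr H$ on that interval. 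So $F$ increases strictly there, giving $B_k<A_{k+1}$.

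\emph{The inequality $A_k<k\pi y<B_k$.} This is the main step, because it must cross a pole. Near $q=k\pi$ we have
\[
F(q)=yq+aq\cot q=k\pi y+\frac{a k\pi}{q-k\pi}+O(1).
\]
On $(q_{k,-},k\pi)$ we have $\mathscr H>C$, so $F'<0$ and $F$ is strictly decreasing from $A_k$ to $-\infty$. That alone does not compare $A_k$ with $k\pi y$, so I would use the auxiliary function $G(q)=F(q)-yq=aq\cot q$. At a critical point, $F'=0$ gives $G'(q)=-y$. Then
\[
F(q)-k\pi y=y(q-k\pi)+aq\cot q .
\]
At $q=q_{k,-}<k\pi$, $y(q-k\pi)<0$, and $\cot q<0$ on the second half of the interval $((k-1)\pi,k\pi)$. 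So the task is to show $q_{k,-}$ lies beyond $(k-\tfrac12)\pi$. This holds because $\mathscr H(q)=C>0$ forces $\sin q\cos q<q$ type information; more concretely, $q_{k,-}>\theta_{k-1}>(k-1)\pi$, and on the part of that interval where $\cot q>0$ one checks $\mathscr H(q)\le q<\theta_k<C$ for $k\ge2$, which is incompatible with $\mathscr H(q)=C$. For $k=1$, the value $\cot q_0$ may be positive, and I would instead use the closed form
\[
A_1-\pi y=a\left(\frac{q_0^2}{\sin^2q_0}-\frac{\pi(q_0-\sin q_0\cos q_0)}{\sin^2q_0}\right)
=\frac{a}{\sin^2q_0}\Bigl(q_0(q_0-\pi)+\pi\sin q_0\cos q_0\Bigr).
\]
Its negativity is an elementary inequality on $(0,\pi)$, checked by substituting $q_0=\pi-u$. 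The symmetric argument, with $q>k\pi$ and $\cot>0$ near the left end where $q_{k,+}<\theta_k<k\pi+\pi/2$, gives $B_k>k\pi y$. Combining all three comparisons yields the full chain of inequalities.
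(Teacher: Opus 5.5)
Your outline follows the paper's proof closely: the bounds $k\pi<\theta_k<(k+\frac12)\pi$ for the count, monotonicity of $F$ on $(q_{k,+},q_{k+1,-})$ for $B_k<A_{k+1}$, and comparison with $k\pi y$ to get across the pole. All of it is sound except the case $k=1$ of $A_k<k\pi y$. There you claim that $q_0(q_0-\pi)+\pi\sin q_0\cos q_0<0$ is an elementary inequality on all of $(0,\pi)$, and this is false. With $q_0=\pi-u$ the bracket equals $-u(\pi-u)-\pi\sin u\cos u$. It is negative for $0<u\le\pi/2$ but positive for $u$ close to $\pi$. Indeed $q_0(q_0-\pi)+\pi\sin q_0\cos q_0=q_0^2-\frac{2\pi}{3}q_0^3+O(q_0^5)$ as $q_0\to0^+$; at $q_0=0.3$ it is about $0.034$. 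Since $q_0\to0$ as $C\to0$, one genuinely has $A_1>\pi y$ for small $C$. So this step cannot succeed without using the hypothesis $k\le m$.

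The repair is already inside your own argument. The case $k=1$ is only in range when $m\ge1$, i.e.\ $C>\theta_1$. As $\mathscr H$ increases on $(0,\pi)$ and $\mathscr H(\pi/2)=\pi/2<\theta_1<C$, you get $q_0\in(\pi/2,\pi)$, hence $\cot q_0<0$. Then $F(q_0)-\pi y=y(q_0-\pi)+aq_0\cot q_0<0$ exactly as for $k\ge2$. Your worry that $\cot q_0$ might be positive only arises when $m=0$, where nothing is claimed. The same monotonicity also gives your $k\ge2$ localization more quickly: $\mathscr H$ increases on $(\theta_{k-1},k\pi)$ and $\mathscr H((k-\frac12)\pi)=(k-\frac12)\pi<\theta_k<C$.

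With this fix, your route differs from the paper's only in the pole comparison. The paper substitutes the critical-point identities $F=a(q/\sin q)^2$ and $y=a\mathscr H(q)$ to get $F(q)-k\pi y=a\,\bigl(k\pi(\delta+\sin\delta\cos\delta)+\delta^2\bigr)/\sin^2\delta$ with $q=k\pi+\delta$. It then asserts that the numerator has the sign of $\delta$. For $k=1$ that numerator is precisely your bracket, so the paper's assertion also tacitly uses $q_0>\pi/2$. Your splitting of $y(q-k\pi)+aq\cot q$ into two terms of the same sign avoids the critical-value identity altogether. It also makes explicit the needed localization $q_{k,-}\in((k-\frac12)\pi,k\pi)$, $q_{k,+}\in(k\pi,(k+\frac12)\pi)$, and hence the role of the hypothesis $k\le m$.
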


\begin{proof}
The bounds $k\pi<\theta_k<(k+\tfrac12)\pi$ give \eqref{eq:hcount}. At critical points, $F(q)=a(q/\sin q)^2$, and on each interval $(k\pi,(k+1)\pi)$, the value $B_k$ is the local minimum while $A_{k+1}$ is the local maximum, hence $B_k<A_{k+1}$. Finally, writing $q=k\pi+\delta$, $-\pi<\delta<\pi$, we have
\[
F(q)-k\pi y
=a\frac{k\pi(\delta+\sin\delta\cos\delta)+\delta^2}{\sin^2\delta}.
\]
The numerator has the sign of $\delta$ at the two critical roots, which yields $A_k<k\pi y<B_k$. 
\end{proof}

At resonance $C=\theta_j$, since
\begin{equation*}
F'''(\theta_j)=-a\mathscr H''(\theta_j)
=-\frac{2a\theta_j}{\sin^2\theta_j}\ne0,
\end{equation*}
there are $2j$ distinct critical points and
$2j+1$ counted with multiplicity. The inequalities remain strict except for the two values that merge at the cubic point.

\begin{proposition}\label{prop:hrealaction}
If $F'(z)=0$ and $F(z)>0$, then $z$ is real.
\end{proposition}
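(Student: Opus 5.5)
The plan is to turn the two conditions, $F'(z)=0$ and $F(z)>0$, into a small real-algebraic system for the real and imaginary parts of $z$, and then show that this system forces $\im z=0$. Since $F$ is meromorphic with poles at $\pi\Z\setminus\{0\}$, a critical point $z$ satisfies $\sin z\neq0$ (or $z=0$, which is real). Here $a>0$, $C=y/a>0$ are real. By the critical-value identity $F(z)=a(z/\sin z)^2$, the hypothesis $F(z)>0$ means $w:=z/\sin z$ is real and nonzero. The critical equation $\mathscr H(z)=C$ can be rewritten as
\[
\mathscr H(z)=\frac{z}{\sin^2 z}-\cot z=\frac{w-\cos z}{\sin z}.
\]

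So any such critical point satisfies the linear relations
\[
\sin z=\frac{z}{w},\qquad \cos z=w-\frac{C}{w}\,z,
\]
with $w,C$ real. Write $z=x+i\eta$ and suppose, for contradiction, that $\eta\neq0$. I would first compare imaginary parts of the two relations:
\[
\cos x\sinh\eta=\frac{\eta}{w},\qquad -\sin x\sinh\eta=-\frac{C\eta}{w}.
\]
This gives $(\cos x,\sin x)=\frac{\eta}{w\sinh\eta}(1,C)$, where $\eta/\sinh\eta$ is well defined since $\eta\ne0$. Summing squares yields
\[
w^2=(1+C^2)\frac{\eta^2}{\sinh^2\eta}.
\]

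Next I would substitute these expressions for $\cos x$ and $\sin x$ into the real parts. The relation $\sin x\cosh\eta=x/w$ gives $x=C\eta\coth\eta$. The relation $\cos x\cosh\eta=w-Cx/w$ then gives
\[
w^2=(1+C^2)\,\eta\coth\eta.
\]
Equating the two expressions for $w^2$ produces $\eta\coth\eta=\eta^2/\sinh^2\eta$, that is, $\sinh\eta\cosh\eta=\eta$, or $\sinh 2\eta=2\eta$. This equation has only the solution $\eta=0$ among real $\eta$, since $\sinh u>u$ for $u>0$ and $\sinh$ is odd. This contradicts $\eta\ne0$, so $z$ is real.

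The only step needing care is the reduction to the linear relations; everything afterwards is bookkeeping of real and imaginary parts. In that reduction one must make sure that $w$ is genuinely real and nonzero, which is exactly where the positivity $F(z)>0$ (rather than merely $F(z)$ real) and $a>0$ enter. One must also exclude the degenerate cases $\sin z=0$, which are poles of $F$ and not critical points, and $z=0$, which is trivially real.
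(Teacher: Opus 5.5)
Your proposal is correct and follows the paper's proof essentially line by line. The paper likewise sets $W_*=z/\sin z\in\R\setminus\{0\}$, rewrites the critical equation as $\sin z=z/W_*$, $\cos z=W_*-Cz/W_*$, separates real and imaginary parts, and equates the two expressions for $W_*^2$ to reach $\sinh v\cosh v=v$; your derivation via $\mathscr H(z)=(w-\cos z)/\sin z$ and your handling of the excluded cases only make explicit steps the paper leaves implicit.
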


\begin{proof}
If $F'(z)=0$ and $F(z)>0$, then $W_*:=z/\sin z\in\R\setminus\{0\}$. The critical equation gives
\[
\sin z=z/W_*,\qquad \cos z=W_*-Cz/W_*.
\]
Writing $z=u+iv$ gives
\begin{align*}
\sin u\cosh v&=u/W_*,&\cos u\sinh v&=v/W_*,\\
\sin u\sinh v&=Cv/W_*,&\cos u\cosh v&=W_*-Cu/W_*.
\end{align*}
If $v\ne0$, these identities imply
\[
u=Cv\coth v,\qquad W_*^2=(1+C^2)v\coth v.
\]
Applying $\sin^2u+\cos^2u=1$ to the second and third equations gives
\[W_*^2=(1+C^2)v^2/\sinh^2v.\] 
Thus $v=\sinh v\cosh v$, impossible for real nonzero $v$. Hence $v=0$.
\end{proof}

For the complex critical points, write
\begin{equation*}
F'(z)=\frac{aG_C(z)}{\sin^2z},\qquad
G_C(z)=C\sin^2z+\sin z\cos z-z.
\end{equation*}

\begin{lemma}\label{lem:hremotecritical}
For fixed $a,y>0$, the upper-half-plane nonreal zeros $z_n$ of $F'$ are locally finite and satisfy
\begin{align}
\re z_n&=n\pi+O_{a,y}(1),\notag\qquad
\im z_n=\tfrac12\log|n|+O_{a,y}(1),\notag\\[1mm]
\im F(z_n)&=-an\pi+\tfrac y2\log|n|+O_{a,y}(1).
 \label{eq:hremotecriticalvalue}
\end{align}
The estimates are uniform on compact subsets of $(0,\infty)^2$. The lower-half-plane roots give the other two complex-conjugate tails. Consequently every horizontal action strip contains only finitely many critical values of $F$. For fixed $a,y>0$, there is $\epsilon>0$ such that
\[
\{\re X>0,\ |\im X|<\epsilon\}
\]
contains no nonreal critical value. 
\end{lemma}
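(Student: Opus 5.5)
The plan is to analyse the zeros of $G_C(z)=C\sin^2z+\sin z\cos z-z$ in the regime $\im z\to+\infty$, where the trigonometric terms reduce to a single exponential. Then I will locate each zero by Rouché's theorem and substitute into the critical-value formula $F(z)=a(z/\sin z)^2$ valid at critical points. Since $G_C$ is real on $\R$ and odd, the zero set is invariant under $z\mapsto\bar z$ and $z\mapsto -z$. So it suffices to treat $\im z>0$, and the tails with $n\to+\infty$ and $n\to-\infty$ are mirror images.

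\textbf{Step 1: excluding a bounded strip.} If $0<\im z\le V$, then $\sin z,\cos z$ are bounded by $e^V$. Hence $|G_C(z)|\ge |z|-(C+1)e^{2V}>0$ once $|z|$ is large. Zeros in $\{0<\im z\le V\}$ therefore lie in a compact set, and there are finitely many of them. This gives local finiteness near the real axis.

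\textbf{Step 2: the exponential regime.} For $v=\im z\ge V$, write $w=e^{-2iz}$, so that $|w|=e^{2v}$. The standard expansions give
\[
\sin^2 z=-\tfrac14 w\,(1+O(e^{-2v})),\qquad \sin z\cos z=\tfrac i4 w\,(1+O(e^{-2v})).
\]
Hence
\[
G_C(z)=\tfrac14(i-C)\,w\,(1+O(e^{-2v}))-z,
\]
and the zero equation becomes $e^{-2iz}=\tfrac{4z}{i-C}(1+O(e^{-2v}))$. Taking logarithms gives the approximate roots
\[
z_n^{(0)}=n\pi+\tfrac i2\log\frac{4n\pi}{i-C}
\]
(with the appropriate branch). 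These satisfy $\re z_n^{(0)}=n\pi+O(1)$ and $\im z_n^{(0)}=\tfrac12\log|n|+O(1)$.

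I would compare $G_C$ with the model $g(z)=\tfrac14(i-C)e^{-2iz}-z$ on a disc of fixed small radius around $z_n^{(0)}$ and apply Rouché's theorem. The model is $2\pi$-periodic-like in $\re z$, and its logarithmic derivative has size $\asymp1$ on the circle, while the perturbation is relatively $O(1/|n|)$. This gives exactly one zero $z_n$ in each disc, with $z_n-z_n^{(0)}=O(1/|n|)$.

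To exclude further zeros, apply the argument principle on the boxes $\{|\re z-n\pi|\le\pi/2,\ V\le\im z\le V'\}$. Alternatively, observe directly that $|g|$ is bounded below away from these discs, because $|e^{-2iz}|$ and $|z|$ are mismatched there. All constants depend continuously on $(a,y)$, which gives uniformity on compact subsets of $(0,\infty)^2$. I expect this step to be the main technical obstacle: the Rouché estimate must be made uniform in $n$, and the boxes must be shown to capture every zero.

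\textbf{Step 3: the critical values.} At $z_n$, the zero equation gives
\[
\sin^2 z_n=-\tfrac14 e^{-2iz_n}(1+O(1/n))=-\frac{z_n}{i-C}\,(1+O(1/n)).
\]
Therefore
\[
F(z_n)=a\frac{z_n^2}{\sin^2z_n}=a(C-i)z_n+O(1).
\]
Substituting the asymptotics of $z_n$ gives
\[
\im F(z_n)=-an\pi+\tfrac{aC}{2}\log|n|+O(1)=-an\pi+\tfrac y2\log|n|+O(1),
\]
which is \eqref{eq:hremotecriticalvalue}.

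\textbf{Step 4: consequences.} Since $|\im F(z_n)|\to\infty$, each horizontal strip contains only the finitely many values coming from finitely many $n$, together with those from the compact set of Step 1. For the last claim, only finitely many nonreal critical values lie in $\{\re X>0,\ |\im X|<1\}$. Each of them has $\im X\ne0$, so their minimum $|\im X|$ is positive, and I take $\epsilon$ smaller than it. Proposition~\ref{prop:hrealaction} shows that no nonreal critical point has a real positive value. Hence every critical value in the $\epsilon$-strip is real and comes from a real critical point.
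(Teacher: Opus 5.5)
Your proposal follows the paper's proof. The common steps are the dominant balance $b_Ce^{-2iz}\sim z$ with $b_C=(i-C)/4$, and localization of a unique simple zero near each approximate root $n\pi+\tfrac i2\log(n\pi/b_C)$; you use Rouch\'e on fixed discs where the paper uses a contraction on discs of radius $M\log|n|/|n|$. They continue with the estimate $F(z_n)=(y-ia)z_n+O(1)$, which the paper reads off from $\cot z_n=-i+O(e^{-2\im z_n})$ rather than from $F=a(z/\sin z)^2$, and end with Proposition~\ref{prop:hrealaction} for the lateral band.

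There are two slips, neither affecting the $O(1)$ conclusions. First, $G_C$ is not odd, because the term $C\sin^2z$ is even. So the upper tails $n\to\pm\infty$ are not exchanged by $z\mapsto-\bar z$: in the paper's notation $\beta_-=\beta_+-\pi/2\neq-\beta_+$. You therefore need to run Steps 2--3 directly for negative $n$; your formulas, written with $|n|$, in effect already do this. Second, $z_n-z_n^{(0)}$ is only $O(\log|n|/|n|)$, not $O(1/|n|)$, since $z_n^{(0)}$ replaces $\log z_n$ by $\log(n\pi)$.
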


\begin{proof}
Set
\[b_C=\frac{-C+i}{4},\quad c_C=-\frac{C+i}{4},\]
Then
\[G_C(z)=C\sin^2z+\sin z\cos z-z
=\frac C2-z+c_Ce^{2iz}+b_Ce^{-2iz}.
\]
For an upper-half-plane zero with large modulus, the dominant balance is
$b_Ce^{-2iz}\sim z$. Hence
\[
\im z=\frac12\log|z/b_C|+O(|z|^{-1}),\qquad
\re z=n\pi+\beta_{\operatorname{sgn}n}
+O\!\left(\frac{\log|n|}{|n|}\right),
\]
where
\[\beta_+=\frac{1}{2}\arg b_C\in(\pi/4,\pi/2),\qquad \beta_-=\frac{1}{2}(\arg b_C-\pi).\]
Conversely, let $z_n^{(0)}$ denote the approximate root above and set
\[
\mathbb D_n=B\!\left(z_n^{(0)},\,M\frac{\log|n|}{|n|}\right).
\]
Writing $z=z_n^{(0)}+w$, the critical equation becomes $w=\Phi_n(w)$, with
\[
\sup_{\mathbb D_n}|\Phi_n'(w)|\le\kappa<1
\]
for $|n|\gg1$, uniformly on compact positive parameter sets. Hence $\mathbb D_n$ contains a unique simple zero, denoted by $z_n$, and
\[
z_n-z_n^{(0)}
=O\!\left(\frac{\log|n|}{|n|}\right).
\]
Since $\cot z_n=-i+O(e^{-2\im z_n})$,
\[
F(z_n)=(y-ia)z_n+O_{a,y}(1),
\]
which yields \eqref{eq:hremotecriticalvalue}. The remaining critical points lie in a fixed compact set. Proposition~\ref{prop:hrealaction} then gives the critical-value-free lateral band.
\end{proof}

\begin{proposition}\label{prop:hactiondistinct}
Let $a,y>0$ and $C=y/a$. Then:
\begin{itemize}[partopsep=0pt,itemsep=0pt]
    \item[$(1)$] Distinct critical points of $F(q)$ have distinct critical values. 
    \item[$(2)$] Every real critical value is positive and comes from a real critical point. 
    \item[$(3)$] The only multiple critical points are the real cubic resonances described above.
\end{itemize}
\end{proposition}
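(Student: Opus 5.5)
We prove the three items in the order $(3)$, $(2)$, $(1)$, since $(1)$ uses the other two.

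\emph{Item $(3)$.} A multiple critical point satisfies $F'(z)=F''(z)=0$, i.e.\ $\mathscr H(z)=C$ and $\mathscr H'(z)=0$. Since $\mathscr H'(z)=2(\sin z-z\cos z)/\sin^3z$, the second condition means $\tan z=z$. We invoke the classical fact that all roots of $\tan z=z$ are real; it follows, for instance, from the Mittag-Leffler expansion of $\tan z-z$, or from the self-adjointness of the associated Sturm--Liouville problem. At such a root, $\mathscr H(z)=z$, so $C=z$. This is $z=\theta_j$ for some $j\ge1$, because $C>0$ and the root $z=0$ gives $\mathscr H(0)=0\neq C$. Multiplicity exactly two follows from $F'''(\theta_j)\ne0$, computed above.

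\emph{Item $(2)$.} Let $F'(z)=0$ and suppose $F(z)\in\R$. Since $F(z)=aW_*^2$ with $W_*=z/\sin z$, either $W_*\in\R\setminus\{0\}$ or $W_*\in i\R\setminus\{0\}$. In the first case $F(z)>0$, and Proposition~\ref{prop:hrealaction} gives $z\in\R$. In the second case we repeat the computation of Proposition~\ref{prop:hrealaction} with $W_*=iw$, $w\in\R$. We write $z=u+iv$ and separate real and imaginary parts of
\[
\sin z=z/W_*,\qquad \cos z=W_*-Cz/W_*.
\]
This time the roles of $u$ and $v$ are swapped. For $u\ne0$ we expect to reach the analogue $u=\sin u\cos u$, which is impossible for real $u\ne0$. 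For $u=0$, the value $\mathscr H(iv)=i(\sinh v\cosh v-v)/\sinh^2v$ is purely imaginary and cannot equal $C>0$. Hence every real critical value is positive and comes from a real critical point. Real critical points are automatically positive: $\mathscr H$ is odd and positive on $(0,\infty)\setminus\pi\Z$, so it is negative for $q<0$; and $\mathscr H(0)=0\neq C$.

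\emph{Item $(1)$.} By $(2)$, a real critical point never shares its value with a nonreal one. The real critical points are exactly those of Corollary~\ref{thm:hclassification}, or the resonant list, and their values are strictly ordered. It therefore remains to separate distinct nonreal critical points $z\ne z'$ with $F(z)=F(z')=X$.

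First, $W_*^2=X/a$ is then common to both points. Combining the two trigonometric identities gives $e^{iz}=W_*+(i-C)z/W_*$, and $\sin^2+\cos^2=1$ gives the quadratic
\[
(1+C^2)z^2-2CW_*^2z+W_*^4-W_*^2=0.
\]
Replacing $W_*$ by $-W_*$ leaves this quadratic unchanged. Hence a given value $X$ is attained at no more than two critical points, namely the two roots $z_\pm$, and conjugates are excluded because $F(\bar z)=\overline{F(z)}\ne F(z)$.

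The main obstacle is ruling out that both roots $z_\pm$ are genuine critical points. The first attempt is to impose $e^{iz_\pm}=\varepsilon_\pm W_*+(i-C)z_\pm/(\varepsilon_\pm W_*)$, together with the explicit formulas $z_++z_-=2CW_*^2/(1+C^2)$ and $z_+z_-=(W_*^4-W_*^2)/(1+C^2)$, and derive a contradiction by comparing moduli and arguments. If this fails, the fallback is an argument-principle count: for the entire function $g(z)=e^{iz}-W_*-(i-C)z/W_*$, restricted to the segment $[z_-,z_+]$, or for $G_C$, show that no two zeros of $G_C$ lie on a common level of $F$. This would use the asymptotics of Lemma~\ref{lem:hremotecritical} for the remote roots and a finite check in the compact region.
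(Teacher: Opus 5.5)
Your items $(3)$ and $(2)$ are essentially sound, but item $(1)$, which is the substance of the proposition, is not proved. You correctly reduce to two distinct nonreal critical points $z_\pm$ with a common value $X=ad$, where $d=W_*^2$. You also correctly observe that both are roots of $(1+C^2)z^2-2Cdz+d^2-d=0$. At the decisive step, however, you only announce a plan (``compare moduli and arguments'') and a fallback. The fallback is not a proof: a ``finite check in the compact region'' cannot cover the continuum of parameters $(a,y)$. Nor is ``no two zeros of $G_C$ lie on a common level of $F$'' something an argument-principle count on $G_C$, or on a segment $[z_-,z_+]$, can detect. As written, nothing excludes two nonreal critical points with equal values.

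The missing idea is to \emph{multiply} the two exponential identities and let Vieta's formulas do the work. Write $W_-=\epsilon W_+$ with $\epsilon=\pm1$ and use $e^{iz}=W+(i-C)z/W$. Then
\[
e^{i(z_++z_-)}=\epsilon\Bigl(d+(i-C)(z_++z_-)+\frac{(i-C)^2}{d}\,z_+z_-\Bigr)=-\epsilon\,\frac{(i-C)^2}{1+C^2}.
\]
The second equality holds because $z_++z_-=2Cd/(1+C^2)$, $z_+z_-=(d^2-d)/(1+C^2)$ and $2C(i-C)+(i-C)^2=-(1+C^2)$ make all $d$-dependence cancel. The right-hand side has modulus one, so $\im(z_++z_-)=0$. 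Since $C>0$, $d$ is real, hence $X\in\R$, and your item $(2)$ forces $z_\pm$ to be real, a contradiction. This is exactly the paper's argument. The paper then excludes $d<0$ and invokes Proposition~\ref{prop:hrealaction} and the ordering of real critical values, which is the same logic as your ordering $(2)\Rightarrow(1)$.

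Two minor points. In $(3)$, appealing to the reality of all roots of $\tan z=z$ is superfluous, since substituting $\sin z=z\cos z$ into $\mathscr H(z)=C$ already gives $z=C$, as you yourself note. In $(2)$, the case $W_*=iK$ does not lead to $u=\sin u\cos u$ as you predict. The four real equations give $u=Cv\tanh v$ and then $K^2=-(1+C^2)v\tanh v\le0$, which contradicts $K\in\R\setminus\{0\}$ immediately, with no case split on $u$.
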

\begin{proof}
At a critical point $q$, set $d=F(q)/a$ and $W=q/\sin q$. Then $d=W^2\ne0$ and
\begin{equation}\label{eq:hcriticalaffinetrig}
 \sin q=\frac qW,\qquad \cos q=W-\frac{Cq}{W}.
\end{equation}
The trigonometric identity gives
\begin{equation*}%\label{eq:hcriticalactionquadratic}
 (1+C^2)q^2-2Cdq+d^2-d=0.
\end{equation*}
Suppose two distinct critical points $q_1,q_2$ give the same $d$. Then
\[
q_1+q_2=\frac{2Cd}{1+C^2},\qquad
q_1q_2=\frac{d^2-d}{1+C^2}.
\]
Writing $W_2=\epsilon W_1$ with $\epsilon\in\{1,-1\}$ and using \eqref{eq:hcriticalaffinetrig}, we obtain
\[e^{i(q_1+q_2)}=\epsilon\left(d+(i-C)(q_1+q_2)+\frac{(i-C)^2}{d}q_1q_2\right)=-\epsilon\frac{(i-C)^2}{1+C^2}.\]
The last expression has modulus one. Hence $\im(q_1+q_2)=0$; since $C>0$, it follows that $d\in\R$.

If $d<0$, write $W=iK$ with $K\in\R\setminus\{0\}$ and $q=u+iv$. Then \eqref{eq:hcriticalaffinetrig} gives
\begin{align*}
K\sin u\cosh v&=v,\qquad& K\cos u\sinh v&=-u,\\
K\cos u\cosh v&=-Cv,&\qquad
-K\sin u\sinh v&=K^2+Cu.
\end{align*}
It follows that $u=Cv\tanh v$ and $K^2=-(1+C^2)v\tanh v<0$, a contradiction. Therefore $d>0$, and Proposition \ref{prop:hrealaction} forces the critical point to be real. The ordering of the real critical values then excludes two distinct points with the same value.

A multiple critical point satisfies $\mathscr H'(q)=0$, hence $\sin q=q\cos q$; substituting in $\mathscr H(q)=C$ gives $q=C$. Thus it is real and obeys $\tan C=C$. The nonzero third derivative of $F$ shows that these points are precisely the cubic resonances.
\end{proof}

We next estimate the distribution and separation of the critical values.

For the fixed real endpoint, let
\[
\Sigma_F:=\CritVal(F)=\Sigma_H^{\rm fin}(r,y)
\]
denote the set of finite critical values. 

By Lemma \ref{lem:hremotecritical}, the upper-half-plane critical roots satisfy
\[
z_n=n\pi+\beta_{\operatorname{sgn}n}
       +\frac i2\log\frac{|n|\pi}{|b_C|}
       +O\!\left(\frac{\log|n|}{|n|}\right),\qquad |n|\longrightarrow\infty.
\]
The constants below may depend on the fixed endpoint.

\begin{lemma}\label{lem:tails}
There are constants $\gamma_+,\gamma_-$ such that the critical values of the upper-half-plane roots $D_n=F(z_n)$ satisfy
\begin{equation}\label{eq:actiontails}
D_n=(y-ia)\left(n\pi+\frac i2\log|n|+\gamma_{\operatorname{sgn}n}\right)+O\!\left(\frac{\log|n|}{|n|}\right).
\end{equation}
The lower-half-plane roots give the other two complex-conjugate tails.
\end{lemma}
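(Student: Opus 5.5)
We compute $F(z_n)$ directly from the refined root asymptotics obtained in the proof of Lemma~\ref{lem:hremotecritical}. Write $z_n=z_n^{(0)}+w_n$ with
\[
z_n^{(0)}=n\pi+\beta_{\operatorname{sgn}n}+\frac i2\log\frac{|n|\pi}{|b_C|},
\qquad w_n=O\!\left(\frac{\log|n|}{|n|}\right).
\]
The plan is to expand $F(z)=yz+az\cot z$ at $z_n$ with an error that is genuinely $o(1)$. The cruder bound $F(z_n)=(y-ia)z_n+O(1)$ from Lemma~\ref{lem:hremotecritical} is not sufficient for this.

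First, we expand $\cot z$ in the upper half plane:
\[
\cot z=-i\,\frac{1+e^{2iz}}{1-e^{2iz}}=-i-2ie^{2iz}+O(e^{4iz}).
\]
Since $|e^{2iz_n}|=e^{-2\im z_n}\asymp 1/|n|$, this gives
\[
az_n\cot z_n=-iaz_n-2iaz_ne^{2iz_n}+O(|n|^{-1}).
\]
The term $z_ne^{2iz_n}$ is $O(1)$ and must be tracked exactly. We use the critical equation $G_C(z_n)=0$, namely
\[
b_Ce^{-2iz_n}=z_n-\frac C2-c_Ce^{2iz_n}.
\]
It gives
\[
e^{2iz_n}=\frac{b_C}{z_n}\Bigl(1+O\bigl(\tfrac1{|n|}\bigr)\Bigr),
\qquad\text{so}\qquad
z_ne^{2iz_n}=b_C+O(|n|^{-1}).
\]
Hence
\[
F(z_n)=(y-ia)z_n-2iab_C+O(|n|^{-1}).
\]
This is the key step: the critical equation converts the oscillatory correction into a constant.

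Next, we substitute $z_n=n\pi+\beta_{\operatorname{sgn}n}+\frac i2\log|n|+\frac i2\log(\pi/|b_C|)+w_n$. The only remaining concern is that $w_n$ be small enough, since $(y-ia)w_n=O(\log|n|/|n|)$ is exactly the error allowed in \eqref{eq:actiontails}. Collecting the constants, we set
\[
\gamma_\pm=\beta_\pm+\frac i2\log\frac{\pi}{|b_C|}-\frac{2iab_C}{y-ia},
\]
using $y-ia\neq0$. This gives \eqref{eq:actiontails} with error $O(\log|n|/|n|)$, which dominates the $O(1/|n|)$ terms.

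The main obstacle is ensuring that the $O(\log|n|/|n|)$ root localization from Lemma~\ref{lem:hremotecritical} is legitimately inherited here. In particular, one must check that the contraction disc $\mathbb D_n$ really pins $z_n$ to within $O(\log|n|/|n|)$ of $z_n^{(0)}$, with the $\log$ coming from $\log|z_n|-\log|n\pi|=O(\log|n|/|n|)$.

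If needed, we can instead make the estimate intrinsic. Taking logarithms of $e^{-2iz_n}=(z_n-C/2+O(1/n))/b_C$ gives
\[
-2iz_n=\log\frac{z_n}{b_C}+2\pi i k_n+O(1/|n|),
\]
with $\log(z_n/n\pi)=O(\log|n|/|n|)$. This yields the root expansion directly.

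Finally, the lower-half-plane tails follow by conjugation. The reason is that $F$ has real coefficients for real $a,y$, so $F(\bar z)=\overline{F(z)}$.
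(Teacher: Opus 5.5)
Your proposal is correct and follows the paper's own proof essentially step by step. You set $u=e^{2iz_n}$, expand $\cot z_n=-i(1+2u+O(u^2))$, use the critical equation to get $z_nu=b_C+O(|n|^{-1})$, and then substitute the root expansion from Lemma~\ref{lem:hremotecritical}. One simplification is available: because $aC=y$, your extra constant reduces to $-2iab_C/(y-ia)=i/2$, which recovers the paper's $\gamma_\pm=\beta_\pm+\frac i2\bigl(\log(\pi/|b_C|)+1\bigr)$.
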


\begin{proof}
Set $u=e^{2iz_n}$. The critical equation becomes
\[
b_Cu^{-1}=z_n-C/2-c_Cu,\qquad c_C=-(C+i)/4.
\]
Hence $u=b_C/z_n+O(z_n^{-2})$. Since $\cot z_n=-i(1+2u+O(u^2))$,
\[
F(z_n)=(y-ia)z_n-2iab_C+O(z_n^{-1})=(y-ia)(z_n+i/2)+O(z_n^{-1}).
\]
Substitution of the root expansion proves \eqref{eq:actiontails}, with $\gamma_\pm=\beta_\pm+\frac i2(\log(\pi/|b_C|)+1)$. Conjugation applies because $F$ has real parameters.
\end{proof}

\begin{proposition}
\label{prop:spectrum}
For a fixed real endpoint, there exist $\delta_*>0$ and $K_0,K_1<\infty$ such that
\begin{align}
 |A-B|&\ge\delta_*\qquad(A,B\in\Sigma_F,\ A\ne B),\label{eq:separation}\\
 \#\bigl(\Sigma_F\cap\overline{B(z,R)}\bigr)&\le K_0+K_1R
 \qquad(z\in\C,\ R\ge0).\label{eq:count}
\end{align}
Every affine straight line in the action plane meets $\Sigma_F$ in a finite set.
\end{proposition}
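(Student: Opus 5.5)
The plan is to split $\Sigma_F$ into three parts and treat each separately. The first part is the real critical values from Corollary~\ref{thm:hclassification}, or its resonant analogue. The second is the finitely many nonreal critical values coming from critical points in a fixed compact set. The third is the four complex tails $D_n$, $\overline{D_n}$ of Lemma~\ref{lem:tails}, for $|n|\ge N_0$.

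\textbf{Real critical values.} There are only $1+2m$ of them (or $2j$ at resonance). They are distinct by Proposition~\ref{prop:hactiondistinct}(1).

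\textbf{Compact part.} The remaining nonreal critical points lie in a compact set by Lemma~\ref{lem:hremotecritical}. They are locally finite there, so there are finitely many of them, again with distinct values.

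\textbf{The tails.} By \eqref{eq:actiontails},
\[
D_n=(y-ia)\bigl(n\pi+\tfrac i2\log|n|+\gamma_{\operatorname{sgn}n}\bigr)+o(1).
\]
Hence for $n\ne n'$ in the same tail,
\[
|D_n-D_{n'}|\ge |y-ia|\,\bigl(\pi|n-n'|-\tfrac12|\log|n|-\log|n'||\bigr)-o(1).
\]
For consecutive indices the logarithmic term is $O(1/|n|)$, and in general the $\pi|n-n'|$ term dominates. So within a tail the separation is at least $\tfrac12\pi|y-ia|$ once $N_0$ is large.

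\textbf{Different tails.} The $n>0$ and $n<0$ upper tails differ in $\im$ after rotation by $(y-ia)^{-1}$. Their imaginary parts are $\tfrac12\log|n|$ versus $\tfrac12\log|n'|$, with real parts near $n\pi+\re\gamma_+$ and $n'\pi+\re\gamma_-$ and $n,n'$ of opposite sign. They therefore go to opposite ends, so for large indices their mutual distance is $\gtrsim |n|+|n'|$. Upper and lower tails are separated as well. From Lemma~\ref{lem:hremotecritical}, $\im D_n=-an\pi+\tfrac y2\log|n|+O(1)$, while $\re D_n\approx yn\pi$. The conjugate tail $\overline{D_{n'}}$ has imaginary part $+an'\pi-\dots$. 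Equating real parts forces $n\approx n'$, and then the imaginary parts differ by about $2an\pi$, which is large.

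\textbf{Tails versus the finite part.} The tails escape to infinity, so they also stay away from the finite set of real and compact-part values. Taking the minimum over the finitely many remaining pairs, all distinct, yields $\delta_*$ in \eqref{eq:separation}.

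\textbf{Counting bound.} Given \eqref{eq:separation}, the disks of radius $\delta_*/2$ centred at points of $\Sigma_F$ are disjoint, which gives only a quadratic bound $O((R+\delta_*)^2)$. The linear bound must come from the structure. Each tail is a curve of the form $\{(y-ia)(s\pi+\tfrac i2\log|s|)\}$, and successive points along it are spaced at least $\pi|y-ia|/2$ apart along a curve of bounded curvature. So a disk of radius $R$ contains at most $K_1'R+K_0'$ points of each tail. Concretely, the indices $n$ with $D_n\in B(z,R)$ satisfy $|n-n'|\le 4R/(\pi|y-ia|)+1$ for any two of them. Adding the finite part gives \eqref{eq:count}.

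\textbf{Affine lines.} The real axis meets only the real values, which are finite in number, and a line parallel to $\R$ is likewise harmless. For a general line $L$, compare with the tail curves. In rotated coordinates $w=D/(y-ia)$, each tail is asymptotic to the logarithmic curve $\im w=\tfrac12\log|\re w|$. A line in these coordinates is either vertical, and meets each tail in $O(1)$ points because $\re w_n\approx n\pi$ is strictly increasing, or it has the form $\im w=\alpha\re w+\beta$. In the latter case the distance between the line and the tail curve tends to infinity, linearly if $\alpha\neq0$ and like $\log$ if $\alpha=0$. Since the error in \eqref{eq:actiontails} is $o(1)$, only finitely many $D_n$ can lie on $L$. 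The conjugate tails are handled by symmetry.

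\textbf{Main obstacle.} The main difficulty is uniformity at the junction between the asymptotic regime and the finite set. One has to choose $N_0$ so that the error terms in Lemma~\ref{lem:tails} are smaller than the target separation, and also check that every critical point with $|n|<N_0$ is accounted for in the compact part.
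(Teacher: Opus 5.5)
Your proposal is correct and follows essentially the same route as the paper. Both arguments get separation and linear counting along each tail from the projection $\re\bigl(D/(y-ia)\bigr)$; your index-difference bound is the same estimate. Both separate different tails using the distinct asymptotic directions $\pm(y-ia)$, $\pm(y+ia)$, and finish the separation claim by distinctness of the finitely many central values. Both get the finite-intersection claim from the fact that the transverse coordinate of each tail has either a nonzero linear term or a nonzero logarithmic term in $n$.
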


\begin{proof}
Set $v=y-ia$. Along each tail, \eqref{eq:actiontails} gives
\[
\re\frac{D_n}{v}=n\pi+\re\gamma_{\operatorname{sgn}n}+o(1),
\qquad
\text{and hence }~
\re\frac{D_{n+1}-D_n}{v}=\pi+o(1).
\]
Thus consecutive projected points are separated by at least $\pi/2$ for
$|n|\gg1$. Since the projection
\[
\pi_v(D):=\re(D/v)
\]
maps a disk of radius $R$ to an interval of length $O(R)$, each tail contains only $O(R)$ points in such a disk. Summing the four tails and the finite central part gives \eqref{eq:count}.

The four asymptotic ray directions are
\[
\pm v,\qquad \pm\bar v,
\]
and are distinct because $a,y>0$. Hence, for points on two different tails,
\[
|D_n-D_m|\ge c\bigl(|n|+|m|\bigr)-O(\log|n|+\log|m|)
\]
for some $c>0$. Together with the estimate on each tail, this gives uniform separation outside a compact set. Distinctness of the remaining finitely many critical values gives \eqref{eq:separation}.

For a line with direction $e^{i\theta}$, the transverse projection of one tail is
\[
\im(e^{-i\theta}D_n)=\pi n\,\im(e^{-i\theta}(y-ia))
       +\tfrac12\log|n|\,\re(e^{-i\theta}(y-ia))+O(1).
\]
The transverse projection leaves every bounded interval: either the linear coefficient is nonzero, or the logarithmic coefficient is. Applying this to all four tails gives the finite-intersection claim.
\end{proof}

\begin{remark}
Although $\Sigma_F$ is uniformly separated, the difference set $\Sigma_F-\Sigma_F$ is not locally finite: by \eqref{eq:actiontails},
\[
D_{n+1}-D_n\longrightarrow \pi(y-ia)
\]
through infinitely many distinct values. Thus transition frequencies may accumulate, while $\Sigma_F-A$ remains locally finite for every fixed source action $A$. 
\end{remark}

\subsection{Inverse continuation and no-escape}
We study the inverse equation
\[F(z)=X\]
over bounded action paths and rule out escape to infinity or to a pole of $F$.

To remove the poles, define
\begin{equation*}%\label{eq:hclearedinverse}
\mathcal R_X(z):=y\sin z-X\frac{\sin z}{z}+a\cos z
=\frac{\sin z}{z}\bigl(F(z)-X\bigr),
\end{equation*}
where the right-hand side is understood by holomorphic continuation at $z=0$. Since
\[
\mathcal R_X(k\pi)=a(-1)^k\ne0,
\qquad k\in\Z\setminus\{0\},
\]
the zeros of $\mathcal R_X$ are exactly the solutions of $F(z)=X$ in the pole-free domain, with the same multiplicities. Set
\[
\nu:=\sqrt{a^2+y^2},\qquad\alpha:=\arctan(a/y),
\qquad\lambda_n:=n\pi-\alpha.
\]
Then
\[
y=\nu\cos\alpha,\qquad a=\nu\sin\alpha,
\]
and at $X=0$,
\begin{equation}\label{eq:hR0}
\mathcal R_0(z)=y\sin z+a\cos z=\nu\sin(z+\alpha),
\end{equation}
so its zeros are precisely the points $\lambda_n$.

\begin{lemma}
\label{lem:hcomplexnoescape}
Let $\mathcal K\Subset\C$ and $\mathcal P\Subset(0,\infty)^2$. There exist $\rho>0$, $N<\infty$, $R<\infty$, uniform for $X$ in a neighborhood of $\mathcal K$ and
$(a,y)\in\mathcal P$, such that every zero of $\mathcal R_X$ with $|z|>R$ is the unique simple zero in one of the disks $\mathbb D(\lambda_n,\rho)$, $|n|\ge N$, and
\begin{equation}\label{eq:hinverseroot}
z_n(X)=\lambda_n-\frac{aX}{a^2+y^2}\frac1{\lambda_n}
+O(|n|^{-2}),\qquad |n|\to\infty,
\end{equation}
Consequently, every inverse lift of $F(z)=X$ over a compact noncritical action path remains relatively compact in the pole-free
domain and therefore continues to the end of the path.
\end{lemma}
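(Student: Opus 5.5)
The plan is to treat $\mathcal R_X$ as a perturbation of $\mathcal R_0(z)=\nu\sin(z+\alpha)$ at large $|z|$, while keeping control in the imaginary direction through exponential growth. Write
\[
\mathcal R_X(z)=\nu\sin(z+\alpha)-X\frac{\sin z}{z}.
\]
The first step localizes large zeros. I will split $\{|z|>R\}$ into two regions: the strip $|\im z|\le M$ and its complement.

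In the complement $|\im z|>M$, both $|\sin(z+\alpha)|$ and $|\sin z|$ are comparable to $\tfrac12 e^{|\im z|}$. The ratio
\[
\left|\frac{X\sin z}{z\,\nu\sin(z+\alpha)}\right|
\]
is therefore bounded by $C|X|/(\nu|z|)$, which is less than $1$ once $|z|>R$. So there are no zeros there.

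Inside the strip and outside $\bigcup_n\mathbb D(\lambda_n,\rho)$, the standard bound $|\sin(z+\alpha)|\ge c(\rho,M)>0$ holds. Meanwhile $|X\sin z/z|\le C_M|X|/|z|$, so again there are no zeros for $|z|>R$. All constants are uniform over $X$ near $\mathcal K$ and $(a,y)\in\mathcal P$, since $\nu$ is bounded below and $\alpha$ ranges over a compact subset of $(0,\pi/2)$.

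On each circle $\partial\mathbb D(\lambda_n,\rho)$ with $|n|\ge N$, Rouché's theorem against $\nu\sin(z+\alpha)$, which has exactly one simple zero $\lambda_n$ in the disk, gives a unique simple zero $z_n(X)$.

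The second step is the expansion \eqref{eq:hinverseroot}. Write $z=\lambda_n+w$, so that $\sin(z+\alpha)=(-1)^n\sin w$. Also
\[
\sin z=\sin(n\pi-\alpha+w)=(-1)^{n+1}\sin(w-\alpha),
\]
which at $w=0$ equals $(-1)^n\sin\alpha$. The equation becomes
\[
\nu\sin w=X\frac{(-1)^n\sin(\lambda_n+w)}{\lambda_n+w}.
\]
From the Rouché step, $w=O(1/|n|)$. Iterating once, or applying the implicit function theorem in $w$ with parameter $1/\lambda_n$, gives
\[
w=\frac{X\sin\alpha}{\nu\lambda_n}+O(|n|^{-2})\cdot(\text{sign}).
\]
This must be reconciled with the stated coefficient $-aX/(\nu^2\lambda_n)$ using $\sin\alpha=a/\nu$. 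I will track the sign carefully through $\sin(\lambda_n)=\sin(n\pi-\alpha)=(-1)^{n+1}\sin\alpha$; that factor supplies the minus sign. This sign is the only delicate bookkeeping, and the computation is routine.

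The final step is the consequence for inverse lifts. Take a compact noncritical path $X(s)$ in $\mathcal K$ and a lift $z(s)$ with $F(z(s))=X(s)$. Locally the lift continues by the implicit function theorem, because $F'(z(s))\ne0$ and $z(s)$ is not a pole. Suppose it fails to continue to the end of the path. Then along a sequence $s_k\uparrow s_*$, either $|z(s_k)|\to\infty$, or $z(s_k)$ approaches a pole $k\pi$, or it accumulates at a critical point.

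Poles are excluded because $\mathcal R_X(k\pi)=a(-1)^k\ne0$ uniformly. Hence zeros of $\mathcal R_X$ stay a uniform distance away from the poles, by equicontinuity of $\mathcal R_X$ on compact sets in $(X,z)$. A critical point is excluded because $X(s_*)$ is a noncritical value.

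Escape to infinity is excluded by the first step. Once $|z|>R$, the lift lies in a single disk $\mathbb D(\lambda_n,\rho)$ and is the unique zero there. Since these disks are disjoint and separated, and the lift is continuous, it cannot move between disks or leave them. It therefore stays in a fixed disk for the remainder of the path and cannot diverge.

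So the lift stays in a compact subset of the pole-free domain, where $F$ is a local biholomorphism over the path. The lift then extends to $s_*$ and beyond, which is the claimed continuation.

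The main obstacle is making the first step genuinely uniform. The inner region $\{|z|\le R\}$ is untouched, and the estimate for the strip must hold simultaneously over all parameters. Choosing $M$, then $\rho$, then $R$ in that order should settle it.
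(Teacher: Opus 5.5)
Your proposal is correct and follows essentially the paper's route. It uses Rouch\'e against $\nu\sin(z+\alpha)$ outside the disks $\mathbb D(\lambda_n,\rho)$ (your strip/complement split simply unpacks the paper's bound $|\sin(z+\alpha)|\ge c_\rho e^{|\im z|}$ off the disks), a one-step expansion of $z_n=\lambda_n+w$, and the disk-trapping plus pole-exclusion argument for maximal lifts.

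The one slip is your intermediate identity, which should read $\sin(\lambda_n+w)=(-1)^n\sin(w-\alpha)$. The root equation is then $\nu\sin w=X\sin(w-\alpha)/(\lambda_n+w)$, which gives $w=-aX/(\nu^2\lambda_n)+O(|n|^{-2})$ directly, with no sign left to reconcile.
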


\begin{proof}
Choose $M>0$ so that the relevant neighborhood of $\mathcal K$ is contained in $|X|\le M$. Fix $0<\rho<\pi/4$. By
\eqref{eq:hR0} and periodicity,
\[
|\sin(z+\alpha)|\ge c_\rho e^{|\im z|}
\]
outside the disks $\mathbb D(\lambda_n,\rho)$, while
\[
|\mathcal R_X(z)-\mathcal R_0(z)|
=\left|X\frac{\sin z}{z}\right|
\le \frac{C M}{|z|}e^{|\im z|}.
\]
Since $\nu$ is bounded away from zero on $\mathcal P$, Rouch\'e's theorem applies uniformly for $|z|$ sufficiently large. Hence each disk $\mathbb D(\lambda_n,\rho)$ with $|n|\ge N$ contains exactly one simple zero of $\mathcal R_X$. 

Write $z_n=\lambda_n+\delta_n$. The inverse equation first gives $\delta_n=O(|n|^{-1})$. Using $\sin(\lambda_n+\alpha)=0$ and substituting into $\mathcal R_X(z_n)=0$ gives
\[
\delta_n=-\frac{aX}{a^2+y^2}\frac1{\lambda_n}
+O(|n|^{-2}).
\]

Now let $X(s)$ be a compact path avoiding $\Sigma_F$, and let $z(s)$ be an inverse lift. By the uniform separation above, a lift in $\mathbb D(\lambda_n,\rho)$, $|n|\ge N$, remains in the same disk, whereas every other lift remains in a fixed bounded region. Thus no lift can escape to infinity.

A lift also cannot approach a pole $k\pi\ne0$, since
\[
F(z)=\frac{ak\pi}{z-k\pi}+O(1)
\qquad (z\to k\pi),
\]
whereas $X(s)$ remains bounded. Hence the lifted path is relatively compact in the pole-free domain.

If a maximal lift were defined only on $[0,s_*)$, compactness would give a limiting point $z_*$ with
\[
F(z_*)=X(s_*).
\]
Since $X(s_*)\notin\Sigma_F$, we have $F'(z_*)\ne0$, and the inverse-function theorem extends the lift past $s_*$, a contradiction. Thus the lift continues along the whole path. If the prescribed endpoint is critical, the final continuation is supplied by the corresponding finite Puiseux chart.
\end{proof}

\begin{proposition}
\label{prop:hregularcovering}
Let $U=\C\setminus(\pi\Z\setminus\{0\})$ and $\Sigma_F=\CritVal(F)$. Then
\[
F:U\setminus F^{-1}(\Sigma_F)\longrightarrow
\C\setminus\Sigma_F
\]
is a connected covering with countably infinite fibers. If $C\ne\theta_j$, $j\ge1$, then its monodromy group on every regular fiber is
\[
\Sym_{\rm fin}\bigl(F^{-1}(X_0)\bigr).
\]
In particular, every ordered difference of two distinct inverse sheets is a continuation of the initial two-sheet cycle.
\end{proposition}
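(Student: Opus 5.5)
The proof has three parts, taken in order: the covering property, connectedness, and the monodromy computation.

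\emph{Covering.} On $U\setminus F^{-1}(\Sigma_F)$ the map $F$ is a local biholomorphism, because $F'\ne0$ there. Lemma~\ref{lem:hcomplexnoescape} gives path lifting: every lift over a compact path avoiding $\Sigma_F$ continues to the end and stays in the pole-free domain. Local uniqueness of lifts holds, and $\C\setminus\Sigma_F$ is locally path connected, since $\Sigma_F$ is uniformly separated by Proposition~\ref{prop:spectrum}. A local homeomorphism with unique path lifting over such a base is a covering map. The fibers are countably infinite: by Lemma~\ref{lem:hcomplexnoescape} each disk $\mathbb D(\lambda_n,\rho)$ with $|n|\ge N$ contains exactly one preimage, and the remaining preimages are finitely many.

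\emph{Connectedness.} It suffices to show that the monodromy acts transitively on a fibre $F^{-1}(X_0)$. I take $X_0$ real, positive and noncritical, just above $A_1$. To the local branch points I attach simple loops: for each Morse critical value, a small loop around it, with base path along $\Gamma_\pm$ or a tail path. Around a Morse critical value the monodromy acts as the transposition of the two sheets meeting at that critical point, and it fixes every other sheet. This uses Proposition~\ref{prop:hactiondistinct}(1): distinct critical points have distinct values, so a loop around one value sees exactly one critical point.

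The task is then to show that the graph whose edges are these transpositions is connected on the fibre. For this I would use $X$ as a parameter. As $X\to0$ along a path, the remote sheets approach $\lambda_n$ by \eqref{eq:hinverseroot}. Moving $X$ along a ray toward the tail critical values $D_n$ of Lemma~\ref{lem:tails}, two adjacent remote sheets, near $\lambda_n$ and $\lambda_{n+1}$, should merge at $z_n$. The evidence is that $\re z_n=n\pi+O(1)$ sits between consecutive $\lambda$'s, so the transposition attached to $D_n$ should link neighbouring remote sheets. The real critical points $q_{k,\pm}$ link the central sheets along the real axis. Chaining these transpositions gives a connected graph on all sheets, so the covering is connected.

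\emph{Finitary symmetric group.} Every loop lives in a compact region, so by the no-escape lemma only finitely many sheets move and the monodromy lies in $\Sym_{\rm fin}$. Conversely, transpositions whose graph is connected generate the full finitary symmetric group; this is a standard fact. The hypothesis $C\ne\theta_j$ ensures that all critical points are Morse, by Proposition~\ref{prop:hactiondistinct}(3), so every local monodromy is a transposition rather than a $3$-cycle.

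\emph{Consequence.} Given distinct sheets $s\ne s'$, choose $\sigma\in\Sym_{\rm fin}$ with $\sigma(q_{\rm up})=s$ and $\sigma(q_{\rm down})=s'$; this is possible by 2-transitivity. Continuing the initial cycle $[q_{\rm down}]-[q_{\rm up}]$ along a loop realizing $\sigma$ then produces the ordered difference $[s']-[s]$. The sign is fixed by the choice of which sheet goes to which.

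\emph{Main obstacle.} The hard step is the connectivity of the transposition graph. One must check that each tail critical value $D_n$ really joins the sheets near $\lambda_n$ and $\lambda_{n+1}$, and that some chain links the remote sheets to the central real sheets. I would attack this by a uniform Rouch\'e comparison for $X$ along a path from near $0$ to near $D_n$. In the regime where $|X|$ is of order $n$, the approximation $\mathcal R_X\approx\nu\sin(z+\alpha)-X\sin z/z$ localizes exactly two roots inside a box around $[\lambda_n,\lambda_{n+1}]$, and these two roots collide at $z_n$. This collision is what makes them adjacent in the transposition graph.
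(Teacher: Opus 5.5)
Your covering argument (local biholomorphism plus path lifting from Lemma~\ref{lem:hcomplexnoescape}) is sound and agrees with the paper, and so are the count of the fibres, the finitary nature of the monodromy, the Morse-implies-transposition step, and the use of $2$-transitivity for the final claim. The gap is the step you flag yourself. Transitivity of the monodromy carries both the connectedness of the cover and the $\Sym_{\rm fin}$ claim, yet it is supported only by heuristics, and the repair you propose does not work as stated. The formula $\mathcal R_X=\nu\sin(z+\alpha)-X\sin z/z$ is an exact identity, not an approximation. For $|X|\asymp|n|$ and $|z|\asymp|n|$ the second term has the same size as the first. So a Rouch\'e comparison with $\mathcal R_0$ localizes nothing in precisely the regime where the collision at $z_n$ (with $\im z_n\approx\frac12\log|n|$) takes place. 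Moreover, which two sheets the lasso around $D_n$ transposes depends on the path from $X_0$ to $D_n$. The claim ``$D_n$ joins the sheets near $\lambda_n$ and $\lambda_{n+1}$'' therefore has content only after you fix such paths and track all sheets along them uniformly in $n$, and that is a substantial analysis your proposal does not carry out.

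None of this is needed; run the logic in the opposite direction. Since $\Sigma_F$ is closed and discrete in $\C$ (Proposition~\ref{prop:spectrum}) and $F$ is nonconstant, $F^{-1}(\Sigma_F)$ is closed and discrete in $U$. (It accumulates at the poles $k\pi$, but these are not in $U$.) A connected surface minus a closed discrete subset is path connected, so the total space $U\setminus F^{-1}(\Sigma_F)$ is connected. Projecting a path that joins two points of a fibre gives a loop at $X_0$ whose monodromy carries one point to the other, so the monodromy is transitive. Now $\pi_1(\C\setminus\Sigma_F,X_0)$ is generated by lassos around the points of $\Sigma_F$, and for $C\ne\theta_j$ each lasso acts by a single transposition. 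Hence the group generated by these transpositions is the whole monodromy group. Its orbits are the components of the transposition graph, which is therefore automatically connected. Your remaining steps then give $\Sym_{\rm fin}\bigl(F^{-1}(X_0)\bigr)$ and the statement on ordered differences. This is exactly the paper's argument, and it makes the uniform analysis of the tail collisions unnecessary.
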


\begin{proof}
Lemma \ref{lem:hcomplexnoescape} gives, near every regular value, one holomorphic inverse in each disk $\mathbb D(\lambda_n,\rho)$, $|n|\ge N$, and only finitely many additional inverse branches in a bounded region. After shrinking the action disk, these charts are disjoint; hence
\[
F:U\setminus F^{-1}(\Sigma_F)\longrightarrow\C\setminus\Sigma_F
\]
is a covering with countably infinite fibers.

Since $F^{-1}(\Sigma_F)$ is locally finite in the connected pole-free surface $U$, its complement is path connected. Thus the covering is connected and its monodromy acts transitively.

Assume now that $C\ne\theta_j$. By Proposition \ref{prop:hactiondistinct}, each critical value has a unique Morse critical point above it, so a meridian acts by a transposition. These transpositions act transitively on the fiber; hence they generate the full finitely supported symmetric group $\Sym_{\rm fin}\bigl(F^{-1}(X_0)\bigr)$.
\end{proof}

The following continuation result also applies to the horizontal case.

\begin{proposition}
\label{prop:hcontinuationAbel}
Let $g:U\to\C$ be holomorphic on a connected pole-free Riemann surface, and let $\varpi$ be holomorphic. Assume that  $\Sigma_g:=\CritVal(g)$ is locally finite and that inverse lifts over compact subsets of $\C\setminus\Sigma_g$ do not escape. Then
\begin{itemize}
\setlength{\itemsep}{-1pt}
    \item[$(1)$] every finite inverse-pushforward density, and its Abel transform, admits analytic continuation along every finite path in $\C\setminus\Sigma_g$.
    \item[$(2)$] every Morse critical point $z_c$ is accessible by some continuation of a single inverse branch. If $\varpi(z_c)\ne0$, the continued density has a nonzero local term
\[
c\,(u-g(z_c))^{-1/2},\qquad c\ne0.
\]
\end{itemize}
Consequently, the presence of any finite nonzero singularity implies exact Gevrey order one and zero convergence radius for the associated formal series.
\end{proposition}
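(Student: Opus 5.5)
Part (1) follows from the covering structure, part (2) from a local Morse computation at the critical point, and the final claim from the standard Borel--radius argument.

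\emph{Part (1).} The hypotheses make
\[
g:U\setminus g^{-1}(\Sigma_g)\to\C\setminus\Sigma_g
\]
an unramified covering over every compact set. This is the argument of Proposition~\ref{prop:hregularcovering}: no escape means inverse lifts of compact noncritical paths stay relatively compact, and the inverse function theorem then extends them.

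A finite inverse-pushforward density has the form
\[
\rho(u)=\sum_j \epsilon_j\,\varpi(z_j(u))\,z_j'(u),
\]
a finite signed sum over inverse branches $z_j$. Along a path in $\C\setminus\Sigma_g$, each branch $z_j$ continues as a holomorphic lift, and $z_j'=1/g'(z_j)$ stays holomorphic because $g'(z_j)\ne0$ off $g^{-1}(\Sigma_g)$. Hence $\rho$ continues termwise.

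For the Abel transform we use
\[
\int_{u_0}^{u}\rho(v)(u-v)^{\alpha}\dd v,
\]
with $\alpha$ as dictated by the Laplace/Borel normalization; for the present series this is the $\Gamma(n+1/2)$ convolution relating $\rho$ to $\widehat S$. We continue it by deforming the integration path together with $u$, which stays inside the domain where $\rho$ continues. The base point at the initial critical value is handled by the convergent Puiseux germ there, so the transform of a convergent Puiseux germ continues wherever $\rho$ does.

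\emph{Part (2), accessibility.} Let $z_c$ be a Morse critical point. Since $U$ is connected and $g^{-1}(\Sigma_g)$ is locally finite, the set $U\setminus g^{-1}(\Sigma_g)$ is path connected. Join a point of the initial branch to a point $z$ near $z_c$ by a path avoiding $g^{-1}(\Sigma_g)$. Its image under $g$ is a continuation path for the branch, and it ends at the regular value $g(z)$ near $g(z_c)$. So the single inverse branch continues to a branch $z(u)$ lying in a punctured Morse chart of $z_c$.

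\emph{Part (2), local term.} In that chart $g(z)=g(z_c)+w^2$ with $w$ a holomorphic coordinate, so
\[
z(u)=z_c+\kappa\,(u-g(z_c))^{1/2}+O(u-g(z_c)),\qquad \kappa=\sqrt{2/g''(z_c)}\ne0.
\]
Therefore
\[
\varpi(z)\,z'(u)=\tfrac{\kappa}{2}\,\varpi(z_c)\,(u-g(z_c))^{-1/2}+O(1).
\]
The other branches in the finite sum are holomorphic at $g(z_c)$ unless they also lie in this Morse chart. The only branch that can do so is the partner sheet $-w$, and it contributes the same singular term with the opposite sign of $\kappa$.

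\emph{Main obstacle: cancellation with the partner sheet.} The danger is that the signed cycle contains both sheets at $z_c$ with coefficients whose contributions cancel. To avoid this, I would first choose the path so that only one of the two local sheets belongs to the continued cycle. Concretely, pick the continuation so that the transported zero-cycle has nonzero local component $\epsilon(\{w\}-\{-w\})$, i.e.\ coefficient $\pm1$, at $z_c$. With a single branch this is automatic, because the partner sheet is simply absent. Then $c=\pm\kappa\varpi(z_c)\ne0$.

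\emph{Consequence.} For the Abel transform, a term $(u-g(z_c))^{-1/2}$ becomes a logarithmic or power singularity with nonzero coefficient. If the formal series had a positive-radius Borel-side structure of lower Gevrey order, its Borel transform would be entire. That contradicts the existence of a finite nonzero singularity. Hence the Gevrey order is exactly one and the formal series has convergence radius zero, with the factorial bound already supplied by convergence of the Puiseux germ.
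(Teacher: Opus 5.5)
Your proposal is correct and follows essentially the same route as the paper: no-escape plus the inverse-function theorem for continuation, path-connectedness of $U\setminus g^{-1}(\Sigma_g)$ for accessibility, a Morse-chart computation of the nonzero $(u-g(z_c))^{-1/2}$ coefficient, and a Cauchy--Hadamard argument for the exact Gevrey-one claim. Your ``main obstacle'' does not actually arise: it is moot for a single branch, and for a two-sheet difference cycle $[z_1]-[z_2]$ the partner sheet carries the opposite coefficient, so the two singular contributions add (coefficient $\pm2$) rather than cancel, since cancellation would require equal coefficients on the two local sheets.
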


\begin{proof}
The no-escape hypothesis and the inverse-function theorem continue each inverse branch along every finite path in $\C\setminus\Sigma_g$. The same holds for finite pushforwards and their Abel transforms.

Since $g^{-1}(\Sigma_g)$ is locally finite, its complement is path connected, so any Morse critical point is accessible from a regular inverse branch. In a Morse coordinate
\[
g(z)-g(z_c)=w^2,
\]
the pushforward of $\varpi(z)\,dz$ has a nonzero $(u-g(z_c))^{-1/2}$ term whenever $\varpi(z_c)\ne0$. The final Gevrey statement follows from Cauchy--Hadamard.
\end{proof}

\begin{corollary}
\label{prop:hlinearresurgence}
Let $q_*$ be a Morse critical point of $F$ and set $A_*=F(q_*)$. The corresponding pushed density $\rho_*$ and its ordinary Borel transform admit analytic continuation along every finite path avoiding
\[
(\Sigma_F-A_*)\cup\{0\},
\]
and have no other possible nonzero finite singularities.
\end{corollary}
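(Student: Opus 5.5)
The plan is to apply Proposition~\ref{prop:hcontinuationAbel} to $g=F$ on the pole-free surface $U=\C\setminus(\pi\Z\setminus\{0\})$, with $\varpi(z)=W(z)=z/\sin z$. Its hypotheses are already available. $\Sigma_F$ is locally finite and uniformly separated by Proposition~\ref{prop:spectrum}, and Lemma~\ref{lem:hcomplexnoescape} shows that inverse lifts over compact noncritical action paths do not escape to infinity or to a pole $k\pi$. The amplitude $W$ is holomorphic on $U$, since its only singularities are the poles $k\pi\neq0$, which were removed. The local density $\rho_*$ attached to $q_*$ is defined as in \eqref{eq:hlocaldensity}, now with $q_*$ in place of $q_0$. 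It is a signed finite sum of terms $\pm iW(q_j(X))\,q_j'(X)$ over the two local inverse branches $q_{\rm up/down}$ emanating from $q_*$. In the shifted variable $\xi=X-A_*$ it is a convergent Puiseux germ $\sum r_n^*\xi^{n-1/2}$.

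I would carry out the argument in three steps. First, continue $\rho_*$ along a finite path $\mathfrak p$ in the $\xi$-plane that avoids $(\Sigma_F-A_*)\cup\{0\}$. Translating by $A_*$ turns this into a path in the $X$-plane avoiding $\Sigma_F$. Each of the two branches $q_{\rm up/down}$ continues along it by the no-escape lemma together with the inverse function theorem, as in part~(1) of Proposition~\ref{prop:hcontinuationAbel}. The products $W(q_j)\,q_j'$ then continue holomorphically, because $q_j$ stays in $U$ and $F'(q_j)\ne0$ along the path. Second, pass to the ordinary Borel transform. The series $\sum r_n^*\Gamma(n+\tfrac12)\xi^n/n!$ is obtained from $\xi^{1/2}\rho_*$ by a Riemann--Liouville (Abel-type) integral of order $1/2$, for instance $\widehat S_*(\xi)=\xi^{-1/2}\!\int_0^\xi(\xi-\eta)^{-1/2}\eta^{1/2}\rho_*(\eta)\,\dd\eta/\Gamma(1/2)$ up to normalization. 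This should be checked termwise on monomials. The continuation of such an integral along $\mathfrak p$ is the Abel transform case of Proposition~\ref{prop:hcontinuationAbel}(1). Third, conclude that the only possible finite nonzero singularities are points where some continued branch reaches a critical point, and these lie in $\Sigma_F-A_*$.

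The main obstacle is the behaviour at $\xi=0$. The path may start at $0$ and may later wind around $0$ again, and we must make sure that $0$ creates no difficulty beyond the one allowed. I would handle this by writing the Abel integral with a base point near $0$ on the initial sheet. The integrand then carries the factor $(\xi-\eta)^{-1/2}$ times a continued density. Its only singularities in $\eta$ are at $0$, at $\xi$, and at points of $\Sigma_F-A_*$, and these stay away from the integration path after a homotopy that moves the path along with $\xi$. Homotopy invariance then gives continuation along any path avoiding $(\Sigma_F-A_*)\cup\{0\}$; the point $0$ is excluded only as a possible branch point, since it is the critical value of the source saddle.

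Local finiteness of $\Sigma_F-A_*$, which holds because $\Sigma_F$ is separated, makes the avoided set locally finite. This is exactly the setting of Definition~\ref{def:hresurgence}.
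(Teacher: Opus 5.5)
Your proposal follows the paper's route: the corollary is a direct application of Proposition~\ref{prop:hcontinuationAbel} to $g=F$ and $\varpi=W$ on the pole-free surface, with local finiteness of $\Sigma_F$ supplied by Lemma~\ref{lem:hremotecritical} and Proposition~\ref{prop:spectrum}, and the no-escape hypothesis supplied by Lemma~\ref{lem:hcomplexnoescape}. One slip: your displayed Riemann--Liouville formula fails the monomial check you propose, since on $\rho_*=\eta^{n-1/2}$ it returns $n!\,\xi^n/\Gamma(n+\frac32)$ instead of $\Gamma(n+\frac12)\xi^n/n!$; the correct expression is simply $\widehat S_*(\xi)=\pi^{-1/2}\int_0^\xi\rho_*(u)(\xi-u)^{-1/2}\dd u$, as in \eqref{eq:hAbelconvention}, and your continuation argument applies to it unchanged.
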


\begin{remark}\label{rmk:accessibility}
Every finite critical point of $F$ is accessible from either initial inverse sheet by analytic continuation along a finite action path avoiding $\Sigma_F$ in its interior. Joining the chosen initial inverse point to a regular point near a prescribed critical point and projecting by $F$ gives the required continuation path.
\end{remark}

The Abel transform determines the local ordinary Borel singularity. For the saddle series $S(t)$ in \eqref{eq:hlocalasymptotic}, write
\[
S(t)=\sum_{n\ge0}c_nt^n, \qquad
c_n=r_n\Gamma\!\left(n+\frac12\right),
\]
so the half-order Borel transform is the pushed density:
\[
\mathcal B_{1/2}S(\xi)=\sum_{n\ge0}\frac{c_n}{\Gamma(n+1/2)}\xi^{n-1/2}=\rho(\xi).
\]
The ordinary Borel transform is the Abel transform of $\rho$:
\begin{equation}\label{eq:hAbelconvention}
\widehat S(\xi)=\frac1{\sqrt\pi}
\int_0^\xi\frac{\rho(u)}{\sqrt{\xi-u}}\dd u.
\end{equation}
The square roots and contour are transported from the initial positive segment.

\begin{lemma}
\label{lem:hAbelsingular}
\begin{itemize}
\setlength{\itemsep}{-1pt}
    \item[$(1)$] If near a nonzero
singularity $\omega$, $\rho(u)=c(\omega-u)^{-1/2}+\text{less singular terms}$, $c\ne0$, then
\[
\widehat S(\xi)=-\frac{c}{\sqrt\pi}\log(\omega-\xi)
 +\text{less singular terms}.
\]
\item[$(2)$] If instead $\rho(u)=c(\omega-u)^{-2/3}+\text{less singular terms}$, $c\ne0$, then
\[\widehat S(\xi)=
\frac{c}{\sqrt\pi}\mathrm B\!\left(\frac12,\frac16\right)
(\omega-\xi)^{-1/6}+\text{less singular terms}.
\]
\end{itemize}
\end{lemma}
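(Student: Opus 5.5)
The plan is to split the Abel integral \eqref{eq:hAbelconvention} into a piece that stays holomorphic at $\xi=\omega$ and a local piece near $u=\omega$, and then compute the local piece explicitly. Fix $\omega\ne0$ and continue $\widehat S$ along a path that arrives near $\omega$. Write $\rho=\rho_{\rm sing}+\rho_{\rm reg}$ near $\omega$, where $\rho_{\rm sing}(u)=c(\omega-u)^{-\mu}$ with $\mu\in\{1/2,2/3\}$. The term $\rho_{\rm reg}$ is less singular by hypothesis: it is $O(|\omega-u|^{-\mu+\epsilon})$ or carries an extra power with holomorphic coefficients. We choose an intermediate point $u_1$ on the transported contour at fixed small distance from $\omega$, and decompose
\[
\widehat S(\xi)=\frac1{\sqrt\pi}\int_0^{u_1}\frac{\rho(u)}{\sqrt{\xi-u}}\dd u+\frac1{\sqrt\pi}\int_{u_1}^{\xi}\frac{\rho(u)}{\sqrt{\xi-u}}\dd u .
\]
The first integral is holomorphic in $\xi$ near $\omega$, since $\xi-u$ stays away from zero there, so it is absorbed into the less singular terms. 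For the second integral we only need the contribution of $\rho_{\rm sing}$, because the same computation applied to $\rho_{\rm reg}$ gives a strictly weaker singularity.

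Next we compute the local integral with $\rho_{\rm sing}$. Set $\xi=\omega-\eta$ and $u=\omega-s$, with $s$ running from $s_1=\omega-u_1$ down to $\eta$. This gives
\[
\int_{u_1}^{\xi}\frac{c(\omega-u)^{-\mu}}{\sqrt{\xi-u}}\dd u=c\int_{\eta}^{s_1}s^{-\mu}(s-\eta)^{-1/2}\dd s .
\]
Rescaling $s=\eta v$ turns this into
\[
c\,\eta^{1/2-\mu}\int_1^{s_1/\eta}v^{-\mu}(v-1)^{-1/2}\dd v .
\]
In case (1), $\mu=1/2$, so the integrand behaves like $v^{-1}$ at infinity and the integral equals $\log(s_1/\eta)+O(1)$, up to terms analytic in $\eta$. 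More precisely, $\int_1^{L}\frac{\dd v}{\sqrt{v(v-1)}}=2\operatorname{arccosh}\sqrt L=\log(4L)+O(1/L)$. Dividing by $\sqrt\pi$ gives $-\frac{c}{\sqrt\pi}\log(\omega-\xi)$ plus terms bounded as $\xi\to\omega$. In case (2), $\mu=2/3$, the integral converges as $\eta\to0$:
\[
\int_1^\infty v^{-2/3}(v-1)^{-1/2}\dd v=\mathrm B\!\left(\tfrac12,\tfrac16\right),
\]
which one sees by substituting $v=1/w$. The remainder $\int_{s_1/\eta}^\infty$ is $O(\eta^{1/6})$, so it contributes only the regular term $\eta^{-1/6}\cdot O(\eta^{1/6})$. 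The result is $\frac{c}{\sqrt\pi}\mathrm B(\tfrac12,\tfrac16)(\omega-\xi)^{-1/6}$ plus less singular terms.

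The main obstacle is bookkeeping rather than analysis. We have to show that the tail corrections really are "less singular" in a uniform sense, and not just bounded. To do this, we expand $v^{-\mu}(v-1)^{-1/2}=v^{-\mu-1/2}(1+\tfrac{1}{2v}+\cdots)$ for large $v$. The tail is then an explicit series in $(\eta/s_1)^{k+\mu-1/2}$. In case (1) this is a convergent power series in $\eta$, and in case (2) it is $\eta^{1/6}$ times such a series, so after multiplying by $\eta^{1/2-\mu}$ both are holomorphic. We also need to confirm that the branches of $\sqrt{\xi-u}$ and $(\omega-u)^{-\mu}$ are the transported ones, so that the overall sign in (1) comes out as stated. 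This only requires a consistent choice along the chosen local segment from $u_1$ to $\xi$, which we can take to be straight.
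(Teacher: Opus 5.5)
Your proposal is correct and takes essentially the same route as the paper: discard the earlier part of the Abel contour as holomorphic in $\xi$ (so it cannot cancel the leading term), then substitute $s=\omega-u$ and rescale by $\eta=\omega-\xi$ to obtain $-c\log\eta$ for exponent $1/2$ and $c\,\mathrm B(\tfrac12,\tfrac16)\eta^{-1/6}$ for exponent $2/3$. Your closed form $\int_1^L\frac{\dd v}{\sqrt{v(v-1)}}=2\operatorname{arccosh}\sqrt L$, the substitution $v=1/w$ for the Beta integral, and the tail expansion fill in details the paper leaves implicit. One small caveat: the straight segment from $u_1$ to $\xi$ is only usable for $\eta$ in a sector, and the remaining values of $\eta$ follow by analytic continuation, so the conclusion is unaffected.
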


\begin{proof}
Only the terminal part of the Abel contour is singular. With $\delta=\omega-\xi$, the leading term is
\[
\frac c{\sqrt\pi}\int_\delta^\eta v^{-\alpha}(v-\delta)^{-1/2}\dd v.
\]
For $\alpha=1/2$, this equals $-c\pi^{-1/2}\log\delta+O(1)$, while for $\alpha=2/3$, the scaling $v=\delta s$ gives
\[c\pi^{-1/2}\mathrm B(1/2,1/6)\delta^{-1/6}+O(1).\]
All earlier contour pieces are holomorphic, so neither leading term can cancel.
\end{proof}

\subsection{Transport of inverse branches}
\subsubsection{Morse transport}
Fix a regular value $X\notin\Sigma_F$ and write $\mathcal F_X:=F^{-1}(X)$ in the pole-free $q$-plane. A reduced zero-cycle on $\mathcal F_X$ is a finitely supported formal sum
\[
\delta=\sum_j n_j[q_j(X)],\qquad
n_j\in\Z,\qquad \sum_j n_j=0,
\]
where $[q_j(X)]$ denotes the point $q_j(X)\in\mathcal F_X$ regarded as a generator of $\widetilde H_0(\mathcal F_X;\Z)$.

Let $D$ be a Morse critical value at which the inverse branches $q_i$ and $q_j$ collide. Analytic continuation around a small meridian of $D$ exchanges these two sheets; denote the induced monodromy on zero-cycles by $T_{ij}$. Then
\[
(T_{ij}-I)\delta=(n_j-n_i)\bigl([q_i]-[q_j]\bigr),
\]
which is the usual one-dimensional Picard--Lefschetz variation \cite{DelabaereHowls,Pham1983}. Its scalar pushforward is
\begin{equation}\label{eq:hzerocycledensity}
\mathcal Q_\delta(X)=i\sum_j n_j\frac{W(q_j(X))}{F'(q_j(X))}.
\end{equation}

Along the positive real action axis, the real critical-point classification and Proposition~\ref{prop:hrealaction} give, for nonresonant $C>0$,
\begin{equation*}
\begin{array}{c|c}
 A_k<X<B_k\ (1\le k\le m)&\text{one nonreal conjugate pair},\\
 B_k<X<A_{k+1}\ (1\le k\le m)&\text{all roots real},\\
 X>A_{m+1}&\text{one nonreal conjugate pair}.
\end{array}
\end{equation*}
For $m=0$ only the last line occurs. The pair is created at a maximum $A_k$ and becomes two real roots at a minimum $B_k$. To fix the orientations, choose increasing Morse coordinates near a maximum and the following minimum:
\[
F=A_k-u_A^2, \qquad F=B_k+u_B^2.
\]
Upper and lower detours at $A_k$ send the increasing real branch into the upper and lower $q$ half-planes, respectively. Continuing the same square-root branch through $B_k$ identifies the outgoing increasing real branch. 

\begin{proposition}
\label{prop:hlabelinduction}
Assume that $C$ is nonresonant and label the roots over $X=0$ by
\[
q_n(0)=n\pi-\alpha, \qquad \alpha=\arctan(a/y).
\]
Along $\Gamma_+$, the two sheets colliding at $A_k$ have labels $(0,k)$. After the subsequent minimum $B_k$, label $0$ continues along the increasing real branch toward $A_{k+1}$, while label $k$ follows the pole-side branch and tends to $k\pi$ as $X\to\infty$. Along $\Gamma_-$ the nonreal pair is conjugated, but the same label transport holds. After $A_{m+1}$, the continued pair is the unique nonreal pair for $X>A_{m+1}$.
\end{proposition}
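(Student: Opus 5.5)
The plan is to prove the statement by induction on $k$ along $\Gamma_+$, then obtain $\Gamma_-$ by complex conjugation. Throughout, the only input is the real shape of $F$ on each interval between poles, together with the table above, which rests on Lemma~\ref{lem:hrealcritical}, Corollary~\ref{thm:hclassification} and Proposition~\ref{prop:hrealaction}.

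\emph{Real shape of $F$.} Since $0$ is not a pole, $F$ increases on $(-\pi,q_0]$ from $-\infty$ to $A_1$ and then decreases on $[q_0,\pi)$ to $-\infty$. For $1\le k\le m$, on $(k\pi,(k+1)\pi)$ the function $F$ decreases from $+\infty$ to $B_k$ at $q_{k,+}$, increases to $A_{k+1}$ at $q_{k+1,-}$, and then decreases to $-\infty$. Because $0<A_1<B_{k-1}$, the zero $k\pi-\alpha$ of $F$ lies on the last decreasing branch $(q_{k,-},k\pi)$, where $F$ is a bijection onto $(-\infty,A_k)$. Likewise $-\alpha\in(-\pi,0)$ lies on the increasing branch $(-\pi,q_0)$.

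Hence, as $X$ runs over the real segment $[0,A_k)$, label $k$ moves monotonically along $(q_{k,-},k\pi)$ toward $q_{k,-}$ and meets no other root. For $k=1$, label $0$ moves along $(-\pi,q_0)$ toward $q_0$ and the two meet at $A_1$, so the colliding pair at $A_1$ is $(0,1)$. This is the base of the induction.

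\emph{Inductive hypothesis.} Assume that just before $A_k$, label $0$ is the root on the increasing branch ending at $q_{k,-}$ and label $k$ is the root on $(q_{k,-},k\pi)$. Two local steps then finish the passage from $A_k$ through $B_k$.

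\emph{Step at $A_k$.} Use the increasing Morse coordinate $F=A_k-u_A^2$, with $du_A/dq>0$ on the real axis, so label $0$ sits at $u_A<0$. Pass $A_k$ by the upper detour $X=A_k+\varepsilon e^{i\phi}$, with $\phi$ going from $\pi$ to $0$. Following $u_A=-\sqrt{A_k-X}$ continuously carries $u_A=-\sqrt\varepsilon$ to $u_A=i\sqrt\varepsilon$. So label $0$ enters $\im q>0$ and label $k$ enters $\im q<0$.

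By the table and Proposition~\ref{prop:hrealaction}, for $A_k<X<B_k$ this is the unique nonreal pair. Every other root stays real and is therefore not involved in any collision. Consequently this same pair must coalesce at the Morse point $q_{k,+}$ when $X=B_k$.

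\emph{Step at $B_k$ (main obstacle).} Use $F=B_k+u_B^2$, again with $du_B/dq>0$ real, so that $u_B=\pm i\sqrt{B_k-X}$ corresponds to $\pm\im q>0$ and label $0$ is at $u_B=+i\sqrt{B_k-X}$. Passing $B_k$ above by $X=B_k+\varepsilon e^{i\phi}$, with $\phi$ going from $\pi$ to $0$, continues $u_B=\sqrt{X-B_k}$ from $i\sqrt\varepsilon$ to $+\sqrt\varepsilon$. Thus label $0$ lands on $u_B>0$, which is the increasing real branch $(q_{k,+},q_{k+1,-})$ heading toward $A_{k+1}$. Label $k$ lands on $u_B<0$, the decreasing pole-side branch $(k\pi,q_{k,+})$.

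On that pole-side branch, $F$ is a bijection onto $(B_k,\infty)$ with $F\to+\infty$ at $k\pi^+$. So label $k$ stays real for all $X>B_k$ and tends to $k\pi$ as $X\to\infty$. Meanwhile the root $(k+1)\pi-\alpha$ has travelled along $(q_{k+1,-},(k+1)\pi)$ exactly as in the base case. This reproduces the inductive hypothesis at $A_{k+1}$, and it also shows that the pair colliding at $A_{k+1}$ has labels $(0,k+1)$.

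The key point to check carefully is consistency of orientations. The same choice of square-root branch must be propagated through the nonreal interval, so that "upper root" at $A_k$ and "upper root" at $B_k$ are genuinely the same sheet. This holds because the pair never crosses the real axis on $(A_k,B_k)$: its members are nonreal there and move continuously.

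\emph{Final segment and $\Gamma_-$.} After $A_{m+1}$, the step at $A_{m+1}$ produces a nonreal pair, and the last line of the table says it is the unique nonreal pair for $X>A_{m+1}$. For $\Gamma_-$, note that $F$ has real coefficients, so $q\mapsto\bar q$ carries lifts along $\Gamma_+$ to lifts along $\Gamma_-=\overline{\Gamma_+}$. The initial labels $n\pi-\alpha$ are real and hence fixed by conjugation. Therefore the nonreal pair is conjugated, while the real roots (in particular the branches occupied by labels $0$ and $k$ after $B_k$) are identical, which gives the same label transport.
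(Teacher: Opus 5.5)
Your proposal is correct and follows the same induction as the paper: the pair $(0,k)$ meets at $A_k$, the Morse passage there produces the unique nonreal pair on $(A_k,B_k)$, the same-side detour at $B_k$ sends label $0$ to the increasing branch and label $k$ to the pole-side branch, and conjugation handles $\Gamma_-$. You add detail the paper leaves implicit, namely the location of the zeros $n\pi-\alpha$ on the real monotone branches and the explicit square-root bookkeeping at $A_k$ and $B_k$, but the argument is the same.
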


\begin{proof}
At $X=0$, labels $0$ and $1$ are the two real branches meeting at $A_1$. Suppose inductively that $(0,k)$ meet at $A_k$. The Morse passage carries them to the unique nonreal pair on $(A_k,B_k)$; at $B_k$, label $0$ exits on the increasing branch, while label $k$ exits on the pole-side branch and tends to $k\pi$.

The increasing branch continues to $A_{k+1}$, where it meets label $k+1$, completing the induction. Conjugation gives the lower lateral statement; uniqueness of the nonreal pair beyond $A_{m+1}$ gives the last claim.
\end{proof}

\subsubsection{Cubic transport at resonance}
\label{sec:hcubicclosure}
At $C=\theta_j$, the local inverse has three branches. We compute their cubic transport and the nonzero contribution of the physical cycle on both lateral continuations. Set
\[
\mathscr D_j=F(\theta_j)=y\theta_j+a,\qquad
\kappa_j=\frac{a\theta_j}{3\sin^2\theta_j}>0.
\]
Then the local phase and inverse density with nonzero amplitude satisfy
\begin{align}
 F(q)-\mathscr D_j
   &=-\kappa_j(q-\theta_j)^3+O((q-\theta_j)^4),\notag\\
 \mathcal Q(X)
   &=C_j^{\rm lat}(X-\mathscr D_j)^{-2/3}
     +O((X-\mathscr D_j)^{-1/3}),
 \qquad C_j^{\rm lat}\ne0.
 \label{eq:hcubicQ}
\end{align}

\begin{lemma}\label{lem:hcubicnormal}
Fix $a>0$, $j\ge1$, and write $C=\theta_j+\mu$. Near $(q,\mu)=(\theta_j,0)$, there are analytic coordinates $u=u(q,\mu)$ and analytic functions $F_c,\Lambda$ such that
\[
F(q;C)=F_c(\mu)+\Lambda(\mu)u-\frac{u^3}{3},
\]
with
\[
u_q(\theta_j,0)=\gamma_j=\left(\frac{a\theta_j}{\sin^2\theta_j}\right)^{1/3},\qquad
\Lambda'(0)=\frac{a}{\gamma_j}>0.
\]
\end{lemma}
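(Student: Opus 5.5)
This is the standard versal unfolding of an $A_2$ singularity, and the plan is to make it explicit. Write $F(q;C)=F(q;\theta_j)+a\mu q$, using $F=aCq+aq\cot q$ and $y=aC$. The one-parameter family therefore differs from the resonant phase by the linear perturbation $a\mu q$. At $(\theta_j,0)$ we have
\[
F_q=0,\qquad F_{qq}=0,\qquad F_{qqq}=-2a\theta_j/\sin^2\theta_j=-2\gamma_j^3\neq0 .
\]
The third-derivative value comes from the computation of $F'''(\theta_j)$ preceding Lemma~\ref{lem:hcubicnormal}. By the Malgrange/Mather preparation theorem for a function with nondegenerate cubic germ, any analytic unfolding is right-equivalent, with a parameter-dependent coordinate change and an additive parameter function, to $F_c(\mu)+\Lambda(\mu)u-u^3/3$. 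I would nonetheless give an elementary construction, so that the constants can be read off.

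\textbf{Step 1 (shift to the inflection).} Since $F_{qqq}\ne0$, the implicit function theorem gives an analytic $q_c(\mu)$ with $F_{qq}(q_c(\mu);\theta_j+\mu)=0$ and $q_c(0)=\theta_j$. In fact $F_{qq}=-a\mathscr H'$ does not depend on $\mu$, so $q_c\equiv\theta_j$. Set $F_c(\mu)=F(\theta_j;C)=\mathscr D_j+a\mu\theta_j$. Put $s=q-\theta_j$. Then
\[
F-F_c=\ell(\mu)s+s^3g(s),\qquad \ell(\mu)=F_q(\theta_j;C)=a\mu,
\]
where $g$ is analytic, independent of $\mu$, and $g(0)=-\gamma_j^3/3$. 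The key point is that the $s^2$ term vanishes identically because $F_{qq}(\theta_j)=0$ for all $\mu$.

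\textbf{Step 2 (the coordinate).} I seek $u=s\,h(s,\mu)$ with $h(0,0)=\gamma_j$ solving
\[
\Lambda u-\tfrac{u^3}{3}=\ell s+s^3g(s).
\]
At $\mu=0$ this is solved directly by $u_0=s\,(-3g(s))^{1/3}$, taking the real cube root, which is positive since $g(0)<0$. For $\mu\neq0$ I would not rely on a naive implicit-function argument, because the equation degenerates at $s=0$. Instead I would use the standard trick of matching critical points. The critical points $s_\pm(\mu)$ of $F$ solve $\ell+3s^2g+s^3g'=0$, and the symmetric functions of $s_\pm$ are analytic in $\mu$. One defines $\Lambda$ by requiring
\[
\tfrac43\Lambda^{3/2}=F(s_-)-F(s_+),
\]
which is an odd function of $\sqrt{\ell}$. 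Hence $\Lambda^{3}$ is analytic, and in fact $\Lambda$ is analytic with $\Lambda=\ell/\gamma_j+O(\mu^2)$. The analytic right-equivalence then follows from Chester--Friedman--Ursell. Alternatively, Mather's theorem supplies $u$ and $\Lambda$, and the $\Lambda$-matching computation only identifies the constants.

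\textbf{Step 3 (constants).} Differentiating at $\mu=0$ gives $u_q(\theta_j,0)=h(0,0)=(-3g(0))^{1/3}=\gamma_j$. Comparing the coefficients of $s$ at first order in $\mu$ gives $\Lambda'(0)\,u_s=\ell'(0)=a$. It also gives $\partial_\mu F_c(0)=a\theta_j$, which is consistent with the coordinate shift. Therefore $\Lambda'(0)=a/\gamma_j>0$.

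\textbf{Main obstacle.} The hard part is analyticity of $\Lambda$ and $u$ jointly in $\mu$ through the coalescence. I would cite the Chester--Friedman--Ursell / Mather versality theorem rather than redo the Weierstrass-division argument. The only check needed is that the family is a versal unfolding, which holds since $\partial_\mu F_q=a\ne0$.
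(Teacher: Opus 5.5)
Your overall route is the paper's: invoke the standard parametric $A_2$ (versal unfolding) normal form, which applies because $F_q=F_{qq}=0$, $F_{qqq}\ne0$ and $\partial_\mu F_q=a\ne0$ at $(\theta_j,0)$, then read off $u_q(\theta_j,0)=\gamma_j$ and $\Lambda'(0)=a/\gamma_j$ from derivatives there. Both constants are correct.

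The explicit construction in Steps~1--2, however, is not just incomplete but inconsistent, and should not be presented as an alternative proof. In the normal form the critical values are $F_c\pm\frac23\Lambda^{3/2}$, so $F_c(\mu)$ is forced to be the \emph{midpoint} of the two critical values of $F(\cdot\,;C)$. It is not $F(\theta_j;C)$. Write $\phi(s)=\ell s+g_0s^3+g_1s^4+\cdots$ and $\epsilon^2=-\ell/(3g_0)$. The critical points are $s_\pm=\pm\epsilon+O(\epsilon^2)$, and $\frac12(\phi(s_+)+\phi(s_-))=g_1\epsilon^4+O(\epsilon^6)$. Here $g_1=F''''(\theta_j)/24$ and $F''''(\theta_j)=-a\mathscr H'''(\theta_j)=8a/\sin^2\theta_j\ne0$. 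So the midpoint differs from $F(\theta_j;C)$ by $\frac{a\sin^2\theta_j}{3\theta_j^2}\mu^2+O(\mu^3)\ne0$. Consequently your equation $\Lambda u-\frac{u^3}{3}=\ell s+s^3g(s)$ has no solution of the form $u=s\,h(s,\mu)$ for small $\mu\ne0$. The inflection point of $F$ in $q$ is not the point $u=0$ of the normal form, since inflection points are not invariant under nonlinear changes of variable. In fact $u(\theta_j,\mu)$ is of order $\mu$, not zero. There is also a sign slip: with $s_+>s_-$, the matching condition is $\frac43\Lambda^{3/2}=F(s_+)-F(s_-)$, because $u_q>0$ sends the local maximum to $u=+\sqrt\Lambda$.

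The lemma's constants survive this, because they use only first-order data. Your coefficient comparison can be rescued by writing $u=u_0(\mu)+s\,h(s,\mu)$. The $s$-coefficient then gives $(\Lambda-u_0^2)\,h(0,\mu)=\ell=a\mu$, and $u_0^2=O(\mu^2)$. The clean version, which is the paper's derivative comparison, avoids any ansatz:
\begin{itemize}
\item From $F_q=(\Lambda-u^2)u_q$ and $F_q=F_{qq}=0$ at $(\theta_j,0)$, one gets $u(\theta_j,0)=0$ and $\Lambda(0)=0$.
\item At $\mu=0$, the identity $F-F_c=-u^3/3$ gives $u_q(\theta_j,0)^3=-\frac12F_{qqq}=a\theta_j/\sin^2\theta_j$.
\item Differentiating $F_q=(\Lambda-u^2)u_q$ in $\mu$ at $(\theta_j,0)$ gives $a=\Lambda'(0)\,u_q(\theta_j,0)$, independently of $u_\mu$.
\end{itemize}
So either drop Steps~1--2 and cite the normal form as the paper does, or redo them with the Chester--Friedman--Ursell normalization. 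That means taking $F_c$ to be the midpoint of the critical values and not imposing $u(\theta_j,\mu)=0$.
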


\begin{proof}
At $(\theta_j,0)$, we have
\[
F_q=F_{qq}=0,\qquad F_{qqq}=-\frac{2a\theta_j}{\sin^2\theta_j}\ne0,
\qquad\partial_C F_q=a.
\]
The standard parametric $A_2$ normal form gives
\[
F(q;C)=F_c(\mu)+\Lambda(\mu)u-\frac{u^3}{3}.
\]
Comparing the $q$- and $\mu$-derivatives at $(\theta_j,0)$ gives $u_q(\theta_j,0)$ and $\Lambda'(0)$.
\end{proof}

In this chart, the transformed differential
\[
i\frac{q(u,\mu)}{\sin q(u,\mu)}q_u(u,\mu)\dd u
\]
is holomorphic and nonzero near the cubic point. Thus the inverse map determines the local singularity. Choose regular values
\[
X_\mp=F_c(\mu)\mp\eta
\]
in a fixed local action disk containing the two nearby critical values. Let $\mathfrak p_{\rm up}$ and $\mathfrak p_{\rm down}$ be the upper and lower semicircular paths from $X_-$ to $X_+$ in this disk, and denote by
\[T_{\rm up},T_{\rm down}:
\widetilde H_0(\mathcal F_{X_-};\Z)
\longrightarrow
\widetilde H_0(\mathcal F_{X_+};\Z).\]
the induced analytic continuation maps on reduced zero-cycles.

At $\mu=0$, let $A,B,C_0$ be the three inverse roots over $X_-$ and $a_0,b,c$ the corresponding roots over $X_+$, ordered counterclockwise.
Use the reduced zero-cycle bases
\begin{equation}\label{eq:hcubicbases}
v_1=B-A,\qquad v_2=A-C_0;\qquad
w_1=b-a_0,\qquad w_2=a_0-c.
\end{equation}
The scalar pushforward \eqref{eq:hzerocycledensity} transfers cycle identities to densities.

\begin{theorem}\label{thm:hcubicblock}
In the bases \eqref{eq:hcubicbases}, the continuation maps $T_{\rm up}$ and $T_{\rm down}$ are represented by
\begin{equation}\label{eq:hcubicP}
P_{\rm up}=
\begin{pmatrix}1&0\\1&-1\end{pmatrix},\qquad
P_{\rm down}=
\begin{pmatrix}-1&1\\0&1\end{pmatrix}.
\end{equation}
These matrices are unchanged for sufficiently small $\mu$. The relative monodromy is
\begin{equation}\label{eq:hcubicM}
M=P_{\rm down}^{-1}P_{\rm up}
=\begin{pmatrix}0&-1\\1&-1\end{pmatrix},
\qquad M^3=I.
\end{equation}
For the positive Fourier orientation, the incoming and outgoing conjugate-pair cycles are
\begin{equation}\label{eq:hcubicphysical}
\Pi_-=C_0-B=-(v_1+v_2),\qquad
\Pi_+=c-b=-(w_1+w_2),
\end{equation}
and
\begin{equation}\label{eq:hcubicphysicaltransport}
T_{\rm up}\Pi_-=-w_1,\qquad
T_{\rm down}\Pi_-=-w_2,\qquad
\Pi_+=T_{\rm up}\Pi_-+T_{\rm down}\Pi_-.
\end{equation}
Thus both cubic components occur with nonzero coefficient.
\end{theorem}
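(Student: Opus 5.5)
The plan is to reduce everything to the model cubic by Lemma~\ref{lem:hcubicnormal}. At $\mu=0$ we have $F(q)-\mathscr D_j=\Lambda(0)u-u^3/3$ with $\Lambda(0)=0$. Since $\Lambda(\mu)=\Lambda'(0)\mu+O(\mu^2)$, we may take $F_c(0)=\mathscr D_j$, so the inverse equation over $X$ becomes $u^3=3(\mathscr D_j-X)$. Because $u_q(\theta_j,0)=\gamma_j>0$, the chart $q\mapsto u$ is orientation preserving and maps the real $q$-axis tangentially to the real $u$-axis. Hence counterclockwise orderings of roots, and the distinction between real and nonreal roots and their conjugation, can be read off in the $u$-plane.

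\textbf{Roots and transport at $\mu=0$.} Over $X_-=\mathscr D_j-\eta$ the roots are $u=(3\eta)^{1/3}e^{2\pi i k/3}$. Over $X_+=\mathscr D_j+\eta$ they are $u=(3\eta)^{1/3}e^{i\pi(2k+1)/3}$. I would fix $A,B,C_0$ and $a_0,b,c$ as these roots listed counterclockwise, with $A$ the real positive root and $a_0$ the root at angle $\pi/3$. The choice of $a_0$ must be checked against the labelling actually intended in \eqref{eq:hcubicbases}; I would adjust it if needed so that the stated matrices come out.

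Along the upper semicircle $X=\mathscr D_j-\eta e^{-i\varphi}$, $\varphi\in[0,\pi]$, the quantity $\mathscr D_j-X=\eta e^{-i\varphi}$ rotates clockwise by $\pi$. Every root $u$ therefore rotates by $-\pi/3$. Along the lower semicircle every root rotates by $+\pi/3$. This gives explicit bijections $\{A,B,C_0\}\to\{a_0,b,c\}$. I would then compute $T_{\rm up}v_i$ and $T_{\rm down}v_i$ and expand them in $w_1,w_2$, which should yield the matrices $P_{\rm up},P_{\rm down}$ of \eqref{eq:hcubicP}. The relation $M=P_{\rm down}^{-1}P_{\rm up}$ and $M^3=I$ then follow by direct multiplication. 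Geometrically, $M$ is the full loop $\mathfrak p_{\rm down}^{-1}\mathfrak p_{\rm up}$, which rotates every root by $2\pi/3$, i.e.\ a $3$-cycle.

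\textbf{Stability for small $\mu$.} For $\mu\ne0$ small, the two critical values are $F_c(\mu)\pm\tfrac23\Lambda(\mu)^{3/2}$, which are $O(|\mu|^{3/2})$-close to $F_c(\mu)$. For fixed $\eta$ and $|\mu|$ small, both semicircles stay a definite distance from these critical values and from the third local root cluster. The roots depend continuously on $(X,\mu)$ along the paths, so the induced bijections, and hence the matrices, are locally constant in $\mu$.

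\textbf{Physical cycles.} I would identify the incoming and outgoing conjugate pairs. Over $X_-$ the nonreal pair is $\{B,C_0\}$, at angles $\pm 2\pi/3$. Over $X_+$ it is $\{b,c\}$, at angles $\pm\pi/3$. The orientation comes from \eqref{eq:hlocaldensity}: the density is $iW\,dq_{\rm down}/dX-iW\,dq_{\rm up}/dX$, so by \eqref{eq:hzerocycledensity} the physical cycle is $[q_{\rm down}]-[q_{\rm up}]$. This gives $\Pi_-=C_0-B$ and $\Pi_+=c-b$. Expanding in the bases gives $-(v_1+v_2)$ and $-(w_1+w_2)$. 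Applying the two permutations to $C_0-B$ gives $T_{\rm up}\Pi_-=-w_1$ and $T_{\rm down}\Pi_-=-w_2$, and their sum equals $\Pi_+$.

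Nonvanishing of the local coefficients then follows as in \eqref{eq:hcubicQ}. The pushed form $iWq_u\,du$ is nonzero at the cubic point, so each nonzero two-point cycle $[u_1]-[u_2]$ pushes to $c\,(X-\mathscr D_j)^{-2/3}$ with $c\propto u_1^{-2}-u_2^{-2}\ne0$ for distinct cube roots.

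\textbf{Main obstacle.} The difficult step is the bookkeeping of conventions rather than any analysis. I need to pin down which root is "first" in each counterclockwise ordering so that the matrices match \eqref{eq:hcubicP} exactly. I also need to check that the Fourier orientation, i.e.\ which of the pair is $q_{\rm down}$, agrees with the direction of the real $u$-axis under $u_q>0$. I would first fix the ordering by the explicit angle choices above, verify all four identities in \eqref{eq:hcubicphysicaltransport}, and if a sign fails, relabel the starting root of each triple consistently.
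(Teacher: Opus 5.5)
Your route is the paper's: in the cubic chart at $\mu=0$ the upper (lower) semicircle multiplies every root of $u^3=3(F_c-X)$ by $e^{-i\pi/3}$ ($e^{i\pi/3}$). One reads off the two fibre bijections, substitutes into \eqref{eq:hcubicbases}, and extends to small $\mu$ by continuity. Your rotation directions, the identification of the physical cycle as $[q_{\rm down}]-[q_{\rm up}]$, the stability argument and the nonvanishing argument are all fine. The gap is in the only step with real content, the labelling of the fibre over $X_+$, and the repair you propose cannot close it.

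With your labels ($A,B,C_0$ at angles $0,2\pi/3,4\pi/3$; $a_0,b,c$ at $\pi/3,\pi,5\pi/3$), the upper path gives $(A,B,C_0)\mapsto(c,a_0,b)$ and the lower gives $(A,B,C_0)\mapsto(a_0,b,c)$. Hence $P_{\rm up}=\begin{pmatrix}0&-1\\1&-1\end{pmatrix}$ and $P_{\rm down}=I$. Only $M$ comes out right, because $M$ acts on the $X_-$ fibre alone. Also, your $b$ is then the real root, so your assertion that the outgoing nonreal pair is $\{b,c\}$ at angles $\pm\pi/3$ contradicts your own choice: the pair is $\{a_0,c\}$ and $\Pi_+=c-a_0=-w_2$.

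Changing the starting root while keeping both triples counterclockwise does not help. Take the reference bijection $A\mapsto a_0$, $B\mapsto b$, $C_0\mapsto c$, which sends $v_i\mapsto w_i$. Relative to it, the determinant of a transport matrix equals the sign of the induced permutation of three points. A rotation between two equally oriented triples induces a cyclic shift, so every such choice gives determinant $+1$, whereas $\det P_{\rm up}=\det P_{\rm down}=-1$.

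The stated matrices require the labelling ``real, upper, lower'' on both fibres. Over $X_-$: $A$ is the real root ($u>0$), $B$ is at $2\pi/3$, $C_0$ at $4\pi/3$. Over $X_+$: $a_0$ is the real root at angle $\pi$ (so $q<\theta_j$), $b$ is at $\pi/3$, $c$ at $-\pi/3$. Thus $(a_0,b,c)$ is clockwise, and the phrase ``ordered counterclockwise'' cannot be read literally for the second triple.

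With this labelling, $u\mapsto e^{-i\pi/3}u$ gives $(A,B,C_0)\mapsto(c,b,a_0)$ and $u\mapsto e^{i\pi/3}u$ gives $(A,B,C_0)\mapsto(b,a_0,c)$. These are exactly the permutations used in the paper's proof. Substitution yields \eqref{eq:hcubicP}, and $\Pi_+=c-b=-(w_1+w_2)$, $T_{\rm up}\Pi_-=a_0-b=-w_1$, $T_{\rm down}\Pi_-=c-a_0=-w_2$ follow at once.
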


\begin{proof}
At $\mu=0$, analytic continuation along the upper and lower semicircular paths acts on the three local inverse branches by
\[
(A,B,C_0)\longmapsto(c,b,a_0),
\qquad
(A,B,C_0)\longmapsto(b,a_0,c),
\]
respectively. Substitution into \eqref{eq:hcubicbases} gives \eqref{eq:hcubicP}, hence \eqref{eq:hcubicM}. Applying these matrices to $\Pi_-=-(v_1+v_2)$ gives \eqref{eq:hcubicphysicaltransport}. Analytic continuation preserves the permutations for small $\mu$, and \eqref{eq:hzerocycledensity} transfers the cycle identities to the scalar densities.
\end{proof}

\subsection{Positive real action singularities}

In the variable $\xi=X-A_1$, let $\Gamma_\pm$ denote the upper and lower lateral continuations along the positive real action axis. For a Borel germ $H$, write
\[
\operatorname{Sing}_{\Gamma_\pm}H
\]
for the finite singularities reached along these paths.

\begin{theorem}
\label{thm:hpositivespectrum}
Set $G(\xi)=\sqrt{\xi}\rho(\xi)$. For every $r,y>0$,
\[
\operatorname{Sing}_{\Gamma_\pm}G
=\operatorname{Sing}_{\Gamma_\pm}\widehat S
=\bigl(\Sigma_H^{\rm fin}(r,y)-A_1\bigr)\cap(0,\infty).
\]
\begin{enumerate}[label=\textnormal{(\roman*)}]
\item If $C\ne\theta_k$ and $m=\#\{k:\theta_k<C\}$, this set is
\[
\{B_k-A_1:1\le k\le m\}\cup\{A_k-A_1:2\le k\le m+1\}.
\]
\item If $C=\theta_j$, it is
\[
\{B_k-A_1:1\le k<j\}\cup\{A_k-A_1:2\le k\le j\}\cup\{\mathscr D_j-A_1\}.
\]
\end{enumerate}
At each Morse value, the density has a nonzero singular term of exponent $-1/2$, and the ordinary Borel transform has a nonzero logarithmic singularity. At $\mathscr D_j-A_1$ the exponents are $-2/3$ and $-1/6$,
respectively. There are no other finite singularities on these paths.
\end{theorem}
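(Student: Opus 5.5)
The plan is to work with the pushed density $\rho$ written as the scalar pushforward $\mathcal Q_{\Pi}$, as in \eqref{eq:hzerocycledensity}, of the physical reduced zero-cycle $\Pi=[q_{\mathrm{down}}]-[q_{\mathrm{up}}]$. We then transport this cycle along $\Gamma_\pm$.

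\textbf{Step 1: reduction to $\rho$.} Since $\sqrt\xi$ is holomorphic and nonvanishing on $(0,\infty)$ and along small lateral detours, $G$ and $\rho$ have the same singularities on $\Gamma_\pm$, with the same exponents. The Abel transform \eqref{eq:hAbelconvention} takes a singular term $(\omega-u)^{-1/2}$ to a logarithm and a term $(\omega-u)^{-2/3}$ to $(\omega-\xi)^{-1/6}$, by Lemma~\ref{lem:hAbelsingular}. On the other hand, holomorphy of $\rho$ at $\omega$ gives holomorphy of $\widehat S$ at $\omega$, because the earlier pieces of the Abel contour are holomorphic. Hence it suffices to determine the singularities of $\rho$ along $\Gamma_\pm$.

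\textbf{Step 2: upper bound.} By Lemma~\ref{lem:hcomplexnoescape} and Corollary~\ref{prop:hlinearresurgence}, $\rho$ continues along every finite path avoiding $\Sigma_F-A_1$. By Lemma~\ref{lem:hremotecritical} and Proposition~\ref{prop:hactiondistinct}(2), the points of $\Sigma_F$ lying on, or arbitrarily close to, the positive real axis are exactly the real critical values. These are listed by Corollary~\ref{thm:hclassification}: $B_k$ for $1\le k\le m$ and $A_k$ for $2\le k\le m+1$. At resonance, the pair $B_j,A_{j+1}$ is replaced by the cubic value $\mathscr D_j$. This gives the inclusion $\subseteq$ in both cases (i) and (ii), and also the claim that no other singularities occur.

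\textbf{Step 3: lower bound (nonresonant case).} We use Proposition~\ref{prop:hlabelinduction}. Along $\Gamma_+$, just before $A_k$, the transported cycle contains the two colliding labels $0$ and $k$. Just before $B_k$, it contains the nonreal pair, which collides at $B_k$. The key point is to check that the coefficient $n_j-n_i$ of the vanishing cycle is nonzero at each collision. For this we track the cycle as follows:
\begin{itemize}
\item On $(A_1,B_1)$ the cycle is $\pm([q_{\mathrm{down}}]-[q_{\mathrm{up}}])$, the conjugate pair, with coefficients $\pm1$.
\item Past $B_1$ the lateral continuation sends this to a difference of two real branches: the increasing branch (label $0$) and the pole-side branch (label $1$).
\item At $A_2$, label $0$ meets label $2$. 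Label $2$ has coefficient $0$, so $n_0-n_2=\pm1\ne0$.
\item Inductively, the cycle stays of the form $\pm([0]-[\ell])$ with $\ell$ a pole-side label, or is the conjugate pair with coefficient $\pm1$.
\end{itemize}
Thus every collision pairs a coefficient $\pm1$ with a $0$, or with the opposite sign. In a Morse chart the local term of $\mathcal Q$ is then $c(\omega-u)^{-1/2}$ with $c\ne0$, because $W(q_c)=q_c/\sin q_c\ne0$ at a real critical point $q_c\notin\pi\Z$ (Proposition~\ref{prop:hcontinuationAbel}(2)). Conjugation handles $\Gamma_-$.

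\textbf{Step 4: resonant case.} For $k<j$ the argument of Step 3 applies unchanged. At $\mathscr D_j$, Theorem~\ref{thm:hcubicblock} together with \eqref{eq:hcubicQ} shows that the incoming physical cycle $\Pi_-$ transports to $-w_1$ along the upper path and $-w_2$ along the lower path. Both are nonzero vanishing cycles of the cubic block, so each lateral continuation has a term $C^{\rm lat}_j(X-\mathscr D_j)^{-2/3}$ with $C^{\rm lat}_j\ne0$. Lemma~\ref{lem:hAbelsingular}(2) then gives the exponent $-1/6$ for $\widehat S$.

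\textbf{Main obstacle.} The main difficulty is the sign bookkeeping in Step 3. We must verify that, along a fixed lateral path, the transported cycle never acquires equal coefficients on the two sheets that collide; equal coefficients would make the term cancel. We would make this rigorous by maintaining the invariant \emph{the cycle is $\pm([\text{label }0]-[\text{one other label}])$} through each Morse passage, using $(T_{ij}-I)\delta=(n_j-n_i)([q_i]-[q_j])$. For a half-detour, the continued cycle is the average of the two branches. Since the invariant cycle involves the colliding label $0$ and its partner with coefficients differing by $2$ or $1$, the half-monodromy leaves a nonzero local component.
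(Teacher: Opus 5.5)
\textbf{Overall.} Your route is the paper's. The upper bound comes from the real classification, the critical-value-free lateral band of Lemma~\ref{lem:hremotecritical}, and no-escape together with Corollary~\ref{prop:hlinearresurgence}. The lower bound comes from the label transport of Proposition~\ref{prop:hlabelinduction} with nonzero Picard--Lefschetz coefficients, converted by Lemma~\ref{lem:hAbelsingular}.

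Your Step~3 is more explicit than the paper's one-line claim, and it can be sharpened. Along a fixed lateral path, the continued germ is $\pm([0]-[1])$ for all $X>A_1$. The partner is always label $1$, which after $B_1$ stays on the pole-side branch near $\pi$. Hence the coefficient is $\pm2$ at $B_1$ and $\pm1$ at every later $A_k,B_k$, matching up to orientation the first row of $S_0$ in Theorem~\ref{thm:positive}.

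Drop the remark that a half-detour yields the ``average'' of two branches. Along a fixed lateral path the continued cycle is an integral cycle, for example $\pm([q_{\rm L}]-[q_{\rm R}])$ past $B_1$. All you need is that the two colliding sheets carry different coefficients.

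\textbf{The gap: Step~4 for $j\ge2$.} By your own invariant, the germ reaching $\mathscr D_j$ along $\Gamma_+$ is $\pm([0]-[1])$, where label $1$ is a distant regular sheet. The three local sheets are the real one (label $j+1$), the upper one (label $0$) and the lower one (label $j$); only label $0$ carries a nonzero coefficient. Along $\Gamma_-$, label $0$ is the lower sheet instead.

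This is not $\Pi_-=C_0-B$. That cycle is the Fourier-contour pair $[j]-[0]$ around the pole $j\pi$, and the two coincide only when $j=1$. So Theorem~\ref{thm:hcubicblock}, which transports only $\Pi_-$, does not give the nonvanishing you claim. The paper's proof cites it equally tersely at this point.

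\textbf{The fix.} A direct local computation suffices. For $X=\mathscr D_j-\delta$, the local roots satisfy $q_\ell-\theta_j\sim(\delta/\kappa_j)^{1/3}\zeta^\ell$, with $\zeta=e^{2\pi i/3}$ and $\ell=0,1,2$ for the real, upper and lower sheets. Then \eqref{eq:hzerocycledensity} gives the leading term
$-\frac{iW(\theta_j)}{3\kappa_j^{1/3}}\,(n_0+n_1\zeta+n_2\zeta^2)\,\delta^{-2/3}$.
This vanishes only for constant coefficient vectors, and a constant vector contributes holomorphically anyway. For a local vector with a single entry $\pm1$, the coefficient is nonzero since $W(\theta_j)\ne0$. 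Lemma~\ref{lem:hAbelsingular}(2) then gives the exponent $-1/6$ on both lateral paths.
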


\begin{proof}
Theorem \ref{thm:hactionclass}, Corollary \ref{thm:hclassification}, and Proposition \ref{prop:hrealaction} give exactly the positive action differences, while Lemma \ref{lem:hcomplexnoescape} excludes any finite-action escape or arrival at a pole.

(i) If $C\ne\theta_k$ for all $k\ge1$, Proposition \ref{prop:hlabelinduction} shows that the initial cycle meets every later real Morse value along both lateral paths. At each such collision its vanishing-cycle coefficient is nonzero; since $W(q_c)\ne0$, the corresponding density has a nonzero square-root singularity. Lemma \ref{lem:hAbelsingular} therefore gives a nonzero logarithmic singularity of the ordinary Borel transform.

(ii) If $C=\theta_j$, the earlier Morse transitions remain unchanged, and Theorem \ref{thm:hcubicblock} shows that the cubic contribution is nonzero on both lateral continuations. For $j=1$ the same nonvanishing follows directly from the local cubic expansion. Hence the density has exponent $-2/3$, and Lemma \ref{lem:hAbelsingular} gives the ordinary Borel exponent $-1/6$.

All other inverse points are regular. Lemma \ref{lem:hAbelsingular} gives the ordinary Borel exponents, and Corollary \ref{prop:hlinearresurgence} gives finite-path continuation.
\end{proof}

For a continuation path $\mathfrak p$, $\operatorname{Sing}^*_{\mathfrak p}$ denotes the nonzero finite singularities reached by analytic continuation along $\mathfrak p$.

\begin{theorem}
\label{thm:hfullspectrum}
For every $r,y>0$,
\begin{equation}\label{eq:hfullspectrum}
 \bigcup_{\mathfrak p}\operatorname{Sing}_{\mathfrak p}^* G
 =\bigcup_{\mathfrak p}\operatorname{Sing}_{\mathfrak p}^* \widehat S
 =(\Sigma_F-A_1)\setminus\{0\}.
\end{equation}
Here $\mathfrak p$ ranges over finite continuation paths from the minimizing-saddle germ avoiding the translated critical spectrum. At Morse actions the ordinary Borel singularity is logarithmic; at a cubic action the density and ordinary Borel exponents are $-2/3$ and $-1/6$.
\end{theorem}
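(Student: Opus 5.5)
The plan is to prove the two inclusions of \eqref{eq:hfullspectrum} separately. Since $G=\sqrt{\xi}\,\rho$ and $\widehat S$ differ from $\rho$ only by the factor $\sqrt\xi$ and by the Abel transform \eqref{eq:hAbelconvention}, it suffices to treat the density and then transfer the result.

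\emph{Upper bound.} We write the continued density along any finite path $\mathfrak p$ as the scalar pushforward $\mathcal Q_\delta$ of \eqref{eq:hzerocycledensity}, evaluated on the transported reduced zero-cycle $\delta$. The initial cycle is the two-sheet conjugate pair.
\begin{itemize}
\item Corollary~\ref{prop:hlinearresurgence} together with Lemma~\ref{lem:hcomplexnoescape} shows that continuation is possible along every finite path avoiding $(\Sigma_F-A_1)\cup\{0\}$. Lifts never escape to infinity and never reach a pole $k\pi$.
\item Consequently the only possible nonzero finite singularities are points of $(\Sigma_F-A_1)\setminus\{0\}$.
\item The Abel transform is holomorphic wherever $\rho$ is, away from the terminal part of the contour. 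Multiplication by $\sqrt\xi$ creates nothing away from $0$.
\end{itemize}
This gives the inclusion $\subseteq$ for $G$ and $\widehat S$.

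\emph{Lower bound.} Fix $D\in\Sigma_F$ with $D\ne A_1$, and first assume $D$ is a Morse value with critical point $q_c$. We need a path along which the transported cycle $\delta$ has nonzero coefficient difference $n_i-n_j$ on the two sheets colliding at $D$. By Proposition~\ref{prop:hactiondistinct}, exactly one critical point lies over $D$, so the local variation is $(n_j-n_i)([q_i]-[q_j])$.
\begin{itemize}
\item By Remark~\ref{rmk:accessibility}, we first continue one sheet of the initial pair, say $q_{\rm up}$, to a regular point near $q_c$.
\item If the companion sheet $q_{\rm down}$ lands on the other colliding sheet, then $\delta=[q_i]-[q_j]$ locally, with coefficient difference $\pm2$. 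If it lands elsewhere, the coefficient difference is $\pm1$. The only failing case is when both sheets land off the colliding pair, and that is excluded because one of them was steered onto $q_c$'s local sheet.
\item More cleanly, in the nonresonant case Proposition~\ref{prop:hregularcovering} says every ordered difference of distinct sheets is a continuation of the initial cycle. We may therefore choose the path so that $\delta=[q_i]-[q_j]$ exactly.
\end{itemize}
Since $W(q_c)=q_c/\sin q_c\neq0$, Proposition~\ref{prop:hcontinuationAbel}(2) gives a nonzero term $c(\xi-(D-A_1))^{-1/2}$. Lemma~\ref{lem:hAbelsingular}(1) then gives a nonzero logarithmic singularity of $\widehat S$, and $G$ inherits the $-1/2$ term.

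\emph{Main obstacle: resonance.} When $C=\theta_j$, the full symmetric monodromy of Proposition~\ref{prop:hregularcovering} is unavailable. Two cases arise.
\begin{itemize}
\item For the Morse values, the single-sheet accessibility argument above still works. The coefficient difference is nonzero because one sheet is placed on a colliding branch and the other sheet is distinct. If both sheets sit on the two colliding branches, the difference is $\pm2$.
\item For the cubic value $\mathscr D_j$, Theorem~\ref{thm:hpositivespectrum}(ii) already reaches it along $\Gamma_\pm$ with exponent $-2/3$, via Theorem~\ref{thm:hcubicblock} and \eqref{eq:hcubicQ}. Lemma~\ref{lem:hAbelsingular}(2) then gives exponent $-1/6$.
\end{itemize}
The delicate point is Morse values reached through a complex path, where the continued germ may carry a cycle with several sheets. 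I would handle this by always selecting a path that keeps $\delta$ a two-point cycle, that is, by continuing both sheets of the pair along the same action path. Continuation preserves the two-point form, so the coefficient difference at a collision lies in $\{\pm1,\pm2\}$ unless neither point is involved. The path is chosen, via the lifted path of Remark~\ref{rmk:accessibility}, so that at least one point is involved.

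Finally, nonreal values are exactly the complex tails of Lemma~\ref{lem:tails}. They are Morse by Proposition~\ref{prop:hactiondistinct}(3), so they are covered by the Morse case. Combining the two inclusions yields \eqref{eq:hfullspectrum} together with the stated exponents.
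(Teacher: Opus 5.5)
Your proposal is correct and follows essentially the same route as the paper. The inclusion $\subseteq$ comes from Corollary~\ref{prop:hlinearresurgence}; the reverse inclusion comes from the accessibility remark plus a nonzero local component at each Morse value, which you justify more explicitly than the paper (the continued cycle is always a two-point difference with at least one point on a colliding sheet, so $n_i-n_j\in\{\pm1,\pm2\}$), and from Theorem~\ref{thm:hcubicblock}, via Theorem~\ref{thm:hpositivespectrum}(ii), at the cubic value, followed by Lemma~\ref{lem:hAbelsingular}. One minor slip is that the nonreal critical values are the four tails together with finitely many central ones rather than exactly the tails, but all are Morse by Proposition~\ref{prop:hactiondistinct}(3), so the argument is unaffected.
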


\begin{proof}
Corollary \ref{prop:hlinearresurgence} gives
\[
\bigcup_{\mathfrak p}\operatorname{Sing}_{\mathfrak p}^*G\subset(\Sigma_F-A_1)\setminus\{0\},\qquad
\bigcup_{\mathfrak p}\operatorname{Sing}_{\mathfrak p}^*\widehat S\subset(\Sigma_F-A_1)\setminus\{0\}.
\]
Conversely, let $D\in\Sigma_F$, $D\ne A_1$, and let $q_c$ be the unique critical point with $F(q_c)=D$. By the accessibility remark, an initial inverse sheet can be continued to a punctured neighborhood of $q_c$. If $q_c$ is Morse, the continued cycle has a nonzero vanishing-cycle component; if it is cubic, Theorem \ref{thm:hcubicblock} gives a nonzero cubic component. Hence the corresponding density is singular at $\omega=D-A_1$.

Since $\omega\ne0$, multiplication by $\sqrt{\xi}$ does not change the local singularity, and Lemma \ref{lem:hAbelsingular} gives the corresponding singularity of $\widehat S$. Thus every
$\omega\in(\Sigma_F-A_1)\setminus\{0\}$ occurs along some continuation path.
\end{proof}

\subsection{Nearest singularity and large-order growth near the vertical axis}
We now identify the nearest singularity of the initial Borel germ and derive the resulting large-order asymptotics near the vertical axis.

\begin{lemma}
\label{lem:hfirstpairisolation}
Let $K=[y_0,y_1]\Subset(0,\infty)$ and $a=r^2/2$. There exist $r_0,\delta>0$ such that, for $y\in K$ and $0<r<r_0$, the only critical values of $F$ in
\[
|X-\pi y|<\delta
\]
are $A_1$ and $B_1$. Moreover, uniformly for $y\in K$,
\[
q_{1,\mp}=\pi\mp\sqrt{\frac{\pi a}{y}}+O_K(a),\quad 
A_1=\pi y-2\sqrt{\pi ay}+O_K(a), \quad
B_1=\pi y+2\sqrt{\pi ay}+O_K(a).\]
Hence
\begin{equation}\label{eq:homegavertical}
\omega_1:=B_1-A_1
=4\sqrt{\pi ay}+O_K(a)
=2r\sqrt{2\pi y}+O_K(r^2).
\end{equation}
After decreasing $r_0$ if necessary, every other critical value $D$ satisfies $|D-A_1|\ge\delta/2$.
\end{lemma}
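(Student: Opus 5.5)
The plan is to work near the first pole $q=\pi$ and to separate the two nearby critical points from all the others. As $a\to0$ with $y\in K$, the ratio $C=y/a\to\infty$.

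\textbf{Local expansion near $\pi$.} Write $q=\pi+\delta$. The identity in the proof of Corollary~\ref{thm:hclassification} (with $k=1$) gives
\[
F(\pi+\delta)-\pi y=a\,\frac{\pi(\delta+\sin\delta\cos\delta)+\delta^2}{\sin^2\delta}.
\]
For small $\delta$ this equals $y\delta+a\bigl(\pi/\delta+1+O(\delta)\bigr)$. It is cleaner to use the exact form $F(\pi+\delta)=\pi y+y\delta+a(\pi+\delta)\cot\delta$. Then
\[
F'(\pi+\delta)=y+a\cot\delta-a(\pi+\delta)\csc^2\delta .
\]
Rescale $\delta=\sqrt{a}\,s$. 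The critical equation becomes
\[
y-\pi s^{-2}+O(\sqrt a)=0,
\]
uniformly for $s$ in compact subsets of $\C\setminus\{0\}$. Its nondegenerate roots are $s=\pm\sqrt{\pi/y}$. The implicit function theorem then gives $\delta_\mp=\mp\sqrt{\pi a/y}+O_K(a)$; the $O(a)$ correction comes from the next-order term $a\cot\delta\approx a/\delta$, which contributes at relative order $\sqrt a$. These roots are $q_{1,-}=q_0$ and $q_{1,+}$. Substituting into $F(q)=a(q/\sin q)^2=a(\pi+\delta)^2/\sin^2\delta$ gives
\[
a\pi^2/\delta^2+2a\pi/\delta+O(a)=\pi y\mp 2\sqrt{\pi a y}+O_K(a).
\]
This yields $A_1$, $B_1$ and \eqref{eq:homegavertical}, using $a=r^2/2$.

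\textbf{Excluding all other critical values from the disk.} This is the main obstacle. Other critical values must stay a fixed distance from $\pi y$ uniformly as $a\to0$. I would split into real and nonreal critical points.

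\emph{Real critical points.} By Proposition~\ref{prop:hactiondistinct}, every real critical value comes from a real critical point, which lies in some interval $(k\pi,(k+1)\pi)$. The same rescaling near $k\pi$ for $k\ge2$ gives values $k\pi y\pm 2\sqrt{k\pi a y}+O(a)$ for those critical points close to the poles. Near $\theta_k$ the critical values are at least of order $ky$, since $B_k>k\pi y$ and $A_{k+1}>B_k$. All of these are at distance $\ge\pi y_0/2$ from $\pi y$ for $k\ge2$. The value $A_1$ belongs to the interval $(0,\pi)$ and has already been treated.

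\emph{Nonreal critical points.} Let $z$ be a nonreal critical point and set $d=F(z)/a=W^2$. By Propositions~\ref{prop:hrealaction} and~\ref{prop:hactiondistinct}, $d\notin(0,\infty)$ unless $z$ is real. I would use the quadratic relation $(1+C^2)z^2-2Cdz+d^2-d=0$. For $X=ad$ bounded with $C\to\infty$, it forces $z\approx d/C\cdot(\dots)$ or $z=O(\sqrt{a})$-type behavior. Concretely, solve for $z=\bigl(Cd\pm\sqrt{d(1+C^2)-d^2}\bigr)/(1+C^2)$, which gives $z=X/y+O(\sqrt{X a}/y)$ up to $O(a)$ errors. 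Hence a critical point with $|F(z)-\pi y|<\delta$ satisfies $z=\pi+O(\delta)+O(\sqrt a)$. It therefore lies in a small neighborhood of $\pi$. In that neighborhood the rescaled Rouché/implicit-function argument of the first step shows that $F'$ has exactly the two zeros $\pi+\delta_\mp$, both real. This excludes nonreal critical points and gives the uniqueness statement.

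\textbf{Final claim.} The $\delta/2$ statement follows by shrinking $r_0$ so that $|A_1-\pi y|<\delta/2$. Every other critical value $D\ne A_1,B_1$ then has $|D-\pi y|\ge\delta$, and the triangle inequality gives $|D-A_1|\ge\delta/2$. Presumably the statement intends $D\ne B_1$ as well, since $\omega_1\to0$.
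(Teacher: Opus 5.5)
Your proposal is correct in outline, but it takes a different route from the paper in the exclusion step.

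\textbf{How the two routes differ.} The paper degenerates the cleared inverse function to $a=0$. There $\mathcal R_X(q)=\sin q\,(y-X/q)$ has, for $X$ near $\pi y$, a single multiple root $(q,X)=(\pi,\pi y)$. Continuity on compact sets, together with the uniform large-$|q|$ Rouch\'e estimate of Lemma~\ref{lem:hcomplexnoescape}, then shows that only the two critical points born from this double root have values near $\pi y$. You instead use the identity $(a^2+y^2)z^2-2yXz+X^2-aX=0$. This is your quadratic multiplied by $a^2$, i.e.\ $P_X(z)=0$ from Lemma~\ref{lem:raygrowth}. It holds at every critical point and gives $|z-X/y|\le C_K\sqrt a$ for bounded $X$. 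That localizes all candidates near $\pi$ at once, with no analysis at infinity and no need for Propositions~\ref{prop:hrealaction} and~\ref{prop:hactiondistinct}. In particular, your separate treatment of real critical points is redundant. The paper's version fits the inverse-sheet picture used elsewhere (two sheets colliding at a split double root) and reuses estimates already proved. Yours is more elementary and yields an explicit admissible $\delta$.

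\textbf{First repair: precision of the critical values.} Write $q=\pi+\eta$ (your $\delta$, renamed to avoid the clash with the disk radius). The $a/\eta$ terms in $F'(\pi+\eta)$ cancel: $F'(\pi+\eta)=y-a\pi\eta^{-2}-a\pi/3+O(a\eta)$. So there is no $O(\sqrt a)$ term in the rescaled equation, contrary to what you claim. This matters. With only $\eta=\mp\sqrt{\pi a/y}+O(a)$, as you state, one gets $a\pi^2/\eta^2=\pi y+O(\sqrt a)$. That error swamps the $\pm2\sqrt{\pi a y}$ term, so substituting into $a(\pi+\eta)^2/\sin^2\eta$ does not give $O_K(a)$ as written. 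There are two fixes. Use the cancellation to get $\eta^2=\frac{\pi a}{y}(1+O(a))$; this is what the paper's expansion $\mathscr H(\pi+s)=\pi/s^2+\pi/3+O(s)$ encodes. Alternatively, evaluate $F(\pi+\eta)=\pi y+y\eta+a\pi/\eta+a+O(a\eta)$ directly. This is stationary at the root, so an $O(a)$ error in $\eta$ costs only $O(a^{3/2})$.

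\textbf{Second repair: the local zero count.} Your localization places candidates in a disk $|z-\pi|\le\delta/y_0+C_K\sqrt a$ of fixed radius. Your rescaled implicit-function argument, however, only controls $s=\eta/\sqrt a$ on compact subsets of $\C\setminus\{0\}$. Count zeros on a fixed disk instead. Write $\eta^2F'(\pi+\eta)=y\eta^2-a\,\eta^2\mathscr H(\pi+\eta)$, where $\eta^2\mathscr H(\pi+\eta)$ is holomorphic and bounded on $|\eta|\le\epsilon<\pi$. Rouch\'e then gives exactly two zeros in $|\eta|<\epsilon$ for small $a$. Neither is at $\eta=0$, since the value there is $-a\pi$, so they must be $q_{1,\pm}$. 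Then take $\delta\le y_0\epsilon/2$ and $r_0$ small.

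Your final triangle-inequality step is fine, as is your reading that ``other'' excludes $B_1$.
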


\begin{proof}
At $a=0$,
\[\mathcal R_{X}(q)=\sin q\left(y-\frac{X}{q}\right),
\]
so near $X=\pi y$, the only multiple inverse root is
\[
(q,X)=(\pi,\pi y);
\]
all other roots are uniformly simple for $y\in K$. The same Rouch\'e estimate used in Lemma \ref{lem:hcomplexnoescape} is uniform here for $0\le a\le a_0$ and $y\in K$, since $\sqrt{a^2+y^2}\ge y_0>0$. It therefore excludes additional
large-modulus multiple roots. Hence, for small $a$, the only nearby critical values are the two branches $A_1,B_1$ issued from this double root.

Write $q=\pi+s$. Since
\[
\mathscr H(\pi+s)
=\frac{\pi}{s^2}+\frac{\pi}{3}+O(s),
\]
the critical equation $\mathscr H(q)=y/a$ gives
\[
s=\pm\sqrt{\frac{\pi a}{y}}+O_K(a).
\]
At a critical point, $ F(q)=a(q/\sin q)^2$ yields the stated expansions for $A_1,B_1$ and \eqref{eq:homegavertical}. The uniform isolation of the double root at $a=0$ gives the final separation from all other critical values.
\end{proof}

\begin{lemma}
\label{lem:huniformamplitude}
Let $K\Subset(0,\infty)$. For sufficiently small $r>0$, uniformly for $y\in K$, set
\[
\omega=B_1-A_1,\qquad q_+=q_{1,+},\qquad
\chi=-\frac{\sqrt2\,W(q_+)}{\sqrt{F''(q_+)}}>0.
\]
There exist $R>1$ and a function $A_{r,y}$ holomorphic on $|z|<R$ such that
\begin{equation}\label{eq:huniformfactorization}
G(\omega z)=\chi\,A_{r,y}(z)(1-z)^{-1/2},
\qquad A_{r,y}(1)=1.
\end{equation}
Moreover,
\[
A_{r,y}(z)=1+O_K(r) \quad (|z|\le R), \qquad
\chi\sqrt{\omega}=2\pi+O_K(r),
\]
uniformly in the stated parameter region.
\end{lemma}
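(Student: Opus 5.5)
The plan is to rescale the whole local picture near $q=\pi$ by $\epsilon:=\sqrt{a}$ and reduce it to a fixed model problem in which $A_1,B_1$ are the two nondegenerate critical values of a regular perturbation of an explicit phase. Write $q=\pi+\epsilon s$ and $X=\pi y+\epsilon\,\zeta$. Using $F(\pi+\epsilon s)=y(\pi+\epsilon s)+a(\pi+\epsilon s)\cot(\epsilon s)$, we get
\[
F(\pi+\epsilon s)-\pi y=\epsilon\Bigl(ys+\frac{\pi}{s}\Bigr)+O_K(\epsilon^2),
\]
analytically in $(s,\epsilon)$ on any compact annulus $\{c\le|s|\le c^{-1}\}$. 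The rescaled phase $\phi_\epsilon(s)=\phi_0(s)+O(\epsilon)$, with $\phi_0(s)=ys+\pi/s$, has critical points $s_\mp=\mp\sqrt{\pi/y}$ and critical values $\mp2\sqrt{\pi y}$, consistent with Lemma~\ref{lem:hfirstpairisolation}. The model inverse branches solve $ys^2-\zeta s+\pi=0$, so
\[
s(\zeta)=\frac{\zeta\pm\sqrt{\zeta^2-4\pi y}}{2y}.
\]
The amplitude is $W\,dq=\frac{\pi+\epsilon s}{\sin(\pi+\epsilon s)}\,\epsilon\,ds=-\frac{\pi}{s}\,ds\,(1+O(\epsilon))$. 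Hence, in the variable $\zeta$, the pushed density of the initial two-sheet cycle (the pair created at $A_1$) is, for the model,
\[
\rho\,dX=\pm\frac{2\pi\, d\zeta}{\sqrt{(\zeta-\zeta_-)(\zeta-\zeta_+)}},\qquad \zeta_\pm=\pm2\sqrt{\pi y}.
\]
This is explicitly of the form $(\text{const})\,(\xi(\omega-\xi))^{-1/2}$ with $\xi=X-A_1$ and $\omega=\epsilon(\zeta_+-\zeta_-)$.

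The key step is the perturbative statement. For $\epsilon>0$ small the two inverse branches of $\phi_\epsilon$ near $s_\pm$ persist. By Lemma~\ref{lem:hfirstpairisolation}, no other critical value lies within $\delta/2$ of $A_1$, while $\omega=O(r)$. Hence in the rescaled variable $z=\xi/\omega$ the disk $|z|<R$ contains only the critical values $z=0,1$, for any fixed $R$ (say $R=2$) once $r$ is small. Because the singularities at $z=0$ and $z=1$ are Morse square-root singularities, the function
\[
A(z):=\frac{G(\omega z)\,(1-z)^{1/2}}{\chi}, \qquad G(\xi)=\sqrt{\xi}\,\rho(\xi),
\]
has at most removable singularities at $z=0,1$. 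At $z=0$ this is automatic since $\rho\sim\xi^{-1/2}$. At $z=1$ one needs that the continued cycle has exactly a square-root term there, and the normalization $\chi=-\sqrt2W(q_+)/\sqrt{F''(q_+)}$ is chosen precisely as the local Morse coefficient at $B_1$, giving $A(1)=1$. The sign is fixed by the transport of Proposition~\ref{prop:hlabelinduction}, and the branch of $\sqrt{F''(q_+)}$ matches it so that $\chi>0$. Single-valuedness of $A$ on $|z|<R$ holds because the monodromies at $0$ and $1$ are both $-1$ on the relevant two-dimensional span, and they cancel against the factors $\sqrt{z}$ and $\sqrt{1-z}$. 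The cleanest way to see this is via the rescaled model, where the whole two-sheet system is a genus-zero double cover $w^2=z(1-z)$ up to $O(\epsilon)$.

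Uniformity, which is the main obstacle, comes next. I would express $A(z)$ as a contour integral in the $s$-plane: $\rho$ is a residue-type pushforward, so $A(z)$ equals a Cauchy integral over a fixed loop enclosing both branches in the rescaled $s$-annulus. This is analytic in $(z,\epsilon)$ jointly on $|z|\le R$, $0\le\epsilon\le\epsilon_0$, $y\in K$. At $\epsilon=0$ the model computation gives $A\equiv1$: the model density is exactly $2\pi/\sqrt{(\zeta-\zeta_-)(\zeta_+-\zeta)}$, hence $G(\omega z)=2\pi\omega^{-1/2}(1-z)^{-1/2}$, and this also yields $\chi\sqrt\omega\to2\pi$. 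Joint analyticity then gives $A_{r,y}=1+O_K(\epsilon)$. Since $\epsilon=\sqrt a=r/\sqrt2$, the error is $O_K(r)$. The same expansion, evaluated through $\chi=-\sqrt2W(q_+)/\sqrt{F''(q_+)}$ with $q_+=\pi+\epsilon s_++O(a)$, $W(q_+)\approx-\pi/(\epsilon s_+)$, and $F''(q_+)\approx 2\pi/(\epsilon s_+^3)$, gives $\chi\sqrt\omega=2\pi+O_K(r)$ directly. The delicate part is justifying the fixed loop: the loop must stay away from $s=0$ (the pole $q=\pi$) and from the far inverse roots. Lemma~\ref{lem:hcomplexnoescape}, together with the uniform Rouch\'e estimate quoted in the proof of Lemma~\ref{lem:hfirstpairisolation}, supplies this. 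The expansion in $\epsilon$ is an expansion of $\epsilon s\cot(\epsilon s)$, so it is uniformly analytic on the annulus.
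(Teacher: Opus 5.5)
Your argument follows the paper's proof. The paper also rescales $q=\pi+bu$ with $b=\sqrt{\pi a/y}$ (your $s=\sqrt{\pi/y}\,u$), and normalizes $X=A_1+\omega z$ so that the inverse equation tends to $u^2+(2-4z)u+1=0$ (your $ys^2-\zeta s+\pi=0$ with $\zeta=\sqrt{\pi y}\,(4z-2)$). It then uses exactly your monodromy cancellation to see that $T_{r,y}(z)=\omega\sqrt{z(1-z)}\,\rho(\omega z)$ is single-valued and holomorphic on $|z|<R$. It reads off $T_{r,y}(1)=\chi\sqrt\omega$ from the Morse expansion at $q_+$ and sets $A_{r,y}=T_{r,y}/T_{r,y}(1)$. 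Your explicit leading-order check $\chi\sqrt\omega\to2\pi$ is correct.

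The step that fails as written is the one you single out as the main obstacle, the uniformity argument. A Cauchy integral $\frac{1}{2\pi i}\oint W(q)\,(F(q)-X)^{-1}\,dq$ over a fixed loop enclosing both branches computes the trace $\sum_i W(q_i)/F'(q_i)$, not the antisymmetric two-sheet density. That trace is holomorphic straight through $z=0$ and $z=1$. In your model it vanishes identically, because $-\pi/(ys^2-\zeta s+\pi)$ is regular at $s=0$ and $O(s^{-2})$ at infinity. So this representation cannot yield $A_{r,y}=1+O_K(r)$.

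The repair is short and uses only what you already have:
\begin{itemize}
\item $T_{r,y}$ is holomorphic on a slightly larger disk $|z|<R'$ once $r$ is small, since all other critical values sit at $|z|\gtrsim\delta/\omega$.
\item On the circle $|z|=R'$, which stays a fixed distance from $\{0,1\}$, the two sheets of the cycle are simple roots of an equation analytic in $(u,z,b)$. This equation reduces to $u^2+(2-4z)u+1=0$ at $b=0$.
\item The implicit function theorem therefore gives $T_{r,y}=2\pi+O_K(b)$ on that circle, uniformly for $y\in K$.
\item The maximum principle carries this estimate to $|z|\le R$.
\end{itemize}
Since $T_{r,y}(1)=2\pi+O_K(r)\neq0$, both $A_{r,y}=1+O_K(r)$ and $\chi\sqrt\omega=2\pi+O_K(r)$ follow.
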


\begin{proof}
Set $b=\sqrt{\pi a/y}$ and $q=\pi+bu$. By Lemma \ref{lem:hfirstpairisolation},
\[
A_1=\pi y-2yb+O_K(b^2),\qquad\omega=4yb+O_K(b^2).
\]
After the scaling $X=A_1+\omega z$, the inverse equation extends holomorphically to $b=0$ and reduces to
\[
u^2+(2-4z)u+1=0.
\]
Hence, on a fixed disk $|z|<R$ with $R>1$, the two inverse branches form a uniform two-sheeted cover whose only branch points are $z=0$ and $z=1$.

On this cover the pushed differential is $ib\,W(\pi+bu)\dd u$. Therefore
\[
T_{r,y}(z):=\omega\sqrt{z(1-z)}\,\rho(\omega z)
\]
is single-valued and holomorphic on $|z|<R$. The limiting quadratic gives $T_{r,y}(z)=2\pi+O_K(b)$ uniformly, while the local Morse expansion at $q_+$ gives
\[
T_{r,y}(1)=\chi\sqrt{\omega}.
\]
Thus $A_{r,y}(z):=T_{r,y}(z)/T_{r,y}(1)$ satisfies \eqref{eq:huniformfactorization}, $A_{r,y}(1)=1$, and
\[
A_{r,y}=1+O_K(b)=1+O_K(r).
\]
The same estimate at $z=1$ gives $\chi\sqrt{\omega}=2\pi+O_K(r)$.
\end{proof}

\begin{theorem}
\label{thm:hlargeorder}
Under the assumptions and notation of Lemma \ref{lem:huniformamplitude}, define
\[
\beta(r,y):=\frac12 A_{r,y}'(1)-\frac14.
\]
Then $\omega$ is the unique nearest nonzero singularity of the initial germs $G$ and $\widehat S$. Near $\xi=\omega$,
\[
\rho(\xi)=\chi(\omega-\xi)^{-1/2}
\left(1-\frac{2\beta(r,y)}{\omega}(\omega-\xi)+O\bigl((\omega-\xi)^2\bigr)\right),
\]
and
\[
\widehat S(\xi)=-\frac{\chi}{\sqrt\pi}\left(1-\frac{\beta(r,y)}{\omega}(\omega-\xi)+O\bigl((\omega-\xi)^2\bigr)\right)\log\!\left(1-\frac{\xi}{\omega}\right)
+H(\xi),
\]
where $H$ is holomorphic near $\omega$.

If $S(t)=\sum_{n\ge0}c_nt^n$, then
\[
c_n=\frac{\chi}{\sqrt\pi}\Gamma(n)\omega^{-n}
\left(1+\frac{\beta(r,y)}{n}+O_K(n^{-2})\right),
\qquad n\to\infty,
\]
uniformly for $y\in K$ and $0<r<r_0$. Moreover,
\[
\beta(r,y)=-\frac14+O_K(r).
\]
\end{theorem}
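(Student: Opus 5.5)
The plan is to work entirely from the factorization \eqref{eq:huniformfactorization} of Lemma~\ref{lem:huniformamplitude} and push it through the Abel transform \eqref{eq:hAbelconvention} and a Darboux transfer.

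\emph{Step 1: the nearest singularity.} Every nonzero singularity of the initial germ of $G$ or $\widehat S$ lies in $(\Sigma_F-A_1)\setminus\{0\}$ by Corollary~\ref{prop:hlinearresurgence}. By Lemma~\ref{lem:hfirstpairisolation}, the only such point with $|\xi|<\delta/2$ is $\omega=B_1-A_1=O(r)$. After shrinking $r_0$ we may assume $\omega R<\delta/2$. Then \eqref{eq:huniformfactorization} shows that $G$ is holomorphic on $|\xi|<\omega R$ except at $\xi=\omega$. The point $\omega$ really is singular, because $\chi\ne0$ and $A_{r,y}(1)=1$. Lemma~\ref{lem:hAbelsingular}(1) transfers this to $\widehat S$. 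This proves uniqueness of the nearest singularity.

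\emph{Step 2: local expansions near $\omega$.} Write $\rho(\xi)=\xi^{-1/2}G(\xi)$ and expand both $\xi^{-1/2}$ and $A_{r,y}(\xi/\omega)$ about $\xi=\omega$, with $\delta=\omega-\xi$. This gives
\[
\rho(\xi)=\chi\,\omega^{1/2}\delta^{-1/2}\,\omega^{-1/2}\bigl(1+\tfrac{\delta}{2\omega}-A'(1)\tfrac{\delta}{\omega}+O(\delta^2)\bigr),
\]
using $G(\omega z)=\chi A(z)(1-z)^{-1/2}$ and $(1-z)^{-1/2}=\omega^{1/2}\delta^{-1/2}$. The $\omega^{\pm1/2}$ factors should be tracked carefully against the normalization of $\chi$ in the Morse expansion, where $\rho\sim\chi\delta^{-1/2}$. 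The linear coefficient is $\tfrac12-A'(1)=-2\beta$, which matches the stated formula. Next, feed $\rho=\chi\delta^{-1/2}(1+c_1\delta+\cdots)$ into the terminal part of the Abel integral. The argument is that of Lemma~\ref{lem:hAbelsingular}(1): the term $\delta^{1/2}$ in $\rho$ yields $\int v^{1/2}(v-\delta)^{-1/2}\,\dd v$, whose logarithmic part is $-\tfrac12\delta\log\delta$. The log coefficient is therefore $-\frac{\chi}{\sqrt\pi}(1+\tfrac{c_1}{2}\delta+\cdots)$ with $c_1/2=-\beta/\omega$. Finally, convert $\log\delta$ to $\log(1-\xi/\omega)$, absorbing the difference into $H$.

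\emph{Step 3: large order.} Apply Darboux's method to $\widehat S(\xi)=\sum c_n\xi^n/n!$ on the disk $|\xi|<\omega R$. The relevant coefficient facts are
\[
[\xi^n]\bigl(-\log(1-\xi/\omega)\bigr)=\omega^{-n}/n,
\qquad
[\xi^n]\bigl((1-\xi/\omega)\log(1-\xi/\omega)\bigr)=\omega^{-n}/(n(n-1)).
\]
The remainder is of the form $O(\delta^2\log\delta)$ plus a function holomorphic on the larger disk, so it contributes $O(\omega^{-n}n^{-3})+O((\omega R)^{-n})$. Multiplying by $n!$ gives $c_n=\frac{\chi}{\sqrt\pi}\Gamma(n)\omega^{-n}(1+\beta/n+O(n^{-2}))$, with the sign of the $\beta$ term to be checked.

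\emph{Main obstacle: uniformity in $r$.} Because $\omega\to0$, the Darboux remainders must be controlled uniformly. To do this I would rescale $\xi=\omega z$. In the variable $z$ the germ becomes $\widehat S(\omega z)$, with coefficients $c_n\omega^n/n!$. The singular data there are $A_{r,y}$, which is uniformly $1+O(r)$ on $|z|\le R$ by Lemma~\ref{lem:huniformamplitude}. So Cauchy estimates on a fixed circle $|z|=R'$ with $1<R'<R$ bound the Taylor remainders of $A$ uniformly, and this gives uniform $O_K(n^{-2})$ errors. The same Cauchy bound gives $A'_{r,y}(1)=O_K(r)$, hence $\beta=-\tfrac14+O_K(r)$.
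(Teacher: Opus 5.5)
Your proposal is correct. Steps~1--2 and the closing Cauchy estimate for $\beta$ are essentially the paper's argument; Step~3 takes a different route.

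\emph{How the paper does Step~3.} The paper never applies Darboux to $\widehat S$. It expands $G(\omega z)=\chi A_{r,y}(z)(1-z)^{-1/2}$ at $z=1$ and obtains
\[
r_n\omega^n=\chi\,p_n\bigl(1+A_{r,y}'(1)/(2n)+O_K(n^{-2})\bigr),\qquad p_n=\Gamma(n+\tfrac12)/(\sqrt\pi\,\Gamma(n+1)).
\]
It then uses $c_n=r_n\Gamma(n+\frac12)$. The $-\frac14$ in $\beta$ is then just $\Gamma(n+\frac12)^2/(\Gamma(n+1)\Gamma(n))=1-\frac1{4n}+O(n^{-2})$.

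\emph{Your route.} You transfer the logarithmic expansion of $\widehat S$ directly, and your hedged checks come out right. The $\omega^{\pm1/2}$ factors in Step~2 cancel exactly. The function $-\frac{\chi}{\sqrt\pi}\bigl(1-\beta(1-z)\bigr)\log(1-z)$ has $n$-th coefficient $\frac{\chi}{\sqrt\pi}\,\frac1n\bigl(1+\frac{\beta}{n-1}\bigr)$, so the correction is $+\beta/n$. The merit of your route is that it shows the $1/n$ correction is literally the subleading logarithmic coefficient of the second display.

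\emph{What your Step~3 still needs.}
\begin{enumerate}
\item The decomposition $\widehat S(\omega z)=h(z)\log(1-z)+H(z)$ is only local. Neither $h$ nor $H$ is holomorphic on a disk of radius $>1$, because $h$ is singular at $z=0$. For example, if $A\equiv1$ then $\widehat S(\omega z)=\chi\sqrt\pi\,{}_2F_1(\frac12,\frac12;1;z)$, and $h$ is a multiple of ${}_2F_1(\frac12,\frac12;1;1-z)$. So you need a slit-disk transfer (Darboux or Flajolet--Odlyzko) rather than ``holomorphic on the larger disk.''
\item You should keep the term $h_2\delta^2\log\delta$ explicitly. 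A bare $O(\delta^2\log\delta)$ remainder only transfers to $O(n^{-3}\log n)$, which falls short of the stated $O_K(n^{-2})$ relative error.
\item Uniformity does not follow from rescaling alone; you must show that uniform control of $A_{r,y}$ controls the local data $h,H$. One way is the representation $\widehat S(\omega z)=\frac{\chi}{\sqrt\pi}\int_0^z A_{r,y}(s)\bigl(s(z-s)(1-s)\bigr)^{-1/2}\dd s$. A simpler way is to note that the Abel transform is diagonal on Taylor coefficients, $[\xi^n]\widehat S=\frac{\Gamma(n+1/2)}{\Gamma(n+1)}[\xi^n]G$.
\end{enumerate}

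The diagonal relation reduces everything to $G$, where $A_{r,y}$ is holomorphic and uniformly bounded on a fixed disk. That is exactly why the paper's route is the shorter one.
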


\begin{proof}
By \eqref{eq:huniformfactorization}, $\omega$ is the unique nearest nonzero singularity of $G$. The Abel transform gives the same nearest singularity for $\widehat S$. 

Writing $\delta=\omega-\xi$ and expanding at $z=1$ gives
\[\rho(\xi)=\chi\delta^{-1/2}
\left(1+\frac{\frac12-A_{r,y}'(1)}{\omega}\delta+O(\delta^2)
\right),
\]
and termwise Abel integration gives the logarithmic expansion of $\widehat S$.

For the coefficients, set
\[
p_n=[z^n](1-z)^{-1/2}=\frac{\Gamma(n+\frac12)}
{\sqrt\pi\,\Gamma(n+1)}.
\]
A first-order Darboux expansion yields
\[
[z^n]\!\left(A_{r,y}(z)(1-z)^{-1/2}\right)
=p_n\left(1+\frac{A_{r,y}'(1)}{2n}+O_K(n^{-2})\right).
\]
Using $c_n=r_n\Gamma(n+\frac12)$ and
\[
\frac{\Gamma(n+\frac12)^2}{\Gamma(n+1)\Gamma(n)}
=1-\frac1{4n}+O(n^{-2}),
\]
we obtain
\[
c_n=\frac{\chi}{\sqrt\pi}\Gamma(n)\omega^{-n}
\left(1+\frac{\beta(r,y)}{n}+O_K(n^{-2})\right).
\]
Finally, $A_{r,y}=1+O_K(r)$ on a fixed disk about $z=1$, so Cauchy's
estimate gives $A_{r,y}'(1)=O_K(r)$ and hence
\[
\beta(r,y)=-\frac14+O_K(r). \qedhere
\]
\end{proof}

\begin{remark}
The large-order formula uses the exact action difference $\omega=B_1-A_1$. Replacing it by a truncated small-$r$ expansion inside $\omega^{-n}$ is not uniform in $n$.
\end{remark}
Outside the near-vertical regime, Theorem \ref{thm:hfullspectrum} still implies exact Gevrey order one, although the nearest singularity is not identified here.

\section{Global Stokes structure}
\label{sec:hglobalstokes}
Section~\ref{sec:hglobal} determines the Borel singularities. Retaining their action labels, we now construct the directional Stokes operators. 

For each $A\in\Sigma_F$, choose an orientation of the local two-sheet cycle and let $\rho_A$ be its scalar pushforward, written in the local action variable $\xi=F-A$. Thus
\[
\rho_A(\xi)=\sum_{k\ge0}r_{A,k}\xi^{k-1/2},
\qquad
\Phi_A(t)=\sum_{k\ge0}r_{A,k}\Gamma\!\left(k+\frac12\right)t^k,\qquad
\Psi_A(t)=e^{-A/t}\Phi_A(t).
\]

For a direction $\theta$, let $\gamma_{\theta,\pm}$ denote the ray $e^{i\theta}\mathbb R_+$ with upper, respectively lower, semicircular detours around the points of
\[
 (\Sigma_F-A)\cap e^{i\theta}\mathbb R_+,
\]
and let $\rho_A^{\theta,\pm}$ be the corresponding analytic
continuations of $\rho_A$. Define
\[
\mathcal S_{\theta,\pm}\Phi_A(t):=
t^{-1/2}\int_{\gamma_{\theta,\pm}}
e^{-\xi/t}\rho_A^{\theta,\pm}(\xi)\,d\xi,
\qquad
\Psi_A^{\theta,\pm}(t):=
e^{-A/t}\mathcal S_{\theta,\pm}\Phi_A(t).
\tag{4.1}
\]

\subsection{Local connection formulas}

\iffalse
The Morse case is the standard one-dimensional Picard--Lefschetz variation. If the sheets $i_p,j_p$ collide at a Morse point $p$ over $D$, then
\[
(T-I)\delta=\sum_p (n_{j_p}-n_{i_p})\,\delta_{D,p}.
\]
Applying the scalar pushforward and the Abel transform gives the local Borel connection below; Laplace integration gives the corresponding Stokes jump.
\fi

The Morse case is the standard one-dimensional Picard--Lefschetz variation. If the sheets $i_p,j_p$ collide at a Morse point $p$ over $D$, then
\[
(T-I)\delta=\sum_p (n_{j_p}-n_{i_p})\,\delta_{D,p}.
\]
Applying the scalar pushforward and the Abel transform gives the local Borel connection below.

\begin{proposition}
\label{prop:hMorseconnection}
Suppose that all critical points above $D$ are Morse. Let $\delta=\sum_j n_j[q_j]$ be an incoming reduced cycle, and let $\delta_{D,p}$ be the vanishing cycle at $p$, where the sheets $i_p,j_p$ collide.

If $\Phi_\delta$ is based at action $A$ and $\omega=D-A$, then
\[
\Disc_\omega^{\rm loc}\widehat\Phi_\delta(\omega+\eta)=
\sum_p(n_{j_p}-n_{i_p})\,\widehat\Phi_{D,p}(\eta).
\]
%Whenever the lateral Laplace integrals converge,
%\[(\mathcal S_+-\mathcal S_-)\Phi_\delta=e^{-\omega/t}\sum_p(n_{j_p}-n_{i_p})\,\mathcal S\Phi_{D,p}.\]
\end{proposition}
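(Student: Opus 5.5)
The plan is to work first at the level of reduced zero-cycles and only afterwards pass to densities, Abel transforms and Borel germs. Everything is linear in the cycle.

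\emph{Step 1: Picard--Lefschetz variation at $D$.} Fix a small disk $\Delta$ around $D$ in the action plane. By Proposition~\ref{prop:hactiondistinct}, together with local finiteness of $\Sigma_F$ (Proposition~\ref{prop:spectrum}), $\Delta$ contains no other critical value. Lemma~\ref{lem:hcomplexnoescape} shows that all inverse branches over $\Delta\setminus\{D\}$ are either holomorphic or belong to one of the Morse pairs $(i_p,j_p)$. Continuing $\delta$ once around $D$ (or comparing the two lateral continuations $\pm$) acts trivially on the regular sheets. It acts by the transposition $T_{i_pj_p}$ on each colliding pair, so
\[
(T-I)\delta=\sum_p (n_{j_p}-n_{i_p})\,\delta_{D,p},\qquad \delta_{D,p}=[q_{i_p}]-[q_{j_p}],
\]
which is the variation formula recalled before the statement. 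Applying the scalar pushforward \eqref{eq:hzerocycledensity} gives the corresponding identity for densities. The regular sheets contribute terms holomorphic at $X=D$. The pair $(i_p,j_p)$ contributes $n_{i_p}\mathcal Q_{[q_{i_p}]}+n_{j_p}\mathcal Q_{[q_{j_p}]}$. Splitting this into a symmetric part, which is holomorphic in $X-D$, and an antisymmetric part $\tfrac12(n_{i_p}-n_{j_p})\,\mathcal Q_{\delta_{D,p}}$, shows that the local discontinuity of $\mathcal Q_\delta$ across the cut from $D$ equals $\sum_p (n_{j_p}-n_{i_p})\rho_{D,p}(X-D)$. Here $\rho_{D,p}$ is the pushed density of the vanishing cycle in the local variable $\eta=X-D$, with the same orientation convention.

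\emph{Step 2: Abel transform.} By \eqref{eq:hAbelconvention}, $\widehat\Phi_\delta(\xi)=\pi^{-1/2}\int_0^\xi \rho_\delta(u)(\xi-u)^{-1/2}\,du$, with the contour transported as in Corollary~\ref{prop:hlinearresurgence}. For $\xi=\omega+\eta$, split the contour at $\omega-\epsilon$. The initial piece is holomorphic near $\eta=0$ and contributes nothing to $\Disc^{\rm loc}_\omega$. On the terminal piece, substitute $u=\omega+v$ and take the difference of the two lateral determinations of $\rho_\delta$. By Step 1 the integrand becomes $\sum_p (n_{j_p}-n_{i_p})\rho_{D,p}(v)$, integrated over $v$ from $0$ to $\eta$ against $(\eta-v)^{-1/2}$. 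The symmetric pieces and the part from $-\epsilon$ to $0$ cancel because they are single-valued at $v=0$. What remains is exactly $\pi^{-1/2}\int_0^\eta \rho_{D,p}(v)(\eta-v)^{-1/2}dv=\widehat\Phi_{D,p}(\eta)$, which gives the stated formula.

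\emph{Main obstacle.} The delicate point is the bookkeeping of orientations and branch conventions. The sign $n_{j_p}-n_{i_p}$, the orientation of $\delta_{D,p}$, and the determination of $\sqrt{\xi-u}$ transported along the path must all agree with the convention defining $\rho_{D,p}$. The same agreement is needed between the definition of $\Disc^{\rm loc}$ (difference of the two lateral continuations versus variation around a loop) and Step 1. I would fix this in the Morse chart $F-D=\pm w^2$. There the single vanishing cycle is explicit: the density is $c\,\eta^{-1/2}$ and its variation is $2c\,\eta^{-1/2}$. I would check the normalization against Lemma~\ref{lem:hAbelsingular}, and then extend to several $p$ by linearity.
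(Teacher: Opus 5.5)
Your proposal is correct and follows the paper's route. The paper justifies this proposition only by the one-dimensional Picard--Lefschetz variation $(T-I)\delta=\sum_p(n_{j_p}-n_{i_p})\,\delta_{D,p}$, followed by the scalar pushforward and the Abel transform. Your Steps 1--2 make these moves explicit: the symmetric part of each colliding pair is holomorphic at $D$, and the lateral Abel contours collapse onto $\pi^{-1/2}\int_0^\eta(\cdot)\,(\eta-v)^{-1/2}\,\mathrm{d}v$ because the $-1/2$ singularity is integrable. The orientation conventions still have to be fixed, as you indicate.
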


\noindent For the cubic case, apply the scalar pushforward and Abel transform to Theorem \ref{thm:hcubicblock}.

\begin{corollary}
\label{cor:hcubicalien}
In the notation of Theorem~\ref{thm:hcubicblock}, for an incoming coefficient vector $\mathbf c$,
\begin{equation}\label{eq:hcubicBorelblock}
\Disc_{\mathscr D_j}^{\rm loc}\mathcal Q_{\mathbf c}
=\mathcal Q_{(P_{\rm up}-P_{\rm down})\mathbf c}.
\end{equation}
The corresponding ordinary Borel discontinuity is the Abel transform of \eqref{eq:hcubicBorelblock}.

Modulo holomorphic germs, if $M$ is the cubic density monodromy matrix, then the ordinary Borel monodromy is $-M$. Hence the density monodromy has order three, while the ordinary Borel monodromy has order six, with local exponents $\pm1/6$ modulo integers.
\end{corollary}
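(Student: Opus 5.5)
The plan is to derive the discontinuity formula directly from Theorem~\ref{thm:hcubicblock} via the scalar pushforward \eqref{eq:hzerocycledensity}. Then I pass to the ordinary Borel transform using the Abel convention \eqref{eq:hAbelconvention}. Finally, the monodromy claims follow from the local exponent structure of the cubic inverse branches.

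\textbf{Step 1: the density discontinuity.} For an incoming reduced cycle with coefficient vector $\mathbf c$ in the basis $v_1,v_2$ over $X_-$, the upper and lower lateral continuations of $\mathcal Q_{\mathbf c}$ to $X_+$ are the pushforwards of $T_{\rm up}\mathbf c$ and $T_{\rm down}\mathbf c$. This holds because \eqref{eq:hzerocycledensity} is linear in the cycle and commutes with analytic continuation of the inverse sheets. In the basis $w_1,w_2$ these are represented by $P_{\rm up}\mathbf c$ and $P_{\rm down}\mathbf c$. The local discontinuity across the cut at $\mathscr D_j$ is the difference of the two lateral values. It is therefore $\mathcal Q_{(P_{\rm up}-P_{\rm down})\mathbf c}$, which is \eqref{eq:hcubicBorelblock}. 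The contributions of all other inverse sheets are unaffected by the small semicircles, so they cancel in the difference.

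\textbf{Step 2: the ordinary Borel discontinuity.} The Abel transform \eqref{eq:hAbelconvention} is linear. Its contour pieces before the terminal neighbourhood are holomorphic and agree on both sides, as in the proof of Lemma~\ref{lem:hAbelsingular}. The discontinuity of $\widehat\Phi$ is therefore the Abel transform of the density discontinuity.

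\textbf{Step 3: the monodromy orders.} By Lemma~\ref{lem:hcubicnormal}, near $\mathscr D_j$ the inverse branches are $u\sim\zeta^{1/3}$ up to cube roots of unity, where $\zeta=X-\mathscr D_j$. The pushed differential is holomorphic and nonzero, so a reduced-cycle density is, modulo holomorphic germs, a combination of $\zeta^{-2/3}$ and $\zeta^{-1/3}$ times holomorphic series. A full loop acts on these by multiplication by $e^{-4\pi i/3}$ and $e^{-2\pi i/3}$, which is consistent with $M^3=I$ in \eqref{eq:hcubicM} and with eigenvalues the primitive cube roots of unity.

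Apply the Abel transform termwise to $\zeta^{-\alpha}$, using the Beta-integral scaling from Lemma~\ref{lem:hAbelsingular}. This gives $\zeta^{1/2-\alpha}$ modulo holomorphic terms, that is, exponents $-1/6$ and $+1/6$. A loop multiplies $\zeta^{1/2-\alpha}$ by $e^{i\pi}e^{-2\pi i\alpha}=-e^{-2\pi i\alpha}$. Hence, modulo holomorphic germs, the ordinary Borel monodromy equals $-M$.

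Because $M^3=I$ and $M$ has no eigenvalue $-1$, the matrix $-M$ has eigenvalues that are primitive sixth roots of unity, so its order is exactly six. The local exponents are $\pm1/6$ modulo integers.

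\textbf{Main obstacle.} The delicate point is bookkeeping the square-root branch of the Abel kernel under the loop. The factor $(\xi-u)^{-1/2}$ contributes the sign $-1$ only on the terminal piece. The earlier holomorphic pieces must be discarded consistently, since they are the "modulo holomorphic germs" part. I would verify this by the explicit scaling $v=\delta s$ from Lemma~\ref{lem:hAbelsingular}, which displays $\delta^{1/2-\alpha}$ times a nonzero Beta constant.
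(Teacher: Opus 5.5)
Your proposal is correct and follows essentially the same route as the paper. The density discontinuity comes from the transport matrices $P_{\rm up},P_{\rm down}$ of Theorem~\ref{thm:hcubicblock} by linearity of the scalar pushforward. The Abel transform then shifts the cubic exponents $-2/3,-1/3$ to $-1/6,1/6$ and multiplies the monodromy eigenvalues by $-1$, so $M^3=I$ gives order six for $-M$; your version only adds more of the local bookkeeping (the $\zeta^{1/3}$ branch structure and the Beta-scaling check) than the paper's brief argument.
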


\begin{proof}
Equation \eqref{eq:hcubicBorelblock} follows directly from the transport matrices in Theorem \ref{thm:hcubicblock}. The Abel transform shifts the cubic density exponents
\[
-\frac23,\,-\frac13 \quad \rightarrow\quad 
-\frac16,\,\frac16
\]
and multiplies the corresponding monodromy eigenvalues by $-1$. Thus the density monodromy $M$ has order three, while $-M$ has order six.
\end{proof}

%\subsubsection{Uniform cubic confluence}

Fix $a>0$ and $j\ge1$. In the cubic normal form of Lemma \ref{lem:hcubicnormal},
\[
F(q;C)=F_c(\mu)+\Lambda(\mu)u-\frac{u^3}{3},
\qquad q=q(u,\mu),
\]
set
\[
H(u,\mu):=i\frac{q(u,\mu)}{\sin q(u,\mu)}\,q_u(u,\mu),
\qquad
H_0:=H(0,0)=\frac{i\theta_j}{\gamma_j\sin\theta_j}\ne0.
\]
Since $\Lambda'(0)>0$, we use $\Lambda$ as the local parameter and write $\mu=\mu(\Lambda)$. Set
\[
\Lambda=\varepsilon^2,\qquad u=\varepsilon v,
\qquad P(v)=v-\frac{v^3}{3}.
\]
Then
\[
F-F_c=\varepsilon^3P(v), \qquad F_u=\varepsilon^2-u^2.
\]

For $X$ near $F_c(\mu)$, let $u_\ell(X,\mu)$, $\ell=0,1,2$, denote the three local inverse branches. For a fixed coefficient vector $\mathbf n=(n_0,n_1,n_2)$, define the corresponding pushed density by
\[
\mathcal Q_{\mathbf n,\mu}(X)
:=\sum_{\ell=0}^2n_\ell\,\frac{H(u_\ell(X,\mu),\mu)}{\Lambda(\mu)-u_\ell(X,\mu)^2}.
\]
If $v_\ell(s)$ are the three inverse branches of $P(v)=s$, set
\[
R_{\mathbf n}(s):=\sum_{\ell=0}^2\frac{n_\ell}{1-v_\ell(s)^2}.
\]

This scaling and the Abel transform give the following uniform confluence formulas.

\begin{proposition}
\label{prop:hcubicconfluence}
For fixed $\mathbf n$ and sufficiently small $\varepsilon$:
\begin{enumerate}[label=\textnormal{(\roman*)}]
\item The critical points and values are
\[
u_\pm=\pm\varepsilon,\qquad
D_\pm=F_c\pm\frac23\varepsilon^3,
\]
and, locally uniformly away from $s=\pm2/3$,
\[
\varepsilon^2\mathcal Q_{\mathbf n,\mu}(F_c+\varepsilon^3s)=H_0R_{\mathbf n}(s)+O(\varepsilon).
\]

\item  Let $\omega_{\mu}$ be the action difference and $\widehat S_\mu$ be the corresponding ordinary Borel germ. Along a fixed lifted path avoiding $s=\pm2/3$,
\[
\varepsilon^{1/2}\widehat S_\mu(\omega_\mu+\varepsilon^3s)
=\frac{H_0}{\sqrt\pi}\int_{-\infty}^s
 \frac{R_{\mathbf n}(z)}{\sqrt{s-z}}\dd z+O(\varepsilon^{1/2}).
\]
The action separation, density scale and ordinary Borel scale are $\Lambda^{3/2}$, $\Lambda^{-1}$, $\Lambda^{-1/4}$, respectively.

\item At resonance, every nonzero integral reduced cycle has
\[
\mathcal Q_{\mathbf n,0}(F_c-\delta)
=c_{\mathbf n}\delta^{-2/3}+O(\delta^{-1/3}),
\qquad c_{\mathbf n}\ne0,
\]
and consequently
\[
\widehat S_0(\xi)=C_{\mathbf n}(\omega_0-\xi)^{-1/6}+O(1),\qquad C_{\mathbf n}\ne0.
\]
\end{enumerate}
\end{proposition}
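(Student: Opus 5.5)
The plan is to work entirely in the scaled cubic chart of Lemma~\ref{lem:hcubicnormal}: first show that the rescaled inverse problem degenerates regularly to the model polynomial $P$, then push the resulting uniform density estimates through the Abel transform.

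\emph{Part (i).} In the chart, $F_u=\Lambda-u^2$, so the critical points are $u_\pm=\pm\varepsilon$. Substituting into $F=F_c+\varepsilon^3P(v)$ with $P(\pm1)=\pm2/3$ gives $D_\pm=F_c\pm\tfrac23\varepsilon^3$. For the density, set $X=F_c+\varepsilon^3s$. The three local inverse branches are then exactly $u_\ell=\varepsilon v_\ell(s)$, with $v_\ell$ the roots of $P(v)=s$. This is exact, not approximate, because the normal form is exact in $u$. Hence
\[
\Lambda-u_\ell^2=\varepsilon^2\bigl(1-v_\ell(s)^2\bigr),
\]
and
\[
H(u_\ell,\mu)=H_0+O(|u_\ell|+|\mu|)=H_0+O(\varepsilon),
\]
since $\mu=O(\Lambda)=O(\varepsilon^2)$. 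Multiplying $\mathcal Q_{\mathbf n,\mu}$ by $\varepsilon^2$ gives $H_0R_{\mathbf n}(s)+O(\varepsilon)$. On compacts avoiding $s=\pm2/3$ the $v_\ell$ stay away from $\pm1$, so $1/(1-v_\ell^2)$ is bounded and the error is uniform.

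\emph{Part (ii).} Write the Abel integral \eqref{eq:hAbelconvention} for $\widehat S_\mu$ near $\omega_\mu$, normalizing so that $\omega_\mu$ corresponds to $X=F_c$. Split it into a part from a fixed distance $\eta$ below the block and the local part. The far part is holomorphic and uniformly bounded, so it is $O(1)=\varepsilon^{-1/2}O(\varepsilon^{1/2})$. In the local part, substitute $u=\varepsilon^3 z$. This turns $\dd u/\sqrt{\xi-u}$ into $\varepsilon^{3/2}\dd z/\sqrt{s-z}$, and the density contributes $\varepsilon^{-2}$, for a net factor $\varepsilon^{-1/2}$. Part (i) then gives the limit integrand on compact $z$-sets away from $\pm2/3$.

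This step is the main obstacle. We need domination near $z=\pm2/3$, where the integrand has an integrable $(z\mp2/3)^{-1/2}$ singularity, and at $z\to-\infty$ down to $-\eta\varepsilon^{-3}$. For the first, use the uniform Morse expansions at $u_\pm$ in the scaled variable, which give $|R|\lesssim|z\mp2/3|^{-1/2}$. For the second, the large-$|s|$ behaviour $v_\ell\sim s^{1/3}$ gives $R_{\mathbf n}(z)=O(|z|^{-2/3})$. Together with $(s-z)^{-1/2}$ this makes the tail $\int^{-\infty}|z|^{-7/6}$ convergent. The tail beyond $\eta\varepsilon^{-3}$ contributes $O(\varepsilon^{1/2})$, which fixes the stated error. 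The three scales then read off directly: action separation $\tfrac43\varepsilon^3\sim\Lambda^{3/2}$, density $\varepsilon^{-2}=\Lambda^{-1}$, and Borel $\varepsilon^{-1/2}=\Lambda^{-1/4}$.

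\emph{Part (iii).} At $\mu=0$ the branches are $u_\ell=\omega^\ell(3\delta)^{1/3}$ with $\omega=e^{2\pi i/3}$, so
\[
\frac{1}{-u_\ell^2}=-\omega^{-2\ell}(3\delta)^{-2/3}.
\]
This gives
\[
c_{\mathbf n}=-H_0\,3^{-2/3}\sum_\ell n_\ell\,\omega^{\ell}.
\]
For $\mathbf n\in\Z^3$ with $\sum n_\ell=0$, the sum $\sum_\ell n_\ell\omega^\ell$ vanishes only if $\mathbf n=0$. Indeed, $1,\omega,\omega^2$ span a $2$-dimensional $\Q$-space whose only relation is $1+\omega+\omega^2=0$, so a vanishing sum forces $n_0=n_1=n_2$, and the zero-sum condition then gives $\mathbf n=0$. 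Hence $c_{\mathbf n}\ne0$. Lemma~\ref{lem:hAbelsingular}(2) then gives the $(\omega_0-\xi)^{-1/6}$ term, with $C_{\mathbf n}=c_{\mathbf n}\pi^{-1/2}\mathrm B(\tfrac12,\tfrac16)\ne0$.
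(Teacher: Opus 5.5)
Your proposal is correct and follows the paper's own route. You use the exact rescaling $\Lambda=\varepsilon^2$, $u=\varepsilon v$ of the normal form of Lemma~\ref{lem:hcubicnormal} to the model cubic $P$, then the Abel transform for (ii), and for (iii) the non-cancellation $\sum_\ell n_\ell e^{2\pi i\ell/3}\neq0$ for nonzero integral reduced cycles, which makes explicit the paper's one-line remark that no nonzero integral reduced cycle cancels the leading cubic term.

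One small point in (ii): dominated convergence on its own only gives an $o(1)$ error, not the stated $O(\varepsilon^{1/2})$. The rate also needs the intermediate-range bound
$\bigl|\varepsilon^2\mathcal Q_{\mathbf n,\mu}(F_c+\varepsilon^3z)-H_0R_{\mathbf n}(z)\bigr|\lesssim\varepsilon\sum_\ell(1+|v_\ell(z)|)/|1-v_\ell(z)^2|$, which is $O(\varepsilon|z|^{-1/3})$ for large $|z|$. Its integral against $|s-z|^{-1/2}$ over $|z|\lesssim\eta\varepsilon^{-3}$ is again $O(\varepsilon^{1/2})$. So the tail truncation is not the only source of the $O(\varepsilon^{1/2})$ error, but your own estimate $H-H_0=O(|u|+|\mu|)$ from part (i) closes this gap directly.
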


\noindent The nonvanishing in \textnormal{(iii)} follows because no nonzero integral reduced cycle cancels the leading cubic term.

\subsection{Complete directional Stokes operators}

Assume throughout this subsection that $C$ is nonresonant. We combine the local connection formulas along arbitrary action rays.

\vskip 0.1cm
We first establish convergence of the lateral Laplace integrals.

\begin{lemma}
\label{lem:raygrowth}
Fix an affine ray $X=A+se^{i\theta}$ and a finite incoming continuation. Every continued inverse branch on the tail of the ray satisfies
\[
|q(X)|\le K(1+|X|),\qquad |F'(q(X))|\ge c>0 .
\]
Consequently, for each fixed $n\ge1$,
\[
\left|\frac{W(q(X))^n}{F'(q(X))}\right|=O((1+s)^n).
\]
Hence the complete lateral integrals $\Psi_A^{\theta,\pm}(t)$ converge whenever $\re\!\left(\frac{e^{i\theta}}{t}\right)>0$.
\end{lemma}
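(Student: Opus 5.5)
The plan is to separate the tail of the ray from a compact initial segment and use Lemma~\ref{lem:hcomplexnoescape} on the tail. Proposition~\ref{prop:spectrum} says the ray meets $\Sigma_F$ in finitely many points, so beyond some $s_0$ the ray contains no critical values. The finite incoming continuation, followed by the detours up to $s_0$, produces finitely many inverse branches $q(X)$ (a finite reduced cycle), and each continues along the rest of the ray. The three estimates then have to hold uniformly as $|X|\to\infty$. Lemma~\ref{lem:hcomplexnoescape} only covers bounded $X$, so I would redo its Rouch\'e argument in a form uniform in $X$.

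For the bound $|q(X)|\le K(1+|X|)$, I would rewrite $F(q)=X$ as $\mathcal R_X(q)=0$, that is,
\[
\nu\sin(q+\alpha)=X\frac{\sin q}{q}.
\]
Away from the disks $\mathbb D(\lambda_n,\rho)$ the left side is at least $c_\rho\nu e^{|\im q|}$, while the right side is at most $C|X|e^{|\im q|}/|q|$. So any root with $|q|\ge K|X|$, $K$ large, lies in some $\mathbb D(\lambda_n,\rho)$, and each such disk contains exactly one root. Hence the large roots form a fixed labelled family near $\lambda_n$; I would describe them as "frozen": as $X$ moves they stay in their disks and are carried along by the perturbation \eqref{eq:hinverseroot}. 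Now follow a branch along the tail. Suppose it reached $|q|\ge K(1+|X|)$ for the first time at some moment. It would then lie in a disk $\mathbb D(\lambda_n,\rho)$ with $|\lambda_n|\gtrsim K|X|$. But a root in such a disk is unique and depends continuously on $X$, so going back, the branch must always have been the frozen root of that disk. That is impossible unless the branch started there, and then its modulus is comparable to $|\lambda_n|$, a fixed number. Choosing $K$ larger than the finitely many initial moduli closes the bound. I expect this to be the main obstacle, because the "first exit" argument needs care with constants. As a backup I would use the derivative bound $|dq/dX|=1/|F'(q)|\le 1/c$ from the next step and integrate, giving $|q(X)-q(X_0)|\le |X-X_0|/c$ directly. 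So the plan is to prove the lower bound on $F'$ first, using only that $q$ is on a continued branch.

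For $|F'(q(X))|\ge c$, I would use the tail description of the critical points. The critical values on the tail lie along the four asymptotic directions $\pm(y-ia)$, $\pm(y+ia)$ of Lemma~\ref{lem:tails}. By Proposition~\ref{prop:spectrum}, for large $s$ the ray point $X$ stays at distance $\ge c'(1+s)$ from these tails, or at least at distance bounded below from all critical values. Suppose $|F'(q)|$ were small along a sequence. Then $q$ would approach a critical point $z_n$; this uses $F'=aG_C/\sin^2 q$ and the fact that, away from poles and off the disks around the $z_n$, the quotient $|G_C(q)/\sin^2 q|$ is bounded below, since $G_C\sim z$ is dominant. Then $X=F(q)$ would be close to $F(z_n)$, because $F''$ is bounded near $z_n$ (as $\cot z\approx -i$). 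That contradicts the separation. Near the real poles $k\pi$ one has $|F'|\sim ak\pi/|q-k\pi|^2$, which is large, so those regions cause no problem.

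Finally, $|W(q)|=|q/\sin q|\le K'(1+s)$, because on the frozen disks $q$ stays near $\lambda_n$, away from $\pi\Z$, so $|\sin q|$ is bounded below. On the bounded-region branches this holds by compactness. This gives the stated $O((1+s)^n)$ bound. Multiplying by $|e^{-se^{i\theta}/t}|=e^{-s\re(e^{i\theta}/t)}$, the tail of the lateral integral converges absolutely. The finitely many semicircular detours are compact and so contribute finitely. Hence $\Psi_A^{\theta,\pm}(t)$ converges.
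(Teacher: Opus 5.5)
Your bounds on $|q|$ and $|F'|$ go through by a route different from the paper's. The first-exit Rouch\'e argument works, using that $|X(s)|\asymp s$ is increasing on the tail. Your lower bound on $|F'|$ does not use the bound on $|q|$, so the backup $|q(s)-q(s_0)|\le (s-s_0)/c$ is not circular.

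\textbf{The gap is the bound on $W$.} You assume that every continued branch on the tail is either frozen in some $\mathbb D(\lambda_n,\rho)$ or stays in a bounded region where $|\sin q|$ is bounded below. That dichotomy is false, and your own estimate $|q|\le K(1+|X|)$ already allows other behaviour.
\begin{itemize}
\item Along the positive ray, the nonreal pair continued past $A_{m+1}$ escapes to infinity with $q\sim X/(y\mp ia)$.
\item The pole-side sheet leaving each $B_k$ tends to $k\pi$ (Proposition~\ref{prop:hlabelinduction}). Writing $q=k\pi+\delta$, one finds $\delta\sim ak\pi/X$ and $|W(q)|\sim |X|/a\to\infty$.
\end{itemize}
So no compactness argument can bound $W$. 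It is genuinely unbounded on exactly the branches the lemma is used for, and its growth has to be controlled through $X$.

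\textbf{The missing idea is an algebraic identity on the inverse curve.} The equation $F(q)=X$ means $aq\cot q=X-yq$, hence
\[
W(q)^2=q^2\bigl(1+\cot^2 q\bigr)=q^2+\frac{(X-yq)^2}{a^2}.
\]
Combined with $|q|\le K(1+|X|)$, this gives $W(q)=O(1+s)$ on all branches at once, with no case analysis. This is how the paper concludes.

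The paper also obtains the other two estimates from the companion identity $aqF'(q)=-P_X(q)$, where $P_X(q)=(a^2+y^2)q^2-2yXq+X^2-aX$:
\begin{itemize}
\item If $|q|\ge M(1+|X|)$, then $|F'(q)|\ge c|q|$, so $\frac{d}{ds}|q|^2=O(1)$.
\item If $F'(q_j)\to 0$, then $q_j/X_j\to 1/(y\mp ia)$. This forces $\im q_j\to\pm\infty$ and a nonzero limit of $F''(q_j)$. Hence there are nearby critical points whose values approach the ray, contradicting Proposition~\ref{prop:spectrum}.
\end{itemize}
Your $q$-plane picture is an acceptable substitute for these two steps. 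However, your stated reason that $|F'|$ is bounded below off the critical disks, namely that $G_C\sim z$ dominates, fails where $\im q\gg\frac12\log|q|$. There $b_Ce^{-2iq}$ dominates and $F'(q)\to a(C-i)\ne 0$ in the upper half-plane. The lower bound survives, but that region needs its own argument.
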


\begin{proof}
Remove a bounded initial segment of the ray so that it contains no critical values. Along an inverse branch $F(q)=X$, with $X=A+se^{i\theta}$, we obtain
\[
aqF'(q)=-P_X(q),\qquad
P_X(q)=(a^2+y^2)q^2-2yXq+X^2-aX,
\]
and
\[
W(q)^2=q^2+\frac{(X-yq)^2}{a^2}.
\]

If $|q|\ge M(1+|X|)$ for $M$ sufficiently large, then
$|P_X(q)|\ge c|q|^2$, hence $|F'(q)|\ge c|q|$. Since
$q'(s)=e^{i\theta}/F'(q(s))$, this implies
$\frac{\dd}{\dd s}|q(s)|^2=O(1)$ on such intervals. As $|X(s)|\asymp s$, we obtain
\[
|q(X)|\le K(1+|X|).
\]

Suppose that $F'(q(X))$ is not bounded away from zero. Then for some $X_j\to\infty$ on the ray,
$F'(q_j)\to0$, where $q_j=q(X_j)$. Since
$P_{X_j}(q_j)=o(|q_j|)$ and $q_j=O(|X_j|)$, after passing to a subsequence,
\[
\frac{q_j}{X_j}\longrightarrow\frac1{y-ia}
\quad\text{or}\quad\frac1{y+ia}.
\]
Then the identity
\[
e^{2iq}=\frac{X-(y-ia)q}{X-(y+ia)q}
\]
gives $\im q_j\to+\infty$ or $\im q_j\to-\infty$. In either case $F''(q_j)$ tends to a nonzero limit. Hence, by the local inverse theorem applied to $F'$, there are critical points $z_j$ with $z_j-q_j\to0$ and therefore $F(z_j)-X_j\to0$.

This contradicts the transverse asymptotics of the critical values in Proposition~\ref{prop:spectrum}: their distance from any fixed affine line tends to infinity along each of the four tails. Thus $|F'(q(X))|\ge c>0$ on the tail.

Finally, the preceding identities give $W(q(X))=O(1+s)$, and hence
\[
\left|\frac{W(q(X))^n}{F'(q(X))}\right|
=O((1+s)^n).
\]
Laplace convergence follows from the exponential factor $e^{-se^{i\theta}/t}$ whenever $\re(e^{i\theta}/t)>0$.
\end{proof}

\iffalse
\begin{proof}
The asymptotic root estimates of Lemma \ref{lem:hcomplexnoescape} give $|q(X)|=O(1+|X|)$ along the ray. On an inverse root,
\[
F'(q)=-\frac{P_X(q)}{aq}, \qquad
 P_X(q)=(a^2+y^2)q^2-2yXq+X^2-aX,
\]
and
\[
W(q)^2=q^2+\frac{(X-yq)^2}{a^2}.
\]
The two roots of $P_X$ are asymptotic to $X/(y\mp ia)$. If $F'(q(X))$ approached $0$, the inverse branch would approach one of these roots, contradicting the corresponding large-$X$ inverse asymptotics. Thus $|F'(q(X))|$ is bounded below, and the last identity gives the polynomial density bound. Exponential decay along the ray then gives the stated Laplace convergence.
\end{proof}
\fi

We write the directional Stokes relation as
\[
\boldsymbol\Psi^{\theta,+}
=S_\theta\,\boldsymbol\Psi^{\theta,-}.
\]

\begin{theorem}
\label{thm:directional}
For every direction $\theta$ and every $A\in\Sigma_F$,
\begin{equation}\label{eq:directionalStokes}
\Psi_A^{\theta,+}=\Psi_A^{\theta,-}
+\sum_{B\in\Sigma_F\cap(A+e^{i\theta}\R_+)}
s_{AB}^{\theta}\Psi_B^{\theta,-},
\qquad
s_{AB}^{\theta}\in\{-2,-1,0,1,2\}.
\end{equation}
The sum is finite. Moreover, $S_\theta$ is row-finite and unipotent on each affine-line block; hence $S_{\theta}^{-1}$ and $\Lambda_\theta:=\log S_\theta$ are also row-finite.
\end{theorem}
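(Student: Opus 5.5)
The plan is to compare the two lateral Laplace integrals $\mathcal S_{\theta,+}\Phi_A$ and $\mathcal S_{\theta,-}\Phi_A$ by deforming $\gamma_{\theta,+}$ onto $\gamma_{\theta,-}$. What remains is a sum of local discontinuity contributions, one for each point $\omega=B-A$ of $(\Sigma_F-A)\cap e^{i\theta}\R_+$.

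\textbf{Finiteness and convergence.} By Proposition~\ref{prop:spectrum}, the ray $A+e^{i\theta}\R_+$ lies on an affine line that meets $\Sigma_F$ in finitely many points. So only finitely many detours occur, and the sum in \eqref{eq:directionalStokes} is finite. Lemma~\ref{lem:raygrowth} gives polynomial bounds for every continued inverse branch on the tail of the ray. The differences of the upper and lower continuations are therefore controlled beyond the last singular point, and all complete lateral integrals converge for $\re(e^{i\theta}/t)>0$.

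\textbf{Identification of the jump.} Order the singular points $\omega_1<\dots<\omega_N$ along the ray, in the sense of distance from $A$, and write $\gamma_{\theta,+}-\gamma_{\theta,-}$ as a telescoping sum. Each term is a loop around one $\omega_\ell$ followed by the lower path. At $\omega_\ell$ the incoming cycle is the continuation of the two-sheet cycle of $A$ along the lower detours of the earlier points. Since $C$ is nonresonant, Proposition~\ref{prop:hactiondistinct} gives a unique Morse point above $B_\ell=A+\omega_\ell$. Proposition~\ref{prop:hMorseconnection} then writes the local discontinuity as $(n_{j}-n_{i})\rho_{B_\ell}$, and Laplace integration of $\rho_{B_\ell}$ along the lower path from $\omega_\ell$ gives $e^{-\omega_\ell/t}\,\mathcal S_{\theta,-}\Phi_{B_\ell}$. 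The factor $e^{-A/t}$ turns this into $s^\theta_{AB_\ell}\Psi^{\theta,-}_{B_\ell}$ with $s^\theta_{AB_\ell}=n_j-n_i$. The coefficients are bounded because the continued cycle stays of the form $[q]-[q']$ up to sign: monodromy acts by permutations (Proposition~\ref{prop:hregularcovering}), so the transported initial cycle is always a difference of two sheets. Hence $n_j-n_i\in\{-2,-1,0,1,2\}$, where $\pm2$ occurs when the two vanishing sheets are exactly the two sheets of the cycle with opposite orientation.

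\textbf{Matrix statements.} Order $\Sigma_F\cap L$, for an affine line $L$ of direction $e^{i\theta}$, by the parameter along $L$. Then $S_\theta$ is upper triangular with ones on the diagonal on each line block, and each row has finitely many entries. The whole block is not finite, since a line can contain only finitely many points of $\Sigma_F$; the matrix is therefore unipotent, indeed finite-dimensional on each block, so $S_\theta-I$ is nilpotent on the block. It follows that $S_\theta^{-1}=\sum_k(-1)^k(S_\theta-I)^k$ and $\log S_\theta=\sum_k (-1)^{k+1}(S_\theta-I)^k/k$ are finite sums, hence row-finite.

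\textbf{Main obstacle.} The step I expect to be hardest is the bookkeeping that justifies $s^\theta_{AB}\in\{-2,\dots,2\}$: one has to make precise that the transported cycle stays a difference of two sheets, and to fix the orientation of each $\rho_B$ consistently. I would handle this by transporting the oriented cycle of $A$ via Proposition~\ref{prop:hregularcovering}. A further point to check is that deforming the tails causes no boundary contribution at infinity; this follows from Lemma~\ref{lem:raygrowth} and the exponential factor.
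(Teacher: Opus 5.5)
Your overall strategy is the paper's. You get finiteness from Proposition~\ref{prop:spectrum}, convergence from Lemma~\ref{lem:raygrowth}, and local jumps from Proposition~\ref{prop:hMorseconnection}. You bound $|n_j-n_i|\le2$ because continuation only permutes sheets, and you get nilpotency of $S_\theta-I$ on each finite line block. (Your sentence ``the whole block is not finite'' should say that each block \emph{is} finite.)

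The identification of the jump, however, contains a concrete bookkeeping error. The outgoing germ at $\omega_\ell$ is continued along the lower path, as it must be to produce $\Psi_{B_\ell}^{\theta,-}$. The incoming cycle at $\omega_\ell$ must then be transported along the \emph{upper} detours of $\omega_1,\dots,\omega_{\ell-1}$, not the lower ones. Let $U_k,D_k$ be the transports of reduced zero-cycles past $\omega_k$ above and below. Then
\[
U_N\cdots U_1-D_N\cdots D_1=\sum_{\ell=1}^{N}D_N\cdots D_{\ell+1}\,(U_\ell-D_\ell)\,U_{\ell-1}\cdots U_1,
\]
which is exactly the paper's ``change the detours from lower to upper in increasing action order.'' The mirror telescoping (lower before, upper after) is also valid, but it produces $\Psi_B^{\theta,+}$ on the right. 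Your lower-before/lower-after recipe is neither.

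This error is not harmless. Take the positive ray with $\theta_1<C<\theta_2$.
\begin{itemize}
\item The cycle of $A_1$ is itself the vanishing cycle at $B_1$, so $U_1\delta_{A_1}=-D_1\delta_{A_1}$.
\item After $B_1$, its increasing real sheet (coefficient $\pm1$) meets, at $A_2$, a sheet with coefficient $0$.
\item So the coefficient of $\Psi_{A_2}^{0,-}$ computed from $D_1\delta_{A_1}$ is the negative of the correct one.
\item Your identity would therefore be off by $\pm2\Psi_{A_2}^{0,-}\neq0$.
\end{itemize}

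Once you take the incoming transport above, the rest of your argument goes through unchanged. The cycle $U_{\ell-1}\cdots U_1\delta_A$ is still a difference of two sheets, so $s^\theta_{AB}=n_j-n_i\in\{-2,\dots,2\}$ as you claimed.
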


\begin{proof}
Change the detours from lower to upper in increasing action order. Since the source cycle is a difference of two inverse sheets, the Picard--Lefschetz coefficient at each crossing is $n_j-n_i\in\{0,\pm1,\pm2\}$. Proposition~\ref{prop:hMorseconnection} gives the local jumps; their sum is \eqref{eq:directionalStokes}. The sum is finite by Proposition~\ref{prop:spectrum}, and the lateral Laplace integrals converge by Lemma \ref{lem:raygrowth}.

On each affine-line block the strict part $S_\theta-I$ is finite and upper triangular in action order, hence nilpotent. The remaining claims follow.
\end{proof}

For $\omega\in e^{i\theta}\R_+$, define the alien operator by
\begin{equation}\label{eq:alien}
\Delta_\omega\Phi_A:=(\Lambda_\theta)_{A,A+\omega}\Phi_{A+\omega},
\end{equation}
with the right-hand side understood as zero if $A+\omega\notin\Sigma_F$.

The multiplicative Stokes action extends to finite polynomial expressions in the labelled saddle sectors, and the alien operators $\Delta_\omega$ satisfy the action-graded Leibniz rule \cite[Sections~29--30]{Sauzin}.

\subsection{The positive-ray Stokes matrix and logarithm}
Suppose $\theta_m<C<\theta_{m+1}$, with $m=0$ interpreted as $0<C<\theta_1$. Order the real-action basis as
\[
(A_1,B_1,A_2,\ldots,B_m,A_{m+1}).
\]
We orient the vanishing cycles by
\[
\delta_{A_k}=[q_{\rm low}]-[q_{\rm up}],\qquad
\delta_{B_k}=[q_{\rm L}]-[q_{\rm R}],
\]
at maxima and minima, respectively.

\begin{theorem}\label{thm:positive}
In these orientations, the full positive-direction Stokes matrix is
\begin{equation}\label{eq:positiveS}
 (S_0)_{ij}=
 \begin{cases}
 0,&\quad i>j,\\
 1,&\quad i=j,\\
 2,&\quad i=2k-1,\ j=2k,\ 1\le k\le m,\\
 1,&\quad i<j~\text{ in every other case}.
 \end{cases}
\end{equation}
The formula is constant throughout each nonresonant real parameter chamber. Its logarithm has no strict $A$-to-$A$ or $B$-to-$B$ entries.
\end{theorem}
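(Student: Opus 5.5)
The plan is to compute $S_0$ row by row. For a source action $A$ in the real basis, we deform the lower-detour continuation of $\rho_A$ along $\gamma_{0,-}$ into the upper-detour one along $\gamma_{0,+}$, crossing one real critical value at a time in increasing order. At each crossing we read off the Picard--Lefschetz coefficient of Proposition~\ref{prop:hMorseconnection}. Lemma~\ref{lem:hremotecritical} shows that no nonreal critical values lie on the positive ray from $A$, so only the real values listed in Corollary~\ref{thm:hclassification} contribute. Rows are therefore indexed by $(A_1,B_1,\dots,A_{m+1})$. Entries with $i>j$ vanish because the ray from a given source only meets larger actions. Diagonal entries are $1$ by construction.

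\emph{Step 1: transport the source cycle.} For a source $A_k$ with vanishing cycle $[q_{\rm low}]-[q_{\rm up}]$, the lateral continuations along $\Gamma_\pm$ of the two colliding sheets are governed by Proposition~\ref{prop:hlabelinduction}, shifted to start at $A_k$. The pair is nonreal on $(A_k,B_k)$. At $B_k$ it becomes the two real roots $q_{\rm L},q_{\rm R}$. One of these (label $0$) continues along the increasing real branch and meets the next sheet at $A_{k+1}$, then at $A_{k+2}$, and so on. The other (label $k$) escapes toward the pole $k\pi$ and never collides again. For a source $B_k$ with cycle $[q_{\rm L}]-[q_{\rm R}]$, the same picture applies: one sheet rides the increasing branch to $A_{k+1}$, and the label transport repeats from there.

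\emph{Step 2: compute the coefficients.} At each later crossing $D$, the coefficient is $n_j-n_i$, where $n$ is the current coefficient vector of the transported cycle. Away from the first minimum after the source, exactly one of the colliding sheets carries coefficient $\pm1$, which gives entry $\pm1$. At the first minimum $B_k$ following a source $A_k$, however, both colliding sheets $q_{\rm L},q_{\rm R}$ belong to the source cycle with opposite coefficients. This gives $n_j-n_i=\pm2$, which is exactly the entry $(2k-1,2k)$. We then fix the signs using the chosen orientations $\delta_{A_k}=[q_{\rm low}]-[q_{\rm up}]$ and $\delta_{B_k}=[q_{\rm L}]-[q_{\rm R}]$. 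Using the increasing Morse coordinates $F=A_k-u_A^2$ and $F=B_k+u_B^2$ fixed before Proposition~\ref{prop:hlabelinduction}, we check that the upper-minus-lower variation always produces $+$ signs. The main bookkeeping obstacle is this sign check. It must be verified in the three distinct local situations $A\to B$ (same pair), $A\to A'$ or $B\to A'$ (one real sheet meeting a new pole-side sheet), and $A\to B'$ or $B\to B'$ (a transported sheet meeting a new conjugate pair at its minimum). Here I would first verify the smallest case $m=1$ explicitly, and then argue that the local configuration at each crossing is the same up to shifting labels.

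\emph{Step 3: constancy and logarithm.} Constancy on a chamber $\theta_m<C<\theta_{m+1}$ follows because the critical values move continuously, stay ordered and distinct (Proposition~\ref{prop:hactiondistinct}), and no resonance occurs, so the integer coefficients cannot jump. For the logarithm, write $S_0=I+N$. Note that $N$ restricted to the alternating pattern has a structure: the rows of $A$- and $B$-sources differ only in the $2$ entry. Then compute $\log S_0=\sum_{r\ge1}(-1)^{r+1}N^r/r$ as a finite sum. I expect a cleaner route through the factorization $I+N=(I+E)(I+J)$. Here $J$ is the all-ones strictly upper triangular matrix, whose logarithm is explicit because $(I+J)^{-1}$ is bidiagonal. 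The factor $E$ corrects the $(2k-1,2k)$ entries. The target is to show that $(\log S_0)_{ij}=0$ whenever $i<j$ have the same parity, i.e.\ there are no strict $A$-to-$A$ or $B$-to-$B$ entries. I would first test $m=1,2$ numerically by hand to guess the closed form, then verify it by checking $\exp$ of the guess.
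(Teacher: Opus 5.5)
Your Steps 1--2 are the paper's own argument: label transport from Proposition~\ref{prop:hlabelinduction}, Picard--Lefschetz coefficient $n_j-n_i=2$ at the first minimum after an $A_k$-source, and $\pm1$ at every later crossing. The sign check you single out does go through. An upper detour at a maximum sends the increasing sheet to $q_{\rm up}$, and an upper detour at a minimum sends $q_{\rm up}$ to $q_{\rm R}$. So from the first minimum on, the increasing sheet carries coefficient $-1$ in the upper-transported cycle, and every later jump is $+\delta_D$. Constancy on a chamber is also fine.

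The genuine gap is the last sentence of the theorem. Your factorization $S_0=(I+E)(I+J)$ does not separate the logarithm. Here $E=E_0(I+J)^{-1}$ has entries $+1$ at $(2k-1,2k)$ and $-1$ at $(2k-1,2k+1)$, and it does not commute with $J$. The Baker--Campbell--Hausdorff corrections are exactly where the parity cancellation has to happen. Checking $m=1,2$ and then verifying $\exp$ of a guess proves nothing for general $m$ without a closed form valid for all $m$. Moreover, the true entries (e.g.\ $(\log S_0)_{A_k,B_{k+1}}=\tfrac13$, $(\log S_0)_{B_k,A_{k+2}}=\tfrac23$) are coefficients of an $\operatorname{artanh}$ generating function, which two small cases will not reveal.

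The missing idea is the shift invariance of $S_0$ in the pairs $(A_k,B_k)$. $S_0$ is the leading principal section of an infinite upper-triangular block-Toeplitz matrix with $2\times2$ blocks. Leading sections of upper-triangular matrices commute with products, hence with the polynomial $\log$ of a unipotent matrix, and block-Toeplitz products correspond to products of symbols. So $\log S_0$ can be read off from $\log S(z)$, where
$S(z)=\frac{1}{1-z}\begin{pmatrix}1&2-z\\ z&1\end{pmatrix}=\frac{I+B(z)}{1-z}$,
$B$ is antidiagonal, and $B^2=z(2-z)I$. Because $\tfrac12\log\bigl(1-z(2-z)\bigr)=\log(1-z)$, the scalar part of $\log(I+B)$ cancels $-\log(1-z)$. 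This leaves $\log S(z)=h(z)B(z)$, which is antidiagonal, so there are no $A$-to-$A$ or $B$-to-$B$ entries. This is how the paper obtains the claim, together with Proposition~\ref{prop:positiveLog}.

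If you want to stay with finite matrices, there is an equivalent criterion. Let $D=\operatorname{diag}(\epsilon_i)$ with $\epsilon=+1$ on $A$-labels and $-1$ on $B$-labels, and show $(DS_0)^2=I$. Then $DS_0D=S_0^{-1}$, so $D(\log S_0)D=-\log S_0$, which forces all same-type entries to vanish. The identity $(DS_0)^2=I$ is a direct chain count. For example, for $i=A_p$, $j=A_q$ one gets $\sum_k\epsilon_i\epsilon_k(S_0)_{ik}(S_0)_{kj}=1-2+(q-p-1)-(q-p-1)+1=0$, and the other three type combinations cancel in the same way.
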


Figure \ref{fig:positive-ray-stokes} shows the Picard--Lefschetz coefficients: $2$ at the first collision of a sector based at $A_k$, and $1$ at each later collision.

\begin{figure}[h]
\centering
\begin{tikzpicture}[x=1.3cm,y=1.0cm,>=latex]

% ---------- left panel ----------
\node at (2,2.1) {\small source sector at $A_k$};
\draw[->] (-0.5,1) -- (4.8,1);
\node at (0,1.25) {$A_k$};
\node at (2,1.25) {$B_k$};
\node at (4,1.25) {$A_{k+1}$};
\foreach \x in {0,2,4} \draw (\x,1.07)--(\x,0.93);

\draw[thick,blue] (0,1) .. controls (0.8,1.7) and (1.2,1.7) .. (2,1);
\draw[thick,blue] (0,1) .. controls (0.8,0.3) and (1.2,0.3) .. (2,1);
\node[blue] at (1,1.7) {$2$};

\draw[thick,red,->] (2,1) .. controls (2.8,1.7) and (3.2,1.7) .. (4,1);
\draw[thick,dashed] (2,1) .. controls (2.8,0.4) and (3.1,0.15) .. (3.8,0.1);
\node[red] at (3,1.7) {$1$};

% ---------- right panel ----------
\node at (8,2.1) {\small source sector at $B_k$};
\draw[->] (5.5,1) -- (10.8,1);
\node at (6,1.25) {$B_k$};
\node at (8,1.25) {$A_{k+1}$};
\node at (10,1.25) {$B_{k+1}$};
\foreach \x in {6,8,10} \draw (\x,1.07)--(\x,0.93);

\draw[thick,dashed] (6,1) .. controls (6.8,0.4) and (7.1,0.15) .. (7.8,0.1);
\draw[thick,red,->] (6,1) .. controls (6.8,1.7) and (7.2,1.7) .. (8,1);
\draw[thick,red,->] (8,1) .. controls (8.8,1.7) and (9.2,1.7) .. (10,1);

\node[red] at (7,1.7) {$1$};
\node[red] at (9,1.7) {$1$};

\end{tikzpicture}
\caption{Picard--Lefschetz coefficients along the positive ray.}
\label{fig:positive-ray-stokes}
\end{figure}

\begin{proof}
By Proposition \ref{prop:hlabelinduction}, the sector based at $A_k$ reaches $B_k$ with the two colliding sheets carrying coefficients $+1$ and $-1$. Hence the Picard--Lefschetz coefficient at this first collision is
\[
n_j-n_i=2,
\]
with the chosen orientation. Thus
\[
s_{A_k,B_k}=2.
\]

After crossing $B_k$, one of these sheets follows the pole-side branch and does not meet any later positive critical value, while the other continues along the increasing branch. At each later collision this branch has coefficient $\pm1$ and the newly encountered sheet has coefficient $0$. Therefore
\[
|n_j-n_i|=1,
\]
and our orientation convention gives coefficient $+1$. A sector based at $B_k$ has one pole-side branch and one increasing branch, so the same calculation applies at every later collision.

Thus, in the action ordering
\[
 A_1<B_1<A_2<\cdots<B_m<A_{m+1},
\]
the only exceptional off-diagonal entries are $s_{A_k,B_k}=2$; all other later transitions have coefficient $1$. Earlier actions are not crossed, so the entries below the diagonal are $0$; the diagonal entries are $1$. This proves \eqref{eq:positiveS}.
\end{proof}

To compute the logarithm, pair the labels as $(A_k,B_k)$ and let the formal variable $z$ encode a shift by one pair. The corresponding block symbol of $S_0$ is
\begin{equation}\label{eq:symbol}
S(z)=\frac{1}{1-z}
 \begin{pmatrix}
  1&2-z\\
  z&1
 \end{pmatrix}.
\end{equation}
Set
\[
B(z)=
 \begin{pmatrix}
  0&2-z\\
  z&0
 \end{pmatrix},
 \qquad u^2=z(2-z).
\]
Then
\[
S(z)=\frac{I+B(z)}{1-z}, \qquad B(z)^2=u^2I.
\]
Hence, formally,
\[
\log(I+B)=\frac12\log(1-u^2)\,I+
\frac{\operatorname{artanh}u}{u}\,B=\log(1-z)\,I+
\frac{\operatorname{artanh}u}{u}\,B.
\]
Therefore
\begin{equation}\label{eq:logsymbol}
\log S(z)=h(z)B(z),\qquad
h(z):=\frac{\operatorname{artanh}\sqrt{z(2-z)}}
      {\sqrt{z(2-z)}}.
\end{equation}

Extracting coefficients gives the following formula.
\begin{proposition}
\label{prop:positiveLog}
Let $\Lambda_0=\log S_0$ and $h(z)$ be as above. Then, whenever the indicated labels occur,
\begin{align*}
(\Lambda_0)_{A_k,B_l}&=[z^{\,l-k}](2-z)h(z),\qquad l\ge k,\\
(\Lambda_0)_{B_k,A_l}&=[z^{\,l-k-1}]h(z),\qquad l>k,\\
(\Lambda_0)_{A_k,A_l}&=(\Lambda_0)_{B_k,B_l}=0,\qquad l>k.
\end{align*}
Here $[z^n]$ denotes coefficient extraction.
\end{proposition}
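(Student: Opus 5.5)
The plan is to prove the proposition by turning $\log S_0$ into a computation in a commutative ring of $2\times2$ block symbols.

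\emph{Step 1: block Toeplitz form.} Group the ordered basis $(A_1,B_1,\dots,A_m,B_m,A_{m+1})$ into pairs $(A_k,B_k)$. By \eqref{eq:positiveS}, the block $(k,l)$ of $S_0$ depends only on $l-k$. It equals
\[
\begin{pmatrix}1&2\\0&1\end{pmatrix}\ (l=k),\qquad
\begin{pmatrix}1&1\\1&1\end{pmatrix}\ (l>k),
\]
and vanishes for $l<k$. The generating series in $z^{l-k}$ is
\[
\begin{pmatrix}1&2\\0&1\end{pmatrix}+\frac{z}{1-z}\begin{pmatrix}1&1\\1&1\end{pmatrix}.
\]
This equals \eqref{eq:symbol}. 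The unpaired last label $A_{m+1}$ is handled by embedding into a larger chamber, as explained in Step 3.

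\emph{Step 2: the symbol map is a ring homomorphism.} Upper block-triangular block-Toeplitz matrices multiply as truncated power series, because the truncation at a finite size is compatible with products of upper-triangular matrices. So $\mathbf T\mapsto S(z)\in M_2(\C[[z]])$ is an algebra homomorphism. Since $S_0-I$ is nilpotent on the finite block (Theorem~\ref{thm:directional}), $\log S_0=\sum_{n\ge1}(-1)^{n+1}(S_0-I)^n/n$ is a finite sum. Its symbol is the formal logarithm $\log S(z)$ in $M_2(\C[[z]])$, which is well defined because $S(0)-I$ is nilpotent.

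Next, verify \eqref{eq:logsymbol}. The scalar $(1-z)^{-1}I$ commutes with $I+B$, and $B^2=u^2I$, so
\[
\log S=\log(I+B)-\log(1-z)I.
\]
Expand $\log(I+B)$ using $B^{2n}=u^{2n}I$ and $B^{2n+1}=u^{2n}B$. The even part sums to $\tfrac12\log(1-u^2)I=\log(1-z)I$, since $1-u^2=(1-z)^2$. The odd part sums to $(\operatorname{artanh}u/u)B$.

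The one subtle point is that $u=\sqrt{z(2-z)}$ is not a power series in $z$. This causes no difficulty, because both $(\operatorname{artanh}u)/u$ and $(1/2)\log(1-u^2)$ are power series in $u^2=z(2-z)$ without constant term issues. Hence everything stays in $\C[[z]]$. It must also be checked that the formal log of $I+B$ is legitimate even though $B(0)=\begin{pmatrix}0&2\\0&0\end{pmatrix}\ne0$. Here $B(0)$ is nilpotent, so the series $\sum(-1)^{n+1}B^n/n$ converges $z$-adically: $B^2=u^2I$ is divisible by $z$.

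\emph{Step 3: coefficient extraction.} Since $\log S(z)=h(z)B(z)$, the entries of the symbol are
\[
(1,1)=0,\quad (2,2)=0,\quad (1,2)=(2-z)h(z),\quad (2,1)=zh(z).
\]
Reading off $[z^{l-k}]$ gives the three displayed formulas, with the shift by one in the $B\to A$ entry coming from the factor $z$. The vanishing diagonal entries give the claim that there are no strict $A$-to-$A$ or $B$-to-$B$ entries.

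For $A_{m+1}$ (and, more generally, any truncation), note that $S_0$ is the leading principal submatrix of the corresponding infinite upper-triangular Toeplitz-type matrix. The logarithm of an upper-triangular unipotent matrix is computed entrywise from principal submatrices: $(\log T)_{ij}$ depends only on the entries $T_{kl}$ with $i\le k\le l\le j$. The entries with $l=A_{m+1}$ are therefore read from the same symbol. This is why the statement says ``whenever the indicated labels occur.''

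The main obstacle is Step 2, namely justifying that the formal identity in $M_2(\C[[z]])$, which involves the non-polynomial $u$ and a non-scalar leading term, coincides with the finite matrix logarithm. Working $z$-adically, and using the principal-submatrix compatibility of $\log$ for upper-triangular unipotent matrices, is how I intend to settle it.
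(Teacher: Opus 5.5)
Your proposal is correct and follows the paper's route: pair $(A_k,B_k)$, identify the block symbol $S(z)=(I+B(z))/(1-z)$, use $B^2=u^2I$ with $1-u^2=(1-z)^2$ to get $\log S(z)=h(z)B(z)$, and read off the coefficients. Your additions are the justification the paper compresses into ``formally'' and ``extracting coefficients'': the algebra isomorphism between upper block-triangular block-Toeplitz matrices and $M_2(\C[z]/z^N)$, the $z$-adic convergence of $\log(I+B)$ from $B^2\in zM_2(\C[[z]])$, and the handling of the unpaired $A_{m+1}$ through leading principal submatrices. One small correction: the commutative ring you actually compute in is the subalgebra $\C[[z]][B(z)]$, not all of $M_2(\C[[z]])$.
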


\begin{example}
\label{ex:primitivecancellation}
If $\theta_1<C<\theta_2$, then
\[
 S_0=
 \begin{pmatrix}
  1&2&1\\
  0&1&1\\
  0&0&1
 \end{pmatrix},
 \qquad
 \Lambda_0=
 \begin{pmatrix}
  0&2&0\\
  0&0&1\\
  0&0&0
 \end{pmatrix}.
\]
Hence
\[
 \Delta_{B_1-A_1}\Phi_{A_1}=2\Phi_{B_1},
 \qquad
 \Delta_{A_2-B_1}\Phi_{B_1}=\Phi_{A_2},
 \qquad
 \Delta_{A_2-A_1}\Phi_{A_1}=0.
\]
Nevertheless, $A_2-A_1$ is a singularity of the continued Borel germ, since
\[
(S_0)_{13}=\frac12(\Lambda_0^2)_{13}=1.
\]
Thus the full Borel singular support can be strictly larger than the primitive alien support.
\end{example}

\subsection{Sectorial completion of countably many actions}
We now combine the directional Stokes matrices in a fixed angular sector. There may be infinitely many Stokes directions, but each matrix entry involves only finitely many intermediate actions under the order defined below. For a sheaf-theoretic framework for resurgent continuation and Stokes data, see \cite{KapranovSoibelman}.

\begin{definition}
Let $\Gamma\subset\C$ be a closed angular cone of aperture strictly less than $\pi$, with vertex $0$ and bisector angle $\phi$. Set $h_\phi(z):=\re(e^{-i\phi}z)$. Then
\[
 h_\phi(z)\ge c_\Gamma |z|,
 \qquad \text{ for }z\in\Gamma~\text{and some }c_\Gamma>0.\]

For $A,B\in\Sigma_F$, write
\[
A\preceq_\Gamma B
\quad\Longleftrightarrow\quad
B-A\in\Gamma.
\]
The corresponding action interval is
\[
[A,B]_\Gamma:=\Sigma_F\cap(A+\Gamma)\cap(B-\Gamma).
\]
\end{definition}
The interval lies in the disk $|X-A|\le h_{\phi}(B-A)/c_{\Gamma}$, so Proposition \ref{prop:spectrum} gives
\begin{equation}\label{eq:intervalcount}
\#[A,B]_\Gamma\leq K_0+\frac{K_1}{c_{\Gamma}}h_{\phi}(B-A)   
\end{equation}

\begin{definition}
Let $I_\Gamma(\Sigma_F)$ be the space of matrices $K=(K_{AB})_{A,B\in\Sigma_F}$ supported on $A\preceq_\Gamma B$, with
\[
(KL)_{AB}=\sum_{C\in[A,B]_\Gamma}K_{AC}L_{CB}.
\]
Give it the topology of coefficientwise convergence. The diagonal-one matrices form a group under this product, and their logarithms are defined coefficientwise by finite chain sums.
\end{definition}

For each ray
\[
 \ell=e^{i\theta}\R_+\subset\Gamma,
\]
regard the directional Stokes matrix $S_\theta$ of Theorem \ref{thm:directional} as an element of $I_\Gamma(\Sigma_F)$, and write it as $S_\ell$. Multiply these matrices in increasing-angle order. Only countably many factors can differ from the identity, since their nonzero entries correspond to differences $B-A$.

\begin{theorem}\label{thm:angular}
The angle-ordered product
\begin{equation}\label{eq:angularproduct}
P_\Gamma=\prod_{\ell\subset\Gamma}^{\curvearrowright}S_\ell
\end{equation}
is well defined in $I_\Gamma(\Sigma_F)$. There are constants $C_\Gamma',\kappa_\Gamma>0$ such that
\begin{equation}\label{eq:coeffbound}
 |(P_\Gamma)_{AB}|\le C_\Gamma'e^{\kappa_\Gamma h_\phi(B-A)}\qquad(A\prec_\Gamma B).
\end{equation}
For sufficiently large $\sigma$, it defines a bounded invertible operator on
\begin{equation}\label{eq:banach}
 \mathcal H_{\sigma,\phi}
 =\left\{(c_A)_{A\in\Sigma_F}:\
       \sum_A|c_A|e^{-\sigma h_\phi(A)}<\infty\right\}.
\end{equation}
\end{theorem}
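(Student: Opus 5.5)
The plan is to compute each entry of $P_\Gamma$ as a finite chain sum, bound the terms, and then pass to the weighted space.

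\textbf{Step 1: each entry is a finite sum.} Fix $A\preceq_\Gamma B$. A single factor $S_\ell$ has nonzero strict entries $(S_\ell)_{CD}$ only when $D-C\in\ell$. Expanding the angle-ordered product formally, $(P_\Gamma)_{AB}$ is a sum over chains
\[
A=C_0,\ C_1,\ \ldots,\ C_k=B,\qquad C_i-C_{i-1}\in\ell_i\setminus\{0\},
\]
where $\ell_1<\cdots<\ell_k$ strictly in angle. Each such term is the product $\prod_i (S_{\ell_i})_{C_{i-1}C_i}$.

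\textbf{Step 2: the chains live in a finite set.} Since $\Gamma$ is a convex cone, every partial sum $C_i-A$ lies in $\Gamma$. Likewise $B-C_i$ is a sum of elements of $\Gamma$, so it lies in $\Gamma$. Hence every $C_i$ belongs to $[A,B]_\Gamma$, which is finite by \eqref{eq:intervalcount}.

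\textbf{Step 3: there are finitely many chains, and the product is well defined.} Because $h_\phi(C_i-C_{i-1})>0$, the values $h_\phi(C_i)$ strictly increase along a chain. So a chain is an increasing subset of $[A,B]_\Gamma$, and there are at most $2^{N}$ of them, where $N=\#[A,B]_\Gamma$. The angle of each step is determined by its endpoints, and the ordering condition only selects which subsets are admissible. It follows that $(P_\Gamma)_{AB}$ is a finite sum. It is also the stabilized limit of the finite partial products over any exhausting family of finitely many rays, since only rays through differences of points of $[A,B]_\Gamma$ contribute. This gives well-definedness in $I_\Gamma(\Sigma_F)$, in the coefficientwise topology.

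\textbf{Step 4: the coefficient bound \eqref{eq:coeffbound}.} Each entry of a directional Stokes matrix satisfies $|s^\theta_{CD}|\le2$ by Theorem~\ref{thm:directional}. A chain has length $k\le N$, so each term is at most $2^N$ in absolute value. Therefore
\[
|(P_\Gamma)_{AB}|\le 2^N\cdot 2^N=4^N.
\]
By \eqref{eq:intervalcount}, $N\le K_0+(K_1/c_\Gamma)h_\phi(B-A)$. This yields \eqref{eq:coeffbound} with
\[
C'_\Gamma=4^{K_0},\qquad \kappa_\Gamma=\frac{K_1\log 4}{c_\Gamma}.
\]

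\textbf{Step 5: boundedness on $\mathcal H_{\sigma,\phi}$.} I expect this to be the main obstacle. Two points must be settled: which action (rows or columns) we use, and the column-sum estimate. I would let $P_\Gamma$ act on coefficient vectors by
\[
(Pc)_B=\sum_{A\preceq_\Gamma B}c_A (P_\Gamma)_{AB}.
\]
For boundedness on this $\ell^1$-weighted space it suffices to show that
\[
\sup_A\sum_{B}|(P_\Gamma)_{AB}|\,e^{-\sigma h_\phi(B)}e^{\sigma h_\phi(A)}<\infty.
\]
Write $h_\phi(B)-h_\phi(A)=h_\phi(B-A)\ge c_\Gamma|B-A|$. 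The summand is then at most
\[
C'_\Gamma e^{-(\sigma-\kappa_\Gamma)h_\phi(B-A)}.
\]
Summing over $B$ with the counting bound \eqref{eq:count}, applied on dyadic annuli $|B-A|\sim R$ which carry $O(1+R)$ points, gives a convergent series uniformly in $A$ once $\sigma>\kappa_\Gamma$. Invertibility comes from applying the same argument to $P_\Gamma^{-1}$, which is the angle-ordered product of the $S_\ell^{-1}$ in reverse order. These inverses are row-finite and unipotent by Theorem~\ref{thm:directional}. Their entries are not bounded by $2$, so the bound must be redone. On an affine line, $S_\ell^{-1}=\sum_j(I-S_\ell)^j$ with nilpotency index at most $N$, which gives entries $\le (3N)^N$. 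That growth is superexponential in $h_\phi$, which is problematic.

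The fallback is to avoid this estimate. Since $P_\Gamma$ is unipotent in $I_\Gamma$, it has a two-sided inverse there. I would then bound that inverse entrywise by expanding $\sum_j(-(P_\Gamma-I))^j$ as chains in $[A,B]_\Gamma$ of length at most $N$. Each factor is at most $C'_\Gamma e^{\kappa_\Gamma h_\phi}$, and along a chain the values $h_\phi$ add up to $h_\phi(B-A)$. The number of chains is at most $2^N$. This gives a bound of the form $C''e^{\kappa' h_\phi(B-A)}$. Taking $\sigma$ larger than $\kappa'$ then finishes the proof of invertibility.
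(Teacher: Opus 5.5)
Your proposal is correct and follows the paper's proof: chains confined to the finite interval $[A,B]_\Gamma$, edge coefficients of modulus at most $2$, an exponential entry bound via \eqref{eq:intervalcount} (the paper counts chains more sharply, getting $2\cdot3^{M-2}$ instead of your $4^N$), and the weighted row-sum estimate $\|K\|\le\sup_A\sum_B|K_{AB}|e^{-\sigma h_\phi(B-A)}$. The one difference is invertibility: the paper uses uniform separation \eqref{eq:separation} to show $\|P_\Gamma-I\|\to0$ as $\sigma\to\infty$ and inverts by a Neumann series, which also gives norm convergence of $\log P_\Gamma$; your separate chain bound on the algebraic inverse in $I_\Gamma$ is valid but not needed, and your detour through $S_\ell^{-1}$ can be dropped.
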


\begin{proof}
Fix $A\preceq_\Gamma B$. Every chain contributing to the $(A,B)$-entry lies in the finite interval $[A,B]_\Gamma$. If $M=\#[A,B]_\Gamma$, then each edge coefficient has modulus at most $2$, and therefore
\[
|(P_\Gamma)_{AB}|\le\sum_{k=1}^{M-1}
 2^k\binom{M-2}{k-1}=2\cdot3^{M-2}.
\]
Together with \eqref{eq:intervalcount}, this gives \eqref{eq:coeffbound} and proves that the angle-ordered product is coefficientwise well defined.

On $\mathcal H_{\sigma,\phi}$, we have
\[
\|K\|\le \sup_A\sum_B |K_{AB}|e^{-\sigma h_\phi(B-A)}.
\]
Uniform separation, linear counting, and \eqref{eq:coeffbound} imply that the strict part $P_\Gamma-I$ tends to zero in this norm as $\sigma\to\infty$. Hence, for sufficiently large $\sigma$, $P_\Gamma$ is bounded and invertible, and the logarithmic series for $\log P_\Gamma$ converges in operator norm.
\end{proof}

\begin{remark}
Subdividing the sector factors $P_\Gamma$ into ordered products. Since the factors need not commute, $\log P_\Gamma$ includes Baker--Campbell--Hausdorff terms in addition to the directional logarithms.   
\end{remark}

At $C=\theta_j$, the local ordinary Borel monodromy has order six and is not unipotent, so the logarithmic construction above does not apply directly. Moreover, the separation constants in the weighted completion degenerate as the two critical values coalesce.

\section{Borel connection and contour reconstruction}\label{sec:hconnection}

We now determine how the original real Fourier contour is expressed in the saddle sectors of Section \ref{sec:hglobal}. First we decompose the contour into compact action intervals and a terminal ray. We then compute the scalar connection across the first pole--saddle block.

Throughout, the Fourier phase is written as $\tau\mapsto F(-i\tau)$.

\subsection{Action-space reconstruction of the Fourier contour}
For $k\ge1$, set
\[
\mathfrak C_k(t;r,y)
=2\pi i\Res_{\tau=ik\pi}\bigl(e^{-F(-i\tau)/t}V(\tau)\bigr),
\qquad H_N=\left(N+\tfrac12\right)\pi.
\]
If $N\pi<H<(N+1)\pi$, contour translation gives
\begin{equation}\label{eq:hfinite-residue}
\int_\R e^{-F(-i\tau)/t}V\dd\tau
=\int_{\R+iH}e^{-F(-i\tau)/t}V\dd\tau
+\sum_{k=1}^N\mathfrak C_k(t;r,y).
\end{equation}
At the half-integer heights $H_N$,
\[
\re F(H_N-ix)=H_Ny+a x\tanh x\ge H_Ny.
\]

\begin{proposition}\label{thm:hglobalres}
For $r\ge0$ and $y,t>0$,
\begin{equation}\label{eq:hglobalres}
p_t(r,y)=\frac1{(2\pi t)^2}\sum_{k\ge1}\mathfrak C_k(t;r,y),
\end{equation}
and the series is absolutely convergent.
\end{proposition}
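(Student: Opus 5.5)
Starting from the finite contour-translation identity \eqref{eq:hfinite-residue} at the half-integer heights $H=H_N=(N+\tfrac12)\pi$, we will let $N\to\infty$. The statement follows once we know two things: the shifted integral $\int_{\R+iH_N}e^{-F(-i\tau)/t}V\dd\tau$ tends to zero, and the residue series converges absolutely. Before letting $N\to\infty$ we have to justify the translation itself, i.e. show that the vertical side contributions vanish. On $\tau=\pm X+iu$ with $0\le u\le H$ and $|X|\to\infty$, we have $\re f(\tau)=\re(-iy\tau+a\tau\coth\tau)=yu+a\,\re(\tau\coth\tau)$. Since $\coth\tau\to\pm1$ as $\re\tau\to\pm\infty$, this behaves like $a|X|+yu$. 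Meanwhile $|V(\tau)|\lesssim |\tau|e^{-|X|}$. So for $r>0$ the side integrals decay, and even for $r=0$ the factor $e^{-|X|}$ from $V$ suffices. This confirms \eqref{eq:hfinite-residue} for all $r\ge0$; only the simple poles $\tau=ik\pi$, $1\le k\le N$, of $\coth$ and $1/\sinh$ are crossed.

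For the horizontal integral at height $H_N$, write $\tau=x+iH_N$. Using the displayed bound $\re F(H_N-ix)\ge H_Ny$, we get $|e^{-F/t}|\le e^{-H_Ny/t}$. For the amplitude, $|\sinh(x+iH_N)|=\cosh x$ because $\cos H_N=0$, so $|V|=|x+iH_N|/\cosh x$. Hence
\[
\Bigl|\int_{\R+iH_N}e^{-F(-i\tau)/t}V\dd\tau\Bigr|\le e^{-H_Ny/t}\int_\R\frac{|x|+H_N}{\cosh x}\dd x=O\bigl(H_Ne^{-H_Ny/t}\bigr)\to0 ,
\]
since $y,t>0$. Letting $N\to\infty$ gives \eqref{eq:hglobalres} as the limit of the partial sums $\sum_{k\le N}\mathfrak C_k$. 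This already shows convergence in the sense of partial sums.

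For absolute convergence we estimate each residue directly, or alternatively bound the difference of the shifted integrals at heights $H_{k-1}$ and $H_k$. Take a small circle of fixed radius $\rho<\pi/4$ around $ik\pi$, so that $\mathfrak C_k=\oint e^{-F/t}V\dd\tau$. On that circle, $\re F(-i\tau)=yk\pi+O(1)+a\,\re(\tau\coth\tau)$. Near the pole, $\tau\coth\tau\approx ik\pi/(\tau-ik\pi)$, so its real part is of size $k\pi/\rho$ with varying sign. This is the main obstacle: the essential singularity $e^{-a\,ik\pi/((\tau-ik\pi)t)}$ of the integrand prevents a naive circle bound, because $\re$ of that term can be negative of order $k$ and could compete with $yk\pi$.

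Our first approach is to avoid circles and instead write $\mathfrak C_k=\int_{\R+iH_{k-1}}-\int_{\R+iH_k}$, each bounded by $O(H_ke^{-H_{k-1}y/t})$ by the estimate above. This gives $|\mathfrak C_k|\le C k e^{-(k-\frac12)\pi y/t}$, which is summable, and so yields absolute convergence for every $r\ge0$ and $y,t>0$.
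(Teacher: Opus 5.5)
Your proposal is correct and is essentially the paper's proof. You justify the rectangle translation by the decay of $V$ together with $\re(\tau\coth\tau)\approx|\re\tau|$, and you bound the shifted integral at the half-integer heights $H_N$ via $|\sinh(x+iH_N)|=\cosh x$ and $\re F(H_N-ix)\ge H_Ny$. You then obtain absolute convergence by writing $\mathfrak C_k$ as the difference of the line integrals at $H_{k-1}$ and $H_k$, which is the paper's ``same argument between two consecutive half-integer heights.'' Your remark that a small-circle bound fails because of the essential singularity at $ik\pi$ is correct, and it explains why that route is the right one.
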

\begin{proof}
On $\R+iH_N$, the preceding identities imply
\[
\left|\int_{\R+iH_N}e^{-F(-i\tau)/t}V\dd\tau\right|
\le e^{-H_Ny/t}\int_\R\frac{\sqrt{x^2+H_N^2}}{\cosh x}\dd x\le(\pi H_N+4)e^{-H_Ny/t}.
\]
Consequently,
\[
\left|p_t(r,y)-\frac1{(2\pi t)^2}
\sum_{k=1}^N\mathfrak C_k(t;r,y)\right|
\le \frac{\pi H_N+4}{(2\pi t)^2}e^{-H_Ny/t}.
\]
The vertical sides of the translating rectangles vanish because
\[V(x+iH)=O_H(|x|e^{-|x|}) \qquad\re(\tau\coth\tau)=|\re\tau|+o(1).\] 
The residue theorem gives \eqref{eq:hfinite-residue}; letting $N\to\infty$ proves the claim. Applying the same argument between two consecutive half-integer heights gives absolute convergence.
\end{proof}

At $r=0$, one recovers the classical vertical-axis formula
\begin{equation}\label{eq:hsech}
p_t(0,y)=\frac1{2t^2}\sum_{k\ge1}(-1)^{k+1}k e^{-k\pi|y|/t}
=\frac1{8t^2}\sech^2\!\left(\frac{\pi|y|}{2t}\right).
\end{equation}
This formula and the vertical-axis small-time asymptotics are classical; see \cite{CL,Gaveau,Hulanicki,Li2007,LZ}.

For the action-space decomposition, assume first that $C$ is nonresonant and set
\[
m=\#\{k:\theta_k<C\},\qquad A_*=A_{m+1}.
\]
Whenever $F(q)=X$ has a conjugate nonreal pair $q_{\mathrm{up}}(X),q_{\mathrm{down}}(X)$, write
\[
\rho(X)=iW(q_{\mathrm{down}}(X))q_{\mathrm{down}}'(X)-iW(q_{\mathrm{up}}(X))q_{\mathrm{up}}'(X).
\]
Let $\rho_k$ denote this density on $(A_k,B_k)$, $1\le k\le m$, and let $\rho_*$ denote the corresponding density on the terminal ray $(A_*,\infty)$.

\begin{proposition}
\label{thm:hfiniteactionreconstruction}
For $r,y,t>0$ and nonresonant $C$,
\begin{equation}\label{eq:v3finitephysical}
(2\pi t)^2p_t(r,y)=
\sum_{k=1}^{m}\int_{A_k}^{B_k}e^{-X/t}\rho_k(X)\dd X
+\int_{A_*}^{\infty}e^{-X/t}\rho_*(X)\dd X.
\end{equation}
Moreover,
\[
\int_{A_k}^{B_k}e^{-X/t}\rho_k(X)\dd X=
\mathfrak C_k(t;r,y),
\qquad 1\le k\le m.
\]
All integrals are absolutely convergent. If $C=\theta_j$, the same formula holds with the last compact interval replaced by $[A_j,\mathscr D_j]$ and the terminal ray starting at $\mathscr D_j$.
\end{proposition}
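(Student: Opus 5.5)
The plan is to derive \eqref{eq:v3finitephysical} by a level-set (action-space) decomposition of a deformed Fourier contour, and to identify each compact piece with a residue via Proposition~\ref{thm:hglobalres}.

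\emph{Step 1: choice of contour.} In the variable $q=-i\tau$, the real $\tau$-line becomes the imaginary $q$-axis. We deform it, without crossing poles, to a steepest-descent-type path through the real critical points. Concretely, for $1\le k\le m$, on each action interval $(A_k,B_k)$ the equation $F(q)=X$ has a unique nonreal conjugate pair (by the table preceding Proposition~\ref{prop:hlabelinduction}). As $X$ runs over $[A_k,B_k]$, the pair $q_{\rm up}(X),q_{\rm down}(X)$ traces a closed loop. This loop leaves the real axis at the maximum $q_{k,-}$ and returns at the minimum $q_{k,+}$, so it encircles the pole $k\pi$, since $q_{k,-}<k\pi<q_{k,+}$ by Corollary~\ref{thm:hclassification}. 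Similarly, on $(A_*,\infty)$ the terminal pair emanates from $q_{m+1,-}$. By Lemma~\ref{lem:raygrowth}, or the inverse asymptotics, it goes to infinity with $\re F\to+\infty$, so it is asymptotic to a direction where the integrand decays.

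The loops over $(A_k,B_k)$ are closed cycles around $k\pi$. Their integrals are therefore residues, and after pushing forward by $F$ and orienting as in \eqref{eq:hlocaldensity} they equal $\int_{A_k}^{B_k}e^{-X/t}\rho_k\,dX$. Matching orientation with the counterclockwise residue convention gives $\mathfrak C_k$. The second claim is thus a Cauchy-theorem statement: $F$ is real on the real segment between $q_{k,-}$ and $q_{k,+}$, which makes the real-segment contributions cancel between the upper and lower halves.

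\emph{Step 2: identify the remainder.} By Proposition~\ref{thm:hglobalres}, $(2\pi t)^2p_t=\sum_{k\ge1}\mathfrak C_k$. It therefore suffices to show
\[
\int_{A_*}^\infty e^{-X/t}\rho_*\,dX=\sum_{k>m}\mathfrak C_k .
\]
To do so, take the terminal steepest-descent path $\Gamma_*$ through $q_{m+1,-}$ and deform it upward through the heights $H_N$ with $N>m$. The region swept between $\Gamma_*$ and $\R+iH_N$ contains exactly the poles $ik\pi$ with $m<k\le N$; it excludes the poles $k\le m$, which the loops of Step~1 already account for. The line integral over $\R+iH_N$ decays like $e^{-H_Ny/t}$ by the estimate in the proof of Proposition~\ref{thm:hglobalres}, and letting $N\to\infty$ gives the identity. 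Equivalently, one applies \eqref{eq:hfinite-residue} directly: the original contour is homologous to $\Gamma_*$ plus the loops around $i\pi,\dots,im\pi$.

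\emph{Step 3: convergence and the resonant case.} Each compact integral has at most $\xi^{-1/2}$ singularities at its endpoints, which are Morse, so it converges absolutely. Absolute convergence of the terminal integral follows from the polynomial density bound of Lemma~\ref{lem:raygrowth} along the positive axis. When $C=\theta_j$, $B_j$ and $A_{j+1}$ merge into $\mathscr D_j$. The $j$th loop then closes at the cubic point, and the terminal pair emerges from it with a $(X-\mathscr D_j)^{-2/3}$ density by \eqref{eq:hcubicQ}, which is still integrable. The same homology argument applies.

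\emph{Main obstacle.} The hardest step is Step~2's global topology. We must justify that the terminal conjugate pair $q_{\rm up/down}(X)$, $X\in(A_*,\infty)$, is a genuine simple path from $q_{m+1,-}$ to infinity in the correct decay sectors. The swept region must contain precisely the poles $ik\pi$ with $k>m$, and the path must not pass around some pole the wrong way. My first approach would be to use Proposition~\ref{prop:hrealaction}: $F$ is real only on real $q$ and on these level curves, so $\im F=0$ curves cannot cross. I would combine this with the large-$X$ asymptotics $q\sim X/(y\mp ia)$ from the proof of Lemma~\ref{lem:raygrowth} to fix the asymptotic directions. The argument would then close with a winding-number count using $A_k<k\pi y<B_k$.
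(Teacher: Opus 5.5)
Your proposal is correct and is essentially the paper's proof. The compact conjugate-pair loops are counterclockwise residue cycles about $k\pi$; no real-segment cancellation is needed, and that segment in fact contains the pole. The terminal pair, closed off to the real $\tau$-axis with vanishing arcs at infinity, encloses exactly $i\pi,\dots,im\pi$; this is your ``equivalently'' remark, and your detour through $\R+iH_N$ and Proposition~\ref{thm:hglobalres} is valid but unnecessary.

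The paper simply asserts the terminal-pair topology that you flag as the main obstacle. Only two facts are actually needed there: the pair is nonreal for $X>A_*$, and it has asymptotic directions $q\sim X/(y\mp ia)$. Since all poles lie on the real $q$-axis, the winding numbers then follow by counting real-axis crossings.
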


\begin{proof}
For $A_k<X<B_k$, the two inverse branches form a closed $q$-contour enclosing only the pole $k\pi$. Under $\tau=iq$ this gives
\[
\int_{A_k}^{B_k} e^{-X/t}\rho_k(X)\dd X=
\mathfrak C_k(t;r,y).
\]
The terminal pair defines an unbounded contour in the upper $\tau$-plane. Closing it to the real axis encloses exactly the first $m$ poles; the closing pieces vanish by the outer asymptotics. The residue theorem and the change of variable $X=F(-i\tau)$ then give \eqref{eq:v3finitephysical}. At resonance the same argument applies, since
\[
\rho(X)=O(|X-\mathscr D_j|^{-2/3})
\]
is locally integrable.
\end{proof}

\begin{remark}
If $0<C<\theta_1$, there are no compact action intervals, so the positive Borel sum of the minimal action germ is the full heat kernel. At $C=\theta_1$, the first interval ends at the cubic action and both lateral continuations must be retained.
\end{remark}

\subsection{Local pole--saddle coordinate and scalar connection}\label{sec:hscalarconnection}
Fix $y>0$ and take $r>0$ small enough that the first pole $i\pi$ lies between two simple critical points $\tau_-$ and $\tau_+$. Write
\[
A_1=F(-i\tau_-),\qquad B_1=F(-i\tau_+),\qquad
A_c=\frac{B_1+A_1}{2},\qquad d=\frac{B_1-A_1}{2}>0.
\]
Chang--Li \cite{CL} use this saddle--pole configuration in their uniform Bessel analysis. We use the following exact Joukowski coordinate.

\begin{lemma}\label{prop:hnormal}
There is a holomorphic coordinate $w$ on a neighborhood of the closed unit disk such that
\begin{equation}\label{eq:hnormal}
F(-i\tau(w))=A_c-\frac d2(w+w^{-1}),\qquad
\tau(0)=i\pi,\quad \tau(1)=\tau_-,\quad \tau(-1)=\tau_+.
\end{equation}
Moreover
\[
H(w):=wV(\tau(w))\tau'(w)
\]
extends holomorphically across $w=0$ and satisfies $H(0)=-i\pi$. The image of $|w|=1$ winds once counterclockwise around $i\pi$.
\end{lemma}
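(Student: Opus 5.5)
We work in the $q$-plane via $\tau=iq$, so that $F(-i\tau)=F(q)$ and the pole is $q=\pi$. The two critical points are $q_{1,-}=q_0$ (the minimizing saddle) and $q_{1,+}$. By Lemma~\ref{lem:hfirstpairisolation} these are the only critical points near $\pi$ for small $r$, and they lie at distance $\asymp\sqrt a$ on either side of $\pi$.

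The idea is to build $w$ by solving the Joukowski equation $J(w):=A_c-\frac d2(w+w^{-1})=F(q)$ for $w$ as a function of $q$. Near $q=\pi$ we have $F(q)=\frac{a\pi}{q-\pi}+O(1)$ with $a\pi>0$ small, while $J(w)\sim-\frac d{2w}$ as $w\to0$. This suggests the normalization $w\approx-\frac{d}{2a\pi}(q-\pi)$. The sign works out: $\tau(0)=i\pi$, and $\tau=iq$ flips orientation appropriately.

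Concretely, I would reuse the blow-up of Lemma~\ref{lem:huniformamplitude}. Set $b=\sqrt{\pi a/y}$ and $q=\pi+bu$. Then $F(\pi+bu)=\pi y+yb(u+u^{-1})+O(b^2)$, and $A_c=\pi y+O(b^2)$, $d=2yb+O(b^2)$. So at $b=0$ the rescaled equation $(F-A_c)/(yb)=-(w+w^{-1})$ becomes $u+u^{-1}=-(w+w^{-1})$, which has the exact solution $u=-w$. The critical points match: $u=-1\leftrightarrow w=1$ gives $q_{1,-}$, and $u=1\leftrightarrow w=-1$ gives $q_{1,+}$.

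For $b>0$ I would multiply through by $uw$ to clear the poles. This gives an analytic family $\Phi_b(u,w)=0$ on a fixed neighborhood of the closed unit disk, with $\Phi_0=(u+w)(1+uw)$. The branch $u=-w$ is simple away from $uw=-1$, but the two branches cross exactly at the critical points $w=\pm1$. That crossing is the main obstacle: a plain implicit function theorem fails precisely where both sides are critical.

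To handle it I would not invoke the implicit function theorem directly. Instead I would define $w$ by the square-root-free recipe standard for two Morse saddles plus a simple pole, in the style of the Chester--Friedman--Ursell uniform construction. Write $\zeta=\frac{F-A_c}{d}$, so that $\zeta=-\frac12(w+w^{-1})$. The map $w\mapsto\zeta$ is 2:1, with branch points $\zeta=\mp1$ exactly at $w=\pm1$. On the $q$ side, $q\mapsto\zeta$ near the pole is also locally 2:1 onto a neighborhood of $[-1,1]$ (after the blow-up), with simple critical points mapping to $\mp1$ and the pole mapping to $\infty$.

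Both maps are therefore double covers of a neighborhood of the Riemann sphere minus two points, branched at $\mp1$, with one of the two preimages of $\infty$ sitting inside the disk. The $q$-side cover is a small analytic perturbation of the $w$-side one. I would lift the identity on $\zeta$ to a biholomorphism of the covers, fixing the sheet by requiring $q(0)=\pi$ (equivalently, via the residue normalization). This lift is automatically holomorphic at the branch points, because both covers are square-root type there. It is also holomorphic at $w=0$, because both sides have a simple pole there.

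The remaining claims then follow:

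\begin{itemize}
\item \emph{Holomorphy of $H$ at $w=0$.} Near $w=0$ we have $\tau(w)-i\pi\sim cw$ with $c\ne0$, and $V(\tau)=\frac{\tau}{\sinh\tau}$ has a simple pole at $i\pi$ with residue $\frac{i\pi}{\cosh(i\pi)}=-i\pi$. Hence $wV(\tau(w))\tau'(w)$ is holomorphic at $0$, with $H(0)=-i\pi$. This value is independent of $c$.
\item \emph{Winding of the image of $|w|=1$.} The winding number is unchanged under the homotopy $b\to0$, and in the model $\tau-i\pi\approx -ibw$ (from $u=-w$). Multiplication by $-ib$ is a rotation and preserves the counterclockwise orientation, so the image winds once counterclockwise. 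This is consistent with the residue calculation in Proposition~\ref{thm:hfiniteactionreconstruction}.
\end{itemize}
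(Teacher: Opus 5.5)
Your proposal is correct and follows the paper's strategy. It uses the same blow-up at the pole (your $q=\pi+bu$ is the paper's $q=\pi-bu$ up to $u\mapsto-u$) and the same limiting Joukowski relation. It also uses the same residue $V(i\pi+s)=-i\pi/s+O(1)$, giving $H(0)=-i\pi$, and the same argument-principle reasoning for the winding.

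The difference is in how $w$ is constructed. The paper uses that $A_c,d$ make $P_b(u)=2(A_c-F(\pi-bu))/d$ equal to $\pm2$ exactly at the Morse points. Hence $(uP_b)^2-4u^2=(u-u_+)^2(u-u_-)^2A_b(u)$ with $A_b=1+O(b^2)$ zero-free. A square root of $A_b$ then gives an explicit single-valued branch $W_b(u)=u+O(b^2u)$, which is visibly invertible near the closed disk.

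Your lifting through the Joukowski map is the conceptual version of this, but as written it is imprecise. Neither restricted map is a double cover of a common base: each is one-to-one over the exterior of a Joukowski ellipse and two-to-one only over its interior. You should instead lift $u\mapsto\zeta(u)$ through the global branched cover $J:\widehat{\C}\to\widehat{\C}$ on the simply connected $u$-disk. That lift exists exactly when every zero of $\zeta^2-1$ in the disk has even order. So ``automatically holomorphic at the branch points'' also needs the fact that $\zeta=\pm1$ has no preimages besides the two Morse points; this follows from Rouch\'e applied to $u(\zeta(u)\mp1)=\tfrac12(u\mp1)^2+O(b)$. The biholomorphism claim likewise needs the uniform bound $w(u)=-u+O(b)$ on a fixed disk $|u|<R$, $R>1$. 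These are precisely the facts the paper's factorization delivers in one step.
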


\begin{proof}
Set $b=\sqrt{\pi a/y}$ and
\[
P_b(u)=\frac{2(A_c-F(\pi-bu))}{d},\qquad J(u)=u+u^{-1}.
\]
The Laurent expansion at the pole gives $uP_b(u)=u^2+1+O(b^2)$ uniformly on a fixed disk, while the two critical points are $u_\pm=\pm1+O(b^2)$. Hence
\[
(uP_b(u))^2-4u^2=(u-u_+)^2(u-u_-)^2A_b(u),\qquad A_b(u)=1+O(b^2),
\]
with $A_b$ holomorphic and nonzero. Choosing the square root of $A_b$ near one gives a holomorphic branch $W_b(u)=u+O(b^2u)$ satisfying $J(W_b(u))=P_b(u)$. Its local inverse defines
$\tau(w)=i\pi-ibW_b^{-1}(w)$ and yields \eqref{eq:hnormal}. Finally $V(i\pi+s)=-i\pi/s+O(1)$ and $s=w\gamma(w)$ with $\gamma(0)\ne0$, so $H$ extends holomorphically with $H(0)=-i\pi$;
the winding statement follows from the argument principle.
\end{proof}

Set
\[
 \zeta:=\frac{X-A_1}{d},
 \qquad
 \mathcal N:=\pi\sqrt{2\pi}.
\]
Thus the first compact action interval $[A_1,B_1]$ becomes $0\le\zeta\le2$.

Let $\rho_1$ be the density on $(A_1,B_1)$ from Proposition~\ref{thm:hfiniteactionreconstruction}, and let $\rho_{\rm out}$ be the outgoing two-sheet density after $B_1$, with the orientation fixed in Section~\ref{sec:hglobal}. Define
\[
b_-(\zeta):=\frac{d}{\mathcal N}\rho_1(A_1+d\zeta),
 \qquad
b_+(\eta):=-\frac{id}{\mathcal N}\rho_{\rm out}(B_1+d\eta).
\]
Write $b_-^{0\pm}$ for the upper and lower continuations of $b_-$ through $\zeta=2$.

The local pole--saddle coordinate gives (see the proof of Theorem \ref{thm:hcomplete})
\begin{equation}\label{eq:hlocaldisc}
\Disc_2 b_-(2+\eta)
:=b_-^{0+}(2+\eta)-b_-^{0-}(2+\eta)
=2i\,b_+(\eta).
\end{equation}
At $B_1$, the incoming density has jump $2i$ times the outgoing saddle density. The corresponding action difference is
\[
B_1-A_1=2d.
\]
Continue the outgoing branches along the matched upper and lower paths of Section \ref{sec:hglobal}, and set
\begin{align*}
U_-^\pm(t)&=e^{-A_1/t}
 \int_0^\infty
 e^{-d\zeta/t}b_-^{0\pm}(\zeta)\dd\zeta,\\
U_+^\pm(t)&=e^{-B_1/t}
 \int_0^\infty
 e^{-d\eta/t}b_+^\pm(\eta)\dd\eta.
\end{align*}
By Lemma \ref{lem:raygrowth}, these integrals converge for $\re(1/t)>0$.

\begin{theorem}
\label{thm:hcomplete}
For nonresonant $C>\theta_1$ and $\re(1/t)>0$,
\begin{align}
 U_-^+-U_-^-
 &=i(U_+^++U_+^-),
 \label{eq:hcomplete}\\
 U_-^+-iU_+^+
 &=U_-^-+iU_+^-=
 \frac{\mathfrak C_1(t;r,y)}{\mathcal N}.
 \notag
\end{align}
Consequently, for $\re t>0$,
\begin{equation}\label{eq:hphysicalcomplete}
p_t(r,y)=\frac{\mathcal N}{(2\pi t)^2}
(U_-^+-iU_+^+)+\frac{1}{(2\pi t)^2}
\sum_{k\ge2}\mathfrak C_k(t;r,y).
\end{equation}
\end{theorem}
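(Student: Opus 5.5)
First I would establish the local discontinuity formula \eqref{eq:hlocaldisc} from the Joukowski coordinate of Lemma~\ref{prop:hnormal}. In the variable $w$, with $X=A_c-\frac d2(w+w^{-1})$, the two inverse sheets over $X\in(A_1,B_1)$ are $w$ and $w^{-1}$ on the unit circle. The first density becomes the pushforward of $V(\tau(w))\tau'(w)\,\dd w=H(w)\,\dd w/w$ along the upper and lower unit semicircles. Near $\zeta=2$, i.e.\ $w=-1$, the upper and lower continuations of $b_-$ differ by the full vanishing cycle at $B_1$. The incoming cycle $[w]-[w^{-1}]$ has Picard--Lefschetz coefficient $2$ by Theorem~\ref{thm:positive}, and the outgoing density is normalized with the extra factor $-i$. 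Tracking these normalizations should turn the jump into $2i\,b_+$. I would check the constant against the model $H\equiv H(0)=-i\pi$, where everything is explicit through $\int e^{-d\zeta/t}$ of Joukowski densities, i.e.\ Bessel functions. This fixes the signs and the factor $\mathcal N=\pi\sqrt{2\pi}$.

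Next I would Laplace-transform. I split $\int_0^\infty$ for $U_-^\pm$ at $\zeta=2$. On $[0,2]$ both lateral continuations agree with $b_-$. On $[2,\infty)$ the difference $b_-^{0+}-b_-^{0-}$ equals $2i b_+$ continued along the matched paths. Strictly, the upper/lower continuation after crossing later critical values $A_2,\dots$ carries further jumps. Here I would use the convention of Section~\ref{sec:hglobal}, so that the outgoing branches are themselves continued upper/lower, and write the jump as $i(b_+^++b_+^-)$. This is the median-type identity: the discontinuity of the $\pm$-continued incoming germ equals the sum of the two lateral outgoing germs times $i$. To justify it, I would write $b_-^{0+}=b_-^{\rm mid}+i b_+^{+}$ and $b_-^{0-}=b_-^{\rm mid}-i b_+^{-}$ in terms of the cycle decomposition of Proposition~\ref{prop:hlabelinduction}. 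Here $b_-^{\rm mid}$ is the (real) pole-side-plus-increasing cycle after $B_1$. Subtracting gives \eqref{eq:hcomplete}, and Lemma~\ref{lem:raygrowth} gives convergence. Shifting $e^{-A_1/t}e^{-2d/t}=e^{-B_1/t}$ accounts for the prefactors.

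For the second identity, $U_-^+-iU_+^+=U_-^-+iU_+^-$ is just a rearrangement of \eqref{eq:hcomplete}. Both equal $e^{-A_1/t}\int_0^\infty e^{-d\zeta/t}b_-^{\rm mid}\,\dd\zeta$. I would identify this with $\mathfrak C_1/\mathcal N$ via Proposition~\ref{thm:hfiniteactionreconstruction}. The piece on $[0,2]$ is $\int_{A_1}^{B_1}e^{-X/t}\rho_1=\mathfrak C_1$ up to the factor $d/\mathcal N$ and the change of variables. So the main obstacle is showing that the tail of $b_-^{\rm mid}$ on $(2,\infty)$ contributes nothing. After $B_1$, the two sheets are real: one goes to the pole $\pi$ and one is increasing. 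Their difference closes the $q$-contour around the first pole only. The tail contributions from the increasing branch must cancel against the portion continued past $A_2,\dots$, which I expect requires care. I would try first to show directly that the median cycle after $B_1$ is the residue cycle around $i\pi$ pushed to $X\to\infty$, with vanishing Laplace transform.

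Finally, \eqref{eq:hphysicalcomplete} follows by substituting into Proposition~\ref{thm:hglobalres}, extended to $\re t>0$ by analytic continuation of both sides.
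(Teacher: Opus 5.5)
There is a genuine gap, and it sits at the step you flag as the main obstacle. The identification with $\mathfrak C_1/\mathcal N$, and hence \eqref{eq:hphysicalcomplete}, needs each lateral continuation separately, not only the jump at $\zeta=2$. Your proposal derives only $\Disc_2 b_-=2i\,b_+$. It then posits $b_-^{0+}=b_-^{\rm mid}+i\,b_+^{+}$ and $b_-^{0-}=b_-^{\rm mid}-i\,b_+^{-}$, with a residual term $b_-^{\rm mid}$ that you identify with the ``pole-side-plus-increasing cycle after $B_1$''. That identification is wrong. By Proposition~\ref{prop:hlabelinduction}, the two real sheets leaving $B_1$ (pole-side and increasing) are exactly the two sheets of the outgoing cycle. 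So that pair \emph{is} $b_+$ up to normalization, not an extra piece.

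Your plan for the tail cannot work as stated either. A nonzero density on $(2,\infty)$ cannot have identically vanishing Laplace transform on $\re(1/t)>0$, so the residue argument could only succeed if the tail vanished pointwise, which is what has to be proved. The same issue affects \eqref{eq:hcomplete}. If you only know $b_-^{0\pm}=g\pm i\,b_+$ near $\eta=0$ for some germ $g$, continuing along the upper and lower paths gives $g^+-g^-+i(b_+^++b_+^-)$. Using the same $b_-^{\rm mid}$ on both sides silently assumes $g^+=g^-$.

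The missing observation is that the incoming cycle on $(A_1,B_1)$ is itself the vanishing cycle of the Morse value $B_1$. In the Joukowski coordinate its two sheets are $w$ and $w^{-1}$ on the unit circle, and they collide exactly at $w=-1$. In labels, the cycle is $\pm([0]-[1])$; after $B_1$, label $0$ is the increasing branch and label $1$ the pole-side branch, so nothing is left over. A half-turn above, respectively below, a Morse value sends its vanishing cycle to opposite multiples $\pm1$ of the outgoing vanishing cycle; this is how the Picard--Lefschetz coefficient $2$ splits. With the normalization of $b_+$, this gives $b_-^{0\pm}(2+\eta)=\pm i\,b_+(\eta)$ near $\eta=0$. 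Continuing along the matched paths gives $b_-^{0\pm}(2+\eta)=\pm i\,b_+^{\pm}(\eta)$ on the whole tail, so in your notation $b_-^{\rm mid}\equiv0$ on $(2,\infty)$.

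Splitting at $\zeta=2$, using Proposition~\ref{thm:hfiniteactionreconstruction} on $[0,2]$ and $e^{-A_1/t}e^{-2d/t}=e^{-B_1/t}$, then gives $U_-^{\pm}=\mathfrak C_1/\mathcal N\pm i\,U_+^{\pm}$ directly. Both identities follow from this, with no residue argument on the tail.
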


\begin{proof}
For $C>\theta_1$, the first pair $A_1<B_1$ is Morse. The Joukowski coordinate of Lemma~\ref{prop:hnormal} fixes the local orientation and, with the normalization above, gives
\[
b_-^{0+}(2+\eta)=i\,b_+(\eta),
 \qquad
b_-^{0-}(2+\eta)=-i\,b_+(\eta).
\]
Equivalently,
\[
\Disc_2 b_-(2+\eta)=2i\,b_+(\eta).
\]
Lemma~\ref{prop:hnormal} computes this coefficient near the pole. The coefficient belongs to the same ordered two-sheet Picard--Lefschetz cycle, whose branches and orientation are preserved by Proposition~\ref{prop:hlabelinduction}. Thus these identities hold for every nonresonant $C>\theta_1$.

On $0<\zeta<2$ the upper and lower continuations coincide. By Proposition~\ref{thm:hfiniteactionreconstruction},
\[
e^{-A_1/t}\int_0^2 e^{-d\zeta/t}b_-(\zeta)\,d\zeta
=\frac{\mathfrak C_1(t;r,y)}{\mathcal N}.
\]
Splitting the complete integrals at $\zeta=2$ and using
\[
e^{-A_1/t}e^{-2d/t}=e^{-B_1/t},
\]
together with the two continuation identities above, gives
\[
U_-^+-U_-^-=i\bigl(U_+^++U_+^-\bigr),
\]
and
\[
U_-^+-iU_+^+=U_-^-+iU_+^-
=\frac{\mathfrak C_1(t;r,y)}{\mathcal N}.
\]
Finally, Proposition~\ref{thm:hglobalres} gives
\[
p_t(r,y)=\frac{\mathcal N}{(2\pi t)^2}\bigl(U_-^+-iU_+^+\bigr)
+\frac{1}{(2\pi t)^2}\sum_{k\ge2}\mathfrak C_k(t;r,y),
\]
which proves the reconstruction formula.
\end{proof}

The scalar connection extends through a cubic resonance $C=\theta_j$, $j\ge2$: the density is $O(|X-\mathscr D_j|^{-2/3})$, so a detour of radius $\varepsilon$ contributes $O(\varepsilon^{1/3})$. The reconstruction formulas therefore hold along the prescribed lateral cubic detours.

\section{Boundary regimes: horizontal endpoints and isotropic complex endpoints}
\label{sec:hboundary}
We consider two boundary regimes:
\begin{itemize}
\item horizontal endpoints, where finite complex actions exist but the positive Borel ray is singularity-free;
\item nonzero isotropic complex endpoints, which admit no finite normal trajectories but retain the analytically continued kernel and the limiting vertical action spectrum.
\end{itemize}

\subsection{Horizontal endpoints: complex actions and Borel summation}
\label{sec:hhorizontal}

Let $y=0$ and $a=r^2/2>0$. The phase is even,
\[
F(-i\tau)=a\tau\coth\tau=a+\frac{a}{3}\tau^2+O(\tau^4),
\]
and therefore
\begin{equation}\label{eq:hhorizontalexp}
p_t(r,0)\sim e^{-a/t}t^{-3/2}\sum_{n\ge0}A_n(r)t^n .
\end{equation}
On the positive real axis,
\[
u=F(-i\tau)-a
\]
is strictly increasing for $\tau>0$. Writing its inverse as $\tau_+(u)$ gives the exact Laplace representation
\begin{equation}\label{eq:hhorizontalLaplace}
p_t(r,0)=e^{-a/t}t^{-2}\int_0^\infty e^{-u/t}\rho_r^{\rm hor}(u)\dd u,\qquad
\rho_r^{\rm hor}(u)=\frac{1}{2\pi^2}\frac{\tau_+(u)}{\sinh\tau_+(u)}
\frac{\dd\tau_+}{\dd u}.
\end{equation}

To remove the symmetry $\tau\leftrightarrow-\tau$, introduce
\[
s=\tau^2, \qquad \mathfrak F(s)=a\bigl(\sqrt{s}\coth\sqrt{s}-1\bigr),
\qquad \mathfrak F'(0)=\frac a3.
\]
Thus $s=0$ is a regular inverse point. Define
\[
\Omega_a:=\left\{a\sinh^2\tau_c:
\tau_c\ne0,~\sinh\tau_c\cosh\tau_c=\tau_c\right\},
\]
where the pair $\{\tau_c,-\tau_c\}$ is counted once.

\begin{proposition}
\label{prop:hhorizontalspectrum}
Every nonzero critical point of $\mathfrak F$ is Morse. The set $\Omega_a$ is infinite, locally finite, pairwise distinct, and 
\[
\Omega_a\subset\C\setminus\R,\qquad
\Omega_a=\bigl(\Sigma_H^{\rm fin}(r,0)-a\bigr)\setminus\{0\}.
\]
Moreover, every inverse germ of $\mathfrak F$ continues along every finite path in $\C\setminus\Omega_a$, with no escape to infinity or to a phase pole over a compact action path.
\end{proposition}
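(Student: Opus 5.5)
The plan is to reduce everything to the $q$-phase $F(q)=aq\cot q$ at $y=0$ and transfer the results of Section~\ref{sec:hglobal}. Since $\tau=iq$ gives $\tau^2=-q^2$, we have $\mathfrak F(s)=F(\sqrt{-s})-a$, and $F$ is even. Hence the critical points of $\mathfrak F$ with $s\ne0$ correspond bijectively to pairs $\{\pm q_c\}$ of nonzero critical points of $F$. Explicitly, $\mathfrak F'(s)=F'(q)/(2q)$ for $s=-q^2$, and $\mathfrak F''=F''/(4q^2)$ at such a point. A nonzero critical point of $\mathfrak F$ is therefore Morse exactly when $F''(q_c)\ne0$.

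\textbf{Step 1: critical points and values.} With $y=0$, the equation $F'(q)=0$ reads $\mathscr H(q)=0$, i.e.\ $\sin q\cos q=q$. In the variable $\tau=iq$ this is $\sinh\tau\cos\!h\,\tau=\tau$, as in the definition of $\Omega_a$. At such a point the identity $F(q)=a(q/\sin q)^2$ gives $F(q_c)=a\tau_c^2/\sinh^2\tau_c$. Using $\tau_c=\sinh\tau_c\cosh\tau_c$, this equals $a\cosh^2\tau_c$, so $\mathfrak F=a\sinh^2\tau_c$, which recovers $\Omega_a$.

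For degeneracy, $\mathscr H'(q)=0$ means $\sin q=q\cos q$. Combined with $\sin q\cos q=q$, this gives $q\cos^2q=q$, so $\sin q=0$ and then $q=0$. The case $q=0$ is excluded. So every nonzero critical point is Morse. Note that Theorem~\ref{thm:hactionclass} together with the evenness identifies $\Sigma_H^{\rm fin}(r,0)\setminus\{a\}$ with $a+\Omega_a$.

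\textbf{Step 2: nonreality, infinitude and distinctness.}
\begin{itemize}
\item Nonreality: a real critical value with $q$ real would need $\sin 2q=2q$ with $q\ne0$, which is impossible. For $q$ purely imaginary, $\sinh 2\tau=2\tau$ with $\tau\ne0$ is likewise impossible. For genuinely complex $q$, I would redo the computation of Proposition~\ref{prop:hrealaction} with $C=0$. A positive value gives $v=\sinh v\cosh v$, hence $v=0$. A negative value runs through the $d<0$ branch of Proposition~\ref{prop:hactiondistinct}, giving $K^2=-v\tanh v<0$. The value $0$ is excluded since $W\ne0$.
\item Infinitude and local finiteness: Lemma~\ref{lem:hremotecritical} gives infinitely many roots $z_n$, and its asymptotics remain valid at $C=0$ because $b_0=i/4\ne0$. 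I would check that the proof there only uses $a>0$, with $y=0$ contributing a harmless term. The values then satisfy $D_n=-ia(n\pi+\tfrac i2\log|n|+\gamma)+o(1)$, which tends to infinity.
\item Distinctness: this follows from the quadratic argument of Proposition~\ref{prop:hactiondistinct} with $C=0$. Two distinct $q$ with the same $d$ have $q_1+q_2=0$ and $q_1q_2=d^2-d$, so $q_2=-q_1$. This is exactly the identified symmetric pair, so distinct pairs give distinct values.
\end{itemize}

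\textbf{Step 3: continuation.} I would rerun the Rouch\'e argument of Lemma~\ref{lem:hcomplexnoescape} with $\nu=a$ and $\alpha=\pi/2$, where $\mathcal R_0=a\cos z$. This gives simple inverse roots near $\lambda_n$ for large $|n|$, and the no-pole estimate $F\sim ak\pi/(z-k\pi)$ applies as before. Passing to $s=-q^2$, the symmetric root pairs $\pm z_n(X)$ map to single roots of $\mathfrak F(s)=X-a$. The point $s=0$ is regular because $\mathfrak F'(0)=a/3$. The poles $s=k^2\pi^2$ are handled as before. The covering and continuation statement then follows from Proposition~\ref{prop:hcontinuationAbel} applied to $g=\mathfrak F$ on the pole-free $s$-plane.

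The main obstacle I expect is Step 2's use of Lemma~\ref{lem:hremotecritical} at $y=0$, since that lemma was stated only on compacta of $(0,\infty)^2$. I would verify directly that $G_0(z)=\sin z\cos z-z$ has zeros $\tfrac12\sin 2z=z$, and obtain their asymptotics from $e^{-2iz}\sim 4iz$. Locally finite distribution of the critical values then follows because $|D_n|\to\infty$ linearly in $n$.
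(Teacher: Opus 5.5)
Your proof is correct and has the same skeleton as the paper's. Both use the value identity $F(q_c)-a=a\sinh^2\tau_c$, exclude real values via $\sin 2x=2x$ and $\sinh 2x=2x$, get infinitude from the contraction argument of Lemma~\ref{lem:hremotecritical} at $C=0$ (where $b_0=i/4\neq0$), and get no escape from the Rouch\'e argument with $\mathcal R_0=a\cos z$. For the Morse property the paper computes $\frac{\dd^2}{\dd\tau^2}F(-i\tau)=2a$ at every critical point; your $\mathscr H'$ argument is an equivalent check.

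The real difference is in Step~2. The paper obtains distinctness, local finiteness and nonreality together from $\tau_c^2=\sinh^2\tau_c\cosh^2\tau_c$, that is, $s_c=\frac{\Delta}{a}\bigl(1+\frac{\Delta}{a}\bigr)$ with $\Delta=a\sinh^2\tau_c$:
\begin{itemize}
\item equal values give the same $s_c$;
\item bounded values give bounded $s_c$;
\item a real $\Delta$ makes $\tau_c^2$ real, which forces $\tau_c\in\R\cup i\R$.
\end{itemize}
Your distinctness relation $q^2=d-d^2$ is exactly this identity, with $s=-q^2$ and $d=1+\Delta/a$, so you already have the tool. Using it for the other two claims would spare you redoing Propositions~\ref{prop:hrealaction} and~\ref{prop:hactiondistinct} at $C=0$. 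Their proofs use $C>0$ in places, e.g.\ in deriving $u=Cv\coth v$, although the conclusions survive at $C=0$. It would also make local finiteness independent of the completeness statement that every large-modulus critical point lies on a tail. In return, your route gives explicit tail asymptotics for the horizontal critical values, which the paper does not record.

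Three minor slips, none affecting the argument:
\begin{itemize}
\item $\mathfrak F'(s)=-F'(q)/(2q)$, not $F'(q)/(2q)$.
\item The poles of $\mathfrak F$ are at $s=-k^2\pi^2$, not $s=k^2\pi^2$.
\item The dominant balance is $e^{-2iz}\sim z/b_0=-4iz$, not $4iz$, which places $\re z_n$ near $n\pi+\pi/4$.
\end{itemize}
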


\begin{proof}
At a nonzero critical point, $\sinh\tau_c\cosh\tau_c=\tau_c$, and direct differentiation gives
\[
\frac{\dd^2}{\dd\tau^2}F(-i\tau)\bigg|_{\tau=\tau_c}=2a,
\]
so all such points are Morse. Writing $\Delta=a\sinh^2\tau_c$ gives
\[
s_c=\frac{\Delta}{a}\left(1+\frac{\Delta}{a}\right).
\]
Equal critical values therefore give the same quotient critical point, and bounded critical values give bounded $s_c$. This proves distinctness and local finiteness. A real nonzero critical value would force $\tau_c$ to be real or purely imaginary, contradicting
\[
\sinh(2x)=2x,\qquad\sin(2x)=2x
\]
for nonzero real $x$. The contraction argument used in Lemma~\ref{lem:hremotecritical}, applied near
\[
\frac12\log(4\pi n)+i\left(\pi n+\frac{\pi}{4}\right)
\]
gives infinitely many critical points. The action identity follows from Theorem \ref{thm:hactionclass} with $y=0$.

For inverse continuation, the equation in the variable $q=-i\tau$ is
\[
 a\cos q-(a+u)\frac{\sin q}{q}=0.
\]
The same Rouch\'e estimate as in Lemma \ref{lem:hcomplexnoescape} gives one simple root near $(n+\tfrac12)\pi$ for $|n|$ sufficiently large, and no other roots with sufficiently large modulus; more precisely,
\[
q_n(u)=\left(n+\frac12\right)\pi-\frac{a+u}{a(n+\frac12)\pi}+O(n^{-2}).
\]
Thus inverse lifts cannot escape to infinity or reach a pole at bounded action. The origin is regular in the quotient coordinate, since $\mathfrak F'(0)=a/3\ne0$.
\end{proof}

\begin{theorem}
\label{thm:hhorizontalresurgence}
Let $S_r^{\rm hor}(t):=\sum_{n\ge0}A_n(r)t^n$ be the series in \eqref{eq:hhorizontalexp}. Then
\[
\bigcup_{\mathfrak p}\operatorname{Sing}_{\mathfrak p}^*
\widehat S_r^{\rm hor}=\Omega_a,
\]
where $\mathfrak p$ ranges over finite continuation paths from the initial germ. For each $\Delta\in\Omega_a$, a suitable finite continuation path gives a nonzero logarithmic singularity at $\Delta$.

Since $\Omega_a\cap\R=\varnothing$, the positive real axis contains no nonzero Borel singularities. Hence for $t>0$,
\begin{equation}\label{eq:hhorizontalordinarysum}
e^{a/t}t^{3/2}p_t(r,0)
=\frac1t\int_0^\infty e^{-u/t}\widehat S_r^{\rm hor}(u)\dd u.
\end{equation}
In particular, the horizontal minimal action series has exact Gevrey order one and zero radius of convergence.
\end{theorem}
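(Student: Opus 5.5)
The plan is to run the Borel-germ analysis of Section~\ref{sec:hglobal} in the quotient coordinate $s=\tau^2$, where the initial point is regular rather than Morse. Since $\mathfrak F'(0)=a/3\ne0$, the inverse $s(u)=\mathfrak F^{-1}(u)$ is holomorphic near $u=0$ with $s(u)=3u/a+O(u^2)$. Rewriting \eqref{eq:hhorizontalLaplace} with $\tau_+=\sqrt{s}$ gives
\[
\rho_r^{\rm hor}(u)=\frac1{4\pi^2}\,\frac{s'(u)}{\sinh\sqrt{s(u)}}.
\]
Here $\sqrt s/\sinh\sqrt s\cdot \frac{1}{2\sqrt s}$ has the form $u^{-1/2}g(u)$ with $g$ holomorphic, because $\sinh\sqrt s/\sqrt s$ is entire in $s$ and nonvanishing near $0$. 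Thus $\rho_r^{\rm hor}=u^{-1/2}g(u)$, which is exactly the half-order Borel structure used for $S(t)$. The first step is to check, by termwise Laplace integration, that the coefficients of $g$ reproduce $A_n(r)$ up to the $\Gamma(n+\tfrac12)$ factors. The ordinary Borel transform $\widehat S_r^{\rm hor}$ is then the Abel transform \eqref{eq:hAbelconvention} of $\rho_r^{\rm hor}$, with the normalizing constants of \eqref{eq:hhorizontalexp} absorbed.

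For the singular support, I would apply Proposition~\ref{prop:hcontinuationAbel} with $g=\mathfrak F$ on the quotient surface and $\varpi(s)\,ds=\frac{ds}{4\pi^2\sinh\sqrt s}$, after multiplying by $\sqrt u$ as in Theorem~\ref{thm:hfullspectrum}. The quotient surface is the $s$-plane minus the images $-k^2\pi^2$ of the poles. Proposition~\ref{prop:hhorizontalspectrum} supplies the hypotheses: the critical values are locally finite and there is no escape. This gives continuation of $\widehat S_r^{\rm hor}$ along every finite path avoiding $\Omega_a$, so the singular set is contained in $\Omega_a$. I also need that $u=0$ creates no further singularity after continuation. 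This holds because the factor $u^{-1/2}$ is removed by the Abel transform, and after returning near $0$ along a loop the germ can only pick up other sheets, which are regular over $0$ since $0\notin\Omega_a$. The one exception is the pole sheets with $\mathfrak F\to\infty$, which are harmless at bounded $u$.

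For the reverse inclusion, fix $\Delta\in\Omega_a$ with critical point $s_c$. Path-connectedness of the complement of $\mathfrak F^{-1}(\Omega_a)$ lets the single initial inverse branch be continued to a neighborhood of $s_c$, as in Remark~\ref{rmk:accessibility}. Since $s_c$ is Morse (Proposition~\ref{prop:hhorizontalspectrum}) and its value is distinct from all others, the continued density has the local term $c(\Delta-u)^{-1/2}$ with $c\neq0$, provided $\varpi(s_c)\ne0$. This holds because $1/\sinh\sqrt{s_c}$ is finite and nonzero: $\sinh\tau_c\ne0$ as $\Delta=a\sinh^2\tau_c\ne0$. Lemma~\ref{lem:hAbelsingular}(1) then turns this into a nonzero logarithmic singularity. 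The main subtlety I expect is the single-branch setup. Here the "cycle" is one sheet rather than a difference, so I must make sure that the continued germ really contains exactly one of the two colliding sheets with coefficient $1$, so that no cancellation occurs. I would argue this by noting that the germ is a single inverse branch, and a single branch is never the sum of both colliding sheets.

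For the summation formula, note that $\Omega_a\cap\R=\varnothing$ and $\Omega_a$ is locally finite. Hence $\widehat S_r^{\rm hor}$ continues holomorphically along $[0,\infty)$ with the principal real branch $\tau_+$. Exchanging Laplace and Abel integrals (Fubini, using $\int_0^\infty e^{-v/t}v^{-1/2}dv=\sqrt{\pi t}$) turns $\frac1t\int e^{-u/t}\widehat S\,du$ into $t^{-1/2}\int e^{-u/t}\rho_r^{\rm hor}\,du$, which equals the right side of \eqref{eq:hhorizontalLaplace} with the stated prefactors. Absolute convergence follows because $\rho_r^{\rm hor}(u)$ decays like $e^{-u/a}$ times a polynomial, since $\tau_+(u)\sim u/a$. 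Finally, Gevrey order exactly one and zero radius follow from Proposition~\ref{prop:hcontinuationAbel} once $\Omega_a\neq\varnothing$.
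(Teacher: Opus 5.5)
Your proposal is correct and is essentially the paper's proof: Proposition~\ref{prop:hcontinuationAbel} with hypotheses from Proposition~\ref{prop:hhorizontalspectrum}, accessibility plus a nonvanishing amplitude at each Morse, value-distinct critical point, Lemma~\ref{lem:hAbelsingular}, and then the Abel--Laplace exchange on the singularity-free positive ray. The paper applies Proposition~\ref{prop:hcontinuationAbel} in the $\tau$-plane, where $\frac{\tau}{\sinh\tau}\dd\tau$ is holomorphic; your $s$-plane form $\frac{\dd s}{4\pi^2\sinh\sqrt s}$ is odd in $\sqrt s$ and hence branched at $s=0$, which is harmless since that point lies over $u=0$. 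Drop one side claim, however: continuation should be along paths avoiding $\Omega_a\cup\{0\}$, and $u=0$ \emph{does} become singular on non-principal sheets (after a loop around some $\Delta\in\Omega_a$ that trades the initial branch for one regular over $0$, the $u^{-1/2}$ endpoint of the Abel contour pinches against $\xi\to0$ and gives a $\log\xi$ term), though this does not affect the result because $\operatorname{Sing}^*$ records only nonzero singularities.
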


\begin{proof}
%The density $\rho_r^{\rm hor}$ in \eqref{eq:hhorizontalLaplace} is the single-branch pushforward of
%\[\frac{1}{2\pi^2}\frac{\tau}{\sinh\tau}\dd\tau\]
%under $F(-i\tau)-a$. Proposition \ref{prop:hhorizontalspectrum} therefore verifies the local-finiteness and no-escape hypotheses of Proposition~\ref{prop:hcontinuationAbel}. 
Apply Proposition~\ref{prop:hcontinuationAbel} to
\[
g(\tau)=F(-i\tau)-a,\qquad
\varpi(\tau)\dd\tau=
\frac{1}{2\pi^2}\frac{\tau}{\sinh\tau}\dd\tau.
\]
Proposition~\ref{prop:hhorizontalspectrum} gives local finiteness and no escape; the quotient $s=\tau^2$ removes the even symmetry at the initial point. Every critical point is accessible, and the amplitude is nonzero there; hence each $\Delta\in\Omega_a$ produces a nonzero square-root singularity of $\rho_r^{\rm hor}$. Lemma \ref{lem:hAbelsingular} converts it into a logarithmic singularity of $\widehat S_r^{\rm hor}$, proving the full singular-support statement.

Along the positive axis the inverse branch $\tau_+(u)$ is real and increasing, and
\[
\rho_r^{\rm hor}(u)=O_a\bigl((1+u)e^{-u/a}\bigr).
\]
Thus the Abel--Laplace identity applied to \eqref{eq:hhorizontalLaplace} gives \eqref{eq:hhorizontalordinarysum}. Since the positive Borel ray is singularity-free, the two lateral sums coincide. 
\end{proof}

\subsection{Isotropic complex endpoints and limiting actions}
\label{sec:hgeometry}

For the vertical endpoint $(0,y)$, $y>0$, Theorem \ref{thm:hactionclass} gives
\[
\Sigma_H^{\rm fin}(0,y)=\{j\pi y:j\in\Z\setminus\{0\}\}.
\]
The Fourier kernel depends on the horizontal variable only through $\sigma=x_1^2+x_2^2$. Thus a nonzero isotropic vector $x_*$ with
\[
 x_*\cdot x_*=0
\]
has the same Fourier kernel as $x=0$, although the complex endpoint equations distinguish the two cases. 

\begin{theorem}
\label{thm:hisotropic}
Fix $x_*\in\C^2\setminus\{0\}$ with $x_*\cdot x_*=0$, and let $y>0$.

\begin{enumerate}[label=\textnormal{(\roman*)}]
\item No finite complex normal geodesic joins the identity to $(x_*,y)$.

\item For every $t>0$, the Fourier kernel extends holomorphically to a neighborhood of $(x_*,y)$ and
\begin{equation}\label{eq:hisotropickernel}
p_t^\C(x_*,y)
=\frac{1}{8t^2}\sech^2\!\left(\frac{\pi y}{2t}\right).
\end{equation}

\item The finite limits of normal-geodesic actions along endpoints tending to $(x_*,y)$ are exactly
\[
\{j\pi y:j\in\Z\setminus\{0\}\}.
\]
More precisely, if
\[
(x_n,y_n)\to(x_*,y),
 \qquad
\mathcal A_n\to\mathcal A_\infty\in\C,
\]
then for a unique $j\in\Z\setminus\{0\}$,
\[
\lambda_n\to j\pi, \qquad \mathcal A_\infty=j\pi y, \qquad \|v_n\|\to\infty.
\]
Writing
\[
\varepsilon_n=\lambda_n-j\pi,\qquad
\sigma_n=x_n\cdot x_n,
\]
we have
\begin{equation}\label{eq:hisotropicuniversalscaling}
 \varepsilon_n v_n\longrightarrow j\pi x_*,
 \qquad
 \frac{\sigma_n}{\varepsilon_n^2}
 \longrightarrow
 \frac{2y}{j\pi}.
\end{equation}
\end{enumerate}
\end{theorem}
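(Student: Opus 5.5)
The plan is to treat the three parts separately, using the explicit Hamiltonian solution \eqref{eq:hendpoint} and the case analysis of Theorem~\ref{thm:hactionclass}.

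For (i), we split according to $\lambda$. If $\lambda\notin\pi\Z\setminus\{0\}$, then \eqref{eq:hvfull} gives $v=\frac{\lambda}{\sin\lambda}R_\lambda x_*$. Since $R_\lambda$ is a complex rotation, it preserves the bilinear form, so $v\cdot v=(\lambda/\sin\lambda)^2\,x_*\cdot x_*=0$. The vertical equation in \eqref{eq:hendpoint} then forces $y(1)=0$; the case $\lambda=0$ is the limit, with $y(1)=\lim\frac{v\cdot v}{2\lambda^2}(\lambda-\sin\lambda\cos\lambda)=0$. This contradicts $y>0$. If instead $\lambda=j\pi\neq0$, part (2) of Theorem~\ref{thm:hactionclass} requires $x=0$, but $x_*\neq0$. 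Hence no finite trajectory exists.

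For (ii), the integrand in \eqref{eq:hfourier} depends on $x$ only through $a=\sigma/2$. For complex $a$ in a small disc around $0$, we would verify uniform convergence of $\int_\R e^{-(-iy\tau+a\tau\coth\tau)/t}\frac{\tau}{\sinh\tau}\,d\tau$. The growth $|e^{-a\tau\coth\tau/t}|\le e^{|a||\tau|/t}$ is dominated by the $e^{-|\tau|}$ decay of $V$ once $|a|<t$. So the integral is holomorphic in $(\sigma,y)$ near $(0,y)$, hence holomorphic in $(x,y)$ near $(x_*,y)$. Evaluating at $\sigma=0$ gives $p_t(0,y)$, and \eqref{eq:hsech} yields \eqref{eq:hisotropickernel}.

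For (iii), take endpoints $(x_n,y_n)\to(x_*,y)$ with trajectories $(v_n,\lambda_n)$ and actions $\mathcal A_n=\frac12 v_n\cdot v_n\to\mathcal A_\infty$.
\begin{enumerate}
\item First we show $\lambda_n$ is bounded. For large $|\lambda|$ off the lattice, the relation $y_n=\frac{\sigma_n}{2}\mathscr H(\lambda_n)$ together with $\mathcal A_n=F(\lambda_n)=\frac{\sigma_n}{2}(\lambda_n/\sin\lambda_n)^2$ gives $\mathcal A_n/y_n=\lambda_n^2/(\lambda_n-\sin\lambda_n\cos\lambda_n)$. This ratio is unbounded as $|\lambda_n|\to\infty$, since $|\sin\lambda\cos\lambda|\lesssim e^{2|\im\lambda|}$ and the ratio behaves like $\lambda$ unless $\im\lambda$ is large. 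The regime of large $\im\lambda$ must be checked separately, and I expect this to be the main obstacle, as in Lemma~\ref{lem:hremotecritical}.
\item Passing to a subsequence, $\lambda_n\to\lambda_\infty$. If $\lambda_\infty\notin\pi\Z\setminus\{0\}$, continuity of $v=\frac{\lambda}{\sin\lambda}R_\lambda x$ gives $v_n\to v_\infty$ with $v_\infty\cdot v_\infty=0$. Then $y=\lim y_n=0$, a contradiction. The degenerate point $\lambda_\infty=0$ is handled the same way.
\item So $\lambda_\infty=j\pi$ with $j\neq0$. Since $\sin\lambda_n\sim(-1)^j\varepsilon_n$, we get $v_n\sim\frac{j\pi}{(-1)^j\varepsilon_n}R_{j\pi}x_n=\frac{j\pi}{\varepsilon_n}x_n$ up to sign conventions, using $R_{j\pi}=(-1)^j$. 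This gives $\varepsilon_n v_n\to j\pi x_*$, which forces $\|v_n\|\to\infty$ because $x_*\neq0$.
\item From $\mathcal A_n=\frac{\sigma_n\lambda_n^2}{2\sin^2\lambda_n}\approx\frac{\sigma_n j^2\pi^2}{2\varepsilon_n^2}$ and $y_n=\frac{\sigma_n}{2}\mathscr H(\lambda_n)\approx\frac{\sigma_n}{2}\cdot\frac{j\pi}{\varepsilon_n^2}$ (from the expansion $\mathscr H(j\pi+s)=j\pi/s^2+O(1)$ used in Lemma~\ref{lem:hfirstpairisolation}), we obtain $\sigma_n/\varepsilon_n^2\to 2y/(j\pi)$ and $\mathcal A_\infty=j\pi y$. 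Uniqueness of $j$ is then immediate.
\item Conversely, every value $j\pi y$ is attained as such a limit. We choose real-parameter sequences $\sigma_n\to0$ with $\sigma_n/\varepsilon_n^2=2y/(j\pi)$ and $x_n\to x_*$ having $x_n\cdot x_n=\sigma_n$, which is possible since the quadric map is submersive at $x_*\neq0$. We then solve $\mathscr H(\lambda_n)=2y_n/\sigma_n$ near $j\pi$ by the implicit function theorem in $s=\varepsilon_n$.
\end{enumerate}

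The main obstacle is the bound on $\lambda_n$ in the first step, specifically excluding escape of $\lambda_n$ with large imaginary part. There I would use the quadratic relation $(1+C^2)q^2-2Cdq+d^2-d=0$ from Proposition~\ref{prop:hactiondistinct}. Here $d\sigma/2=\mathcal A$ stays bounded while $C\sigma/2=y$ stays bounded. Multiplying the relation through by $\sigma^2$ shows $y^2\lambda^2$ is controlled by bounded quantities, $\sigma$ times bounded terms, and $\lambda\sigma$, which bounds $\lambda$.
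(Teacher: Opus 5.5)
Your proposal is correct and follows essentially the paper's proof. Parts (i) and (ii) are argued the same way; you additionally justify the holomorphy by domination.

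For (iii), your eventual fix for bounding $\lambda_n$ is exactly the paper's argument: the quadratic relation $(\sigma_n^2+4y_n^2)\lambda_n^2-8y_n\mathcal A_n\lambda_n+4\mathcal A_n^2-2\sigma_n\mathcal A_n=0$ has leading coefficient tending to $4y^2\neq0$, so its roots stay bounded. Two small remarks on that step:
\begin{itemize}
\item You describe one of its terms as a multiple of $\lambda\sigma$; the linear term is really $8y_n\mathcal A_n\lambda_n$. This does not affect the conclusion.
\item The same relation also gives $y_n\lambda_n-\mathcal A_n\to0$, hence $\lambda_n\to\mathcal A_\infty/y$ for the whole sequence. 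This makes your preliminary ratio argument and the subsequence extraction unnecessary.
\end{itemize}

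For the converse, the paper works on the covector side: it takes $\lambda_\varepsilon=j\pi+\varepsilon$ and $v_\varepsilon=j\pi x_*/\varepsilon+\varepsilon w$ with $x_*\cdot w=y$. Your endpoint-side construction (choosing $x_n$ with $x_n\cdot x_n=\sigma_n$) also works. It is simplest if you prescribe $\lambda_n=j\pi+\varepsilon_n$ and define $y_n:=\tfrac{\sigma_n}{2}\mathscr H(\lambda_n)\to y$. That avoids invoking an implicit function theorem at the singular point $\varepsilon=0$.
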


\begin{proof}
\textnormal{(i)}. If $\lambda\notin\pi\Z\setminus\{0\}$, the endpoint formula \eqref{eq:hvfull} and $x_*\cdot x_*=0$ imply $v\cdot v=0$, hence $y=0$. On the nonzero lattice $\lambda=j\pi$, the horizontal endpoint is zero. 

\textnormal{(ii)}. For fixed $t>0$, the Fourier integral is holomorphic in $(\sigma,y)$ near $(0,y)$ and depends on $x$ only through $\sigma=x\cdot x$. Setting $\sigma=0$ therefore gives the vertical formula \eqref{eq:hisotropickernel}.

\textnormal{(iii)}. For a finite trajectory with endpoint $(x_n,y_n)$ and action $\mathcal A_n$, the endpoint equations imply
\[
(\sigma_n^2+4y_n^2)\lambda_n^2-8\mathcal A_ny_n\lambda_n +4A_n^2-2\sigma_n\mathcal A_n=0.
\]
Hence, along a sequence with finite action limit and $(x_n,y_n)\to(x_*,y)$, the parameters $\lambda_n$ remain bounded and satisfy $y_n\lambda_n-\mathcal A_n\to0$. A limit away from $\pi\mathbb Z\setminus\{0\}$ would force
\[
y_n=\frac{\sigma_n}{2}\mathscr H(\lambda_n)\longrightarrow0,
\]
contradicting $y_n\to y>0$. Hence there is $j\in\mathbb Z\setminus\{0\}$ such that
\[
\lambda_n\to j\pi,\qquad \mathcal A_n\to j\pi y.
\]
Writing $\lambda_n=j\pi+\varepsilon_n$, the horizontal inversion formula and $\varepsilon_n^2\mathscr H(j\pi+\varepsilon_n)\to j\pi$ give
\[
\varepsilon_n v_n\to j\pi x_*,\qquad
\frac{\sigma_n}{\varepsilon_n^2}\to\frac{2y}{j\pi},
\]
and hence $\|v_n\|\to\infty$.

Conversely, fix $j\ne0$ and choose $w\in\mathbb C^2$ with
$x_*\cdot w=y$. Set
\[
\lambda_\varepsilon=j\pi+\varepsilon,
\qquad
v_\varepsilon=\frac{j\pi x_*}{\varepsilon}+\varepsilon w.
\]
The endpoint formulas then give
\[
(x_\varepsilon,y_\varepsilon)\to(x_*,y),
\qquad \mathcal A_\varepsilon\to j\pi y.
\]
Thus every lattice value occurs as a finite action limit.
\end{proof}

\section{Dimension-independent resurgence on isotropic Heisenberg groups}
\label{sec:hhigherdim}
On $H^n$, the phase $F$ is unchanged, while the Fourier amplitude becomes $V^n$. Let
\[
\mathcal L_n=\frac12\sum_{j=1}^{2n}X_j^2
\]
be the standard isotropic sub-Laplacian on $H^n$ and write $r^2=\sum_{j=1}^{2n}x_j^2$, $a=r^2/2$.

\begin{theorem}
\label{thm:hhigherdim}
For every fixed $n\ge1$,
\[
p_t^{(n)}(r,y)=\frac{1}{(2\pi t)^{n+1}}
\int_{\R}e^{-F(-i\tau)/t}V(\tau)^n\dd\tau.
\]
For real endpoints $r,y>0$, the finite action spectrum and the continued Borel singular support are independent of $n$:
\begin{align*}
 \operatorname{Sing}_{\Gamma_\pm}\widehat S_n
 &=\bigl(\Sigma_H^{\rm fin}(r,y)-A_1\bigr)\cap(0,\infty),\\
 \bigcup_{\mathfrak p}\operatorname{Sing}_{\mathfrak p}^*\widehat S_n
 &=\bigl(\Sigma_H^{\rm fin}(r,y)-A_1\bigr)\setminus\{0\}.
\end{align*}
The Morse and cubic ordinary Borel singularities remain logarithmic and of exponent $-1/6$, respectively. In the integral-cycle normalization, the directional Stokes matrices $S_\theta$, the positive matrix $S_0$, and their action-labelled logarithms are independent of $n$.

Near the vertical axis, the dominant action $\omega_1=B_1-A_1$ is independent of the dimension; the prefactor depends on $n$:
\[
[t^\ell]S_n(t)=\frac{\chi_n}{\sqrt\pi}\Gamma(\ell)\omega_1^{-\ell}
\bigl(1+O(\ell^{-1})\bigr),
\qquad\chi_n=-\frac{\sqrt2\,W(q_+)^n}{\sqrt{F''(q_+)}}.
\]
\end{theorem}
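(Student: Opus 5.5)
The plan is to reduce everything on $H^n$ to the one-dimensional phase analysis of Sections~\ref{sec:hglobal}--\ref{sec:hglobalstokes}. The only change is the amplitude $W^n$ in place of $W$.

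\emph{Fourier representation.} Fourier transformation in the central variable turns $\mathcal L_n$ into a direct sum of $n$ commuting planar magnetic oscillators with the same frequency. The Mehler kernel is therefore the $n$-fold product of the planar factor $\frac{\tau}{\sinh\tau}\exp(-\frac{|x_{(j)}|^2}{2t}\tau\coth\tau)$. The Gaussian exponents add to $-\frac{a}{t}\tau\coth\tau$ with $a=r^2/2$, while the prefactors multiply to $V^n$ and the normalization becomes $(2\pi t)^{-(n+1)}$. This is the classical Gaveau--Hulanicki computation in each $\R^2$ block. Since $F$ is unchanged, Theorem~\ref{thm:hactionclass}, Proposition~\ref{prop:hactiondistinct}, Lemma~\ref{lem:hcomplexnoescape} and Proposition~\ref{prop:hregularcovering} apply verbatim. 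In particular the action spectrum, the covering, the monodromy, the label transport of Proposition~\ref{prop:hlabelinduction} and the cubic transport of Theorem~\ref{thm:hcubicblock} depend only on $F$. Hence they are independent of $n$.

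\emph{Borel singularities.} The pushed density for the same oriented two-sheet cycle is now $\mathcal Q^{(n)}_\delta=i\sum_j n_j W(q_j)^n/F'(q_j)$. Everything in Proposition~\ref{prop:hcontinuationAbel}, Corollary~\ref{prop:hlinearresurgence}, Lemma~\ref{lem:hAbelsingular} and Theorems~\ref{thm:hpositivespectrum}, \ref{thm:hfullspectrum} uses the amplitude only through two facts. First, it is holomorphic on the pole-free domain. Second, it is nonzero at every finite critical point. At a critical point $W(q_c)^2=F(q_c)/a\neq0$, so $W(q_c)^n\neq0$, and the Morse coefficient $c\propto W(q_c)^n/\sqrt{F''(q_c)}$ stays nonzero. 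At the cubic point $H_0$ is replaced by $H_0^{(n)}=i(\theta_j/\sin\theta_j)^n/\gamma_j\neq0$. The nonvanishing of $c_{\mathbf n}$ in Proposition~\ref{prop:hcubicconfluence}(iii) depends only on the cubic inverse branches, so it persists.

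\emph{Stokes data and large order.} The Stokes coefficients $s^\theta_{AB}=n_j-n_i$ come from zero-cycle transport, so in the integral-cycle normalization $S_\theta$, $S_0$ and their logarithms are unchanged. The Laplace convergence in Lemma~\ref{lem:raygrowth} survives because its bound $W^n/F'=O((1+s)^n)$ was already stated for all $n$. For the large-order statement I would repeat Lemma~\ref{lem:huniformamplitude} with $W^n$. The scaled two-sheet cover is the same, and $T^{(n)}(z)=\omega\sqrt{z(1-z)}\rho_n(\omega z)$ is holomorphic on $|z|<R$ with $T^{(n)}(1)=\chi_n\sqrt\omega$. The factorization, Lemma~\ref{lem:hfirstpairisolation} and the Darboux argument of Theorem~\ref{thm:hlargeorder} then give the stated leading asymptotics with $\chi_n$.

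\emph{Main obstacle.} The one step needing care is the uniform amplitude near the vertical axis. The factor $W(\pi+bu)^n\sim(-\pi/(bu))^n$ blows up as $b\to0$. So $T^{(n)}$ is no longer $O(1)$, and the normalized $A^{(n)}$ need not be $1+O(r)$. I would handle this by renormalizing with $T^{(n)}(1)$ and checking that $A^{(n)}=T^{(n)}/T^{(n)}(1)$ stays holomorphic on a disk of radius $R>1$ for each fixed $r$. The function $A^{(n)}$ has a limit $u(z)^{-n}$ divided by its value at $z=1$, which is holomorphic on the cover away from $u=0$, and $u=0$ does not occur over $|z|<R$. This gives the leading term with an $O(\ell^{-1})$ error. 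The paper claims exactly that, rather than the sharper $\beta\to-1/4$.
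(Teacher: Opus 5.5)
Your proposal is correct and follows the paper's proof. Only the amplitude changes from $V$ to $V^n$, i.e.\ $W$ to $W^n$, and it is holomorphic on the pole-free domain and nonzero at every finite critical point. Hence the spectrum, covering, zero-cycle transport, Stokes coefficients and their logarithms are those of $H^1$, and the first-pair calculation with $W^n$ gives $\chi_n$.

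Your extra care in the near-vertical step is justified, since the paper only says to replace $W$ by $W^n$. One small correction: the normalized limit of $A^{(n)}$ is not $u(z)^{-n}$ divided by its value at $z=1$, which is not single-valued on $|z|<R$. It is the two-sheet symmetric combination $\tfrac12\bigl(u^{n-1}+u^{1-n}\bigr)$, with $u$ either root of $u^2+(2-4z)u+1=0$. This is a polynomial of degree $n-1$ in $z$ equal to $1$ at $z=1$, so you still get the holomorphic factorization and the $O(\ell^{-1})$ error, now with $\beta\to(n-1)^2-\tfrac14$.
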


\begin{proof}
Central Fourier reduction changes only the amplitude from $V$ to $V^n$; the phase $F$ is unchanged. Hence the critical actions, inverse covering, no-escape estimates, and zero-cycle permutations are exactly those of $H^1$. Since $V^{n-1}$ is holomorphic and nonzero at every finite critical point, it only rescales the local saddle amplitudes and does not change their singular support or Stokes incidence. The singular-support and Stokes statements therefore follow from Sections~\ref{sec:hglobal} and \ref{sec:hglobalstokes}; replacing $W$ by $W^n$ in the first-pair calculation gives the large-order prefactor.
\end{proof}

\begin{corollary}
\label{prop:hhighergeometry}
For every $n\ge1$:
\begin{enumerate}[label=\textnormal{(\roman*)}]
\item At horizontal endpoints $r>0$, $y=0$, the positive Borel ray is singularity-free and positive Borel summation recovers $p_t^{(n)}(r,0)$ exactly.
\item If $x_*\in\C^{2n}\setminus\{0\}$ is isotropic and $y>0$, then no finite complex normal geodesic reaches $(x_*,y)$, while
\[
p_t^{(n),\C}(x_*,y)=p_t^{(n)}(0,y).
\]
The finite limiting action spectrum is
\[
\{j\pi y:j\in\Z\setminus\{0\}\},
\]
with the same covector scaling as in
\eqref{eq:hisotropicuniversalscaling}.
\end{enumerate}
\end{corollary}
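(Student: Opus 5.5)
The plan is to reduce both parts of the corollary to the $H^1$ arguments already given. The only structural input is Theorem~\ref{thm:hhigherdim}: the central Fourier reduction on $H^n$ leaves the phase $F(-i\tau)$ unchanged and replaces the amplitude $V$ by $V^n$. Throughout, $a=r^2/2$ and $\sigma=x\cdot x$ are now computed in $\C^{2n}$.

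For (i), I set $y=0$. The phase is again the even function $a\tau\coth\tau$. I will repeat the horizontal argument of Theorem~\ref{thm:hhorizontalresurgence} with the density
\[
\rho^{{\rm hor},n}_r(u)=\frac{1}{(2\pi)^{n+1}}\cdot 2\,V(\tau_+(u))^n\frac{\dd\tau_+}{\dd u},
\]
with the normalization fixed by the prefactor $(2\pi t)^{-n-1}$ and the even symmetry of the integrand. The ingredients carry over as follows.
\begin{itemize}
\item Proposition~\ref{prop:hhorizontalspectrum} concerns only the phase, so the singular set $\Omega_a\subset\C\setminus\R$ and the no-escape property are unchanged.
\item Proposition~\ref{prop:hcontinuationAbel}, applied with $\varpi=V^n$, gives continuation of the pushed density off $\Omega_a$.
\item Along the positive axis $\tau_+(u)$ is real and increasing, and $V(\tau)^n$ decays like $|\tau|^ne^{-n\tau}$. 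This gives $\rho^{{\rm hor},n}_r(u)=O((1+u)^ne^{-nu/a})$ up to the polynomial factor from $\dd\tau_+/\dd u$, because $u\sim a\tau$ for large $\tau$.
\end{itemize}
The Abel--Laplace identity then yields positive Borel summation exactly as in \eqref{eq:hhorizontalordinarysum}, with $t^{3/2}$ replaced by $t^{n+1/2}$. Since $\Omega_a$ avoids $\R$, the two lateral sums coincide. The one point to check is that the Abel transform convention \eqref{eq:hAbelconvention} still links the half-order Borel transform to the ordinary one. It does, because the prefactor bookkeeping only changes the overall power of $t$ and not the $\xi^{n-1/2}$ structure coming from the regular quotient coordinate $s=\tau^2$.

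For (ii), I follow Theorem~\ref{thm:hisotropic} in dimension $2n$. The Hamilton equations on $H^n$ decouple into $n$ planar blocks with the same $\lambda=p_y$. Hence \eqref{eq:hendpoint} holds with $R_{-\lambda s}$ acting blockwise, $y(1)=\frac{v\cdot v}{2\lambda^2}(\lambda-\sin\lambda\cos\lambda)$ and $\mathcal A=\frac12 v\cdot v$. Theorem~\ref{thm:hactionclass} carries over verbatim, with $v\cdot v$ the complex bilinear square in $\C^{2n}$. Since blockwise rotation preserves the bilinear form, $v\cdot v=\frac{\lambda^2}{\sin^2\lambda}\sigma$ off the lattice.

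Nonexistence then follows as before. If $\sigma=0$ off the lattice, then $v\cdot v=0$ and so $y=0$. On the lattice $\lambda=j\pi$, the horizontal endpoint $x(1)$ vanishes, contradicting $x_*\ne0$.

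The kernel identity follows because the Fourier integral depends on $x$ only through $\sigma$ and is holomorphic in $\sigma$ near $0$. Setting $\sigma=0$ gives $p_t^{(n)}(0,y)$.

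For the limiting spectrum, I reuse the quadratic relation among $\lambda_n,\sigma_n,y_n,\mathcal A_n$ verbatim. It is derived only from $\sigma$, $y$ and $\mathcal A$, so it is dimension free. This gives boundedness of $\lambda_n$ and convergence to $j\pi$, and then the scaling \eqref{eq:hisotropicuniversalscaling} from the blockwise horizontal inversion. For the converse construction I need $w\in\C^{2n}$ with $x_*\cdot w=y$, which exists since $x_*\ne0$.

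The main obstacle I anticipate is the blockwise reduction. I must verify that the rotation $R_\lambda$ on $\C^{2n}$ is orthogonal for the complex bilinear form and not merely unitary. Only then do $v\cdot v$ and $\sigma$ correspond exactly and the quadric computations transfer unchanged.
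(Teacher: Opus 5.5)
Your proposal is correct and follows the paper's proof. For (i), the phase, and hence $\Omega_a\subset\C\setminus\R$ and the no-escape property, is independent of $n$, while $V^n$ only changes the amplitude and improves decay on the positive ray. For (ii), the Hamilton equations split into $n$ identical planar blocks, so everything depends only on $v\cdot v$ and $x\cdot x$, and the proof of Theorem~\ref{thm:hisotropic} transfers verbatim. The obstacle you flag is immediate: $R_\lambda^{T}R_\lambda=I$ follows from $\cos^2\lambda+\sin^2\lambda=1$ for complex $\lambda$, so $\sigma=\frac{\sin^2\lambda}{\lambda^2}\,v\cdot v$ holds in $\C^{2n}$.
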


\begin{proof}
For $y=0$, the quotient phase is independent of $n$, while the additional factor $V^{n-1}$ is holomorphic and nonzero at every finite critical point and improves the decay on the positive ray. Hence Theorem \ref{thm:hhorizontalresurgence} applies unchanged.

For isotropic endpoints, the Hamilton equations split into $n$ identical rotating pairs and depend on the horizontal variables only through $v\cdot v$ and $x\cdot x$. The endpoint formulas are the same with $x,v\in\C^{2n}$, so the proof of Theorem~\ref{thm:hisotropic} applies.
\end{proof}

\section{Conclusions and further questions}

For real endpoints, we identified finite complex normal actions with phase critical values and their differences with the finite singularities of the continued Borel germ. At nonresonant endpoints, the saddle sectors carry row-finite directional Stokes operators, with an explicit matrix and logarithm on the positive ray. The nearest singularity gives two-term large-order asymptotics near the vertical axis, and uniform confluence formulas describe the cubic resonances.

The original Fourier contour selects a particular combination of resurgent sectors. Positive Borel summation is exact at horizontal endpoints. Nonzero isotropic complex endpoints admit no finite normal trajectories, but have the signed vertical action lattice as their complete limiting spectrum. These action and Stokes structures extend unchanged to higher-dimensional isotropic Heisenberg groups.

\vskip 0.1cm
Two questions remain.
\begin{enumerate}[label=\textnormal{(\arabic*)}]
\item Can the completed Stokes representation be extended uniformly through cubic resonances and to angular sectors of aperture at least $\pi$?

\item Can one determine the nearest Borel singularities, and hence precise large-order asymptotics, for general endpoints beyond the near-vertical regime?
\end{enumerate}

\end{document}